\documentclass[preprint,12pt,english]{article}
\usepackage{amssymb,amsmath,amsthm,tikz,xcolor,float}
\def\tpl{\mathtt{tpl}}
\def\sz{\mathtt{sz}}
\def\xpm{\mathtt{dh}}
\def\hole{\bullet}
\def\tr{\mathtt{r}}
\def\tty{\mathtt{st}}

\def\tS{\mathtt{S}}

\def\tA{\mathtt{A}}

\def\cE{\mathcal{E}}

\def\cA{\mathcal{A}}
\def\cC{\mathcal{C}}
\def\cF{\mathcal{F}}
\def\cL{\mathcal{L}}
\def\cT{\mathcal{T}}
\def\cQ{\mathcal{Q}}
\def\sk{\mathtt{sk}}
\def\depth{\mathtt{d}}

\def\Ra{\Rightarrow}
\newcommand\pair[1]{\langle{#1}\rangle}
\newcommand\comment[1]{}

\def\ty{\mathtt{t}}
\def\todo{\begin{center}{\color{red}\tt [TODO]}\end{center}}

\newtheorem{definition}{Definition}
\newtheorem{example}{Example}
\newtheorem{lemma}{Lemma}
\newtheorem{corollary}{Corollary}
\newtheorem{theorem}{Theorem}

\begin{document}

\title{Learning cover context-free grammars from structural data}

\author{ Mircea Marin, 
Gabriel Istrate \footnote {Dept. of Computer Science, West University of Timi\c{s}oara and e-Austria Research Institute, Bd. V. P\^{a}rvan 4, cam. 045
B, Timi\c{s}oara, RO-300223, Romania. Corresponding author's email: {\tt mmarin@info.uvt.ro }}}

\maketitle
\begin{abstract}
We consider the problem of learning an unknown context-free grammar when the only knowledge available and of interest to the learner is about its structural descriptions with  depth at most $\ell.$ The goal is to learn a {\em cover context-free grammar} (CCFG) with respect to $\ell$, that is, a CFG whose structural descriptions with depth at most $\ell$ agree with those of the unknown CFG. 
We propose an algorithm, called $LA^\ell$,  that efficiently learns a CCFG using two types of queries: structural equivalence  and structural membership. 
We show that  $LA^\ell$ runs in time polynomial in the number of states of a minimal deterministic finite cover tree automaton (DCTA) with respect to $\ell$. This number is often much smaller than the number of states of a minimum deterministic finite tree automaton for the structural descriptions of the unknown grammar. 
\end{abstract}

{\bf Keywords:} automata theory and formal languages, grammatical inference, structural descriptions.

\section{Introduction}

Angluin's approach to grammatical inference \cite{Angluin:87} is an important contribution to computational learning, with extensions to problems, such as compositional verification and synthesis \cite{farzan2008extending,maler1995learnability}, that go beyond the usual applications to natural language processing and computational biology \cite{de2010grammatical}.

Practical concerns, e.g. \cite{kumar2006minimization}, seem to require going beyond regular languages to classes of languages with regular tree nature.  However, Angluin and Kharitonov have shown that learning CFGs from membership and equivalence queries is intractable under plausible cryptographic assumptions \cite{angluin1995won}. A way out is to learn structural descriptions of context free languages. Sakabibara has shown that Angluin's algorithm extends to this setting \cite{Sak:1990}. His approach has applications in learning the structural descriptions of natural languages, which describe the shape of the parse trees of well chosen CFGs. Often, these structural descriptions are subject to additional restrictions arising from modelling considerations. For instance, in natural language understanding, the bounded memory restriction on human comprehension seems to limit the recursion depth of such a parse tree to a constant. A natural example with a similar flavour is the limitation imposed by the \LaTeX\ system, that limits the number of nestings of itemised environments to a small constant.

Imposing such a restriction leads to the idea of learning cover languages, that is, languages that accurate up to an equivalence. For regular languages modulo a finite prefix such an approach has been pursued by Ipate \cite{Ipate:2012} (see also \cite{holzer2012equivalence}).

\comment{
We address the problem of learning an unknown context-tree grammar (CFG) when the only information available and of interest to the learner are its structural descriptions with depth at most $\ell$. The structural descriptions of a CFG are the trees obtained from the derivation trees of the grammar by unlabelling all its internal nodes. They have been recognised by Levy and Joshi as useful data for the efficient learning of the unknown grammar \cite{LJ:1978}. 

Using the fact that, for a CFG, the set of derivation trees  and the set of structural descriptions  are rational sets of trees, Sakabibara proposed in \cite{Sak:1990} an algorithm, called $LA$, which reduces the problem of learning a CFG from structural descriptions to that of learning a tree automaton. His algorithm is an extension of Angluin's efficient learning algorithm $L^*$ for finite automata  \cite{Angluin:87} to  one for tree automata, using two types of queries: structural equivalence queries and structural membership queries. $LA$  runs in time polynomial in the number of states of a minimal deterministic bottom-up tree automaton for the structural descriptions of the unknown grammar and the maximum size of any counterexample returned by a structural equivalence~query.

Algorithms $L^*$ and $LA$ are suitable for learning efficiently an {\em exact} characterisation of the (regular or context-free) language under consideration. But, as noticed by Ipate in \cite{Ipate:2012}, there are many practical applications where characterisations of finite subsets of the language suffice. For regular languages where the interest is only in words with length at most $\ell$, Ipate defines the notion of minimal {\em deterministic finite cover automaton} (DFCA) with respect to $\ell$, and proposes an algorithm to learn such an DFCA in time polynomial in the number or its states~\cite{Ipate:2012}. His algorithm, called $L^\ell$,  is a nontrivial adaptation of algorithm~$L^*$.

Ipate's ideas can be carried over to context-free languages if we restrict the learning to a CFG whose structural descriptions are of interest only when they are of depth at most $\ell$. In this case, we are interested to learn a {\em cover CFG}, that is, a context-free grammar whose structural descriptions with depth at most $\ell$ coincide with the structural descriptions of the unknown CFG; structural descriptions of higher depth can also be generated, but they are of no interest to the learning process. Like Sakakibara, we reduce the problem of learning such a CFG to that of learning a special tree automaton which, in our case, turns out to be a minimal deterministic finite cover tree automaton (DCTA) for the structural descriptions of interest. Formally, a minimal DCTA of an unknown context-free grammar $G_U$ with respect to $\ell$ is a minimal deterministic bottom-up tree automaton which accepts all structural descriptions of $G_U$ with depth at most $\ell$, and possibly other with depth larger than $\ell$. This kind of tree automaton is an adaptation of Ipate's minimal DFCA to languages of structural descriptions.

a\\
a\\
a\\
a\\
a
}

In this paper, we extend this approach to context-free languages with structural descriptions. 
We propose an algorithm called $LA^\ell$ which 
\comment{is an adaptation of the learning algorithms $LA$ and $L^\ell$ in the following ways:
It} 
asks two types of queries: structural equivalence and structural membership queries, both restricted to structural descriptions with depth at most $\ell$, where $\ell$ is a constant.
$LA^\ell$ stores the answers retrieved from the teacher in an {\em observation table} which is used to guide the learning protocol and to construct a minimal DCTA of the unknown context-free grammar with respect to $\ell$.
Our main result shows that $LA^\ell$ runs in time  polynomial in $n$ and $m$, where $n$ is the number of states of a minimal DCTA of the unknown CFG with respect to $\ell$, and $m$ is the maximum size of a counterexample returned by a failed structural membership query. 

The paper is structured as follows. Section \ref{prelim} introduces the basic notions and results to be used later in the paper. It also describes algorithm $LA$. 
In Sect. \ref{sect4} we introduce the main concepts related to the specification and analysis of our learning algorithm $LA^\ell$. They are natural generalisations to languages of structural descriptions of the concepts proposed by Ipate \cite{Ipate:2012} in the design and study of his algorithm $L^\ell$. 
In Sect. \ref{sect5} we analyse the space and time complexity of $LA^\ell$ and show that its time complexity is a polynomial in $n$ and $m$, where $n$ is the number of states of a minimal deterministic finite cover automaton w.r.t. $\ell$ of the language of structural descriptions of interest, and $m$ is an upper bound to the size of counterexamples returned by failed structural equivalence queries. 
\section{Preliminaries}
\label{prelim}
We write $\mathbb{N}$ for the set of nonnegative integers, $A^*$ for the set of finite strings over a set $A$, and $\epsilon$ for the empty string. If $v,w\in A^*$, we write $v\leq w'$ if there exists $w'\in A^*$ such that $v w'=w$; $v<v'$ if $v\leq v'$ and $v\neq v'$; and $v\perp w$ if neither $v\leq w$ nor $w\leq v.$  
\subsection*{Trees, terms, contexts, and context-free grammars}
A {\em ranked alphabet} is a finite set $\cF$ of function symbols  together with a finite {\em rank} relation $rk(\cF)\subseteq \cF\times \mathbb{N}.$ We denote the subset $\{f\in\cF\mid (f,m)\in rk(\cF)\}$ by $\cF_m$,  the set $\{m\mid (f,m)\in rk(\cF)\}$ by $ar(f)$, and  $\bigcup_{f\in \cF}ar(f)$ by $ar(\cF)$.
The {\em terms} of the set $\cT(\cF)$ are the strings of symbols defined recursively by the grammar
$t\mathop{\mbox{\tt{::=}}}a\mid f(t_1,\ldots,t_m)$
where $a\in\cF_0$ and $f\in\cF_m$ with $m>0$. The {\em yield} of a term $t\in\cT(\cF)$ is the finite string $yield(t)\in\cF_0^*$ defined as follows: $yield(a):=a$ if $a\in\cF_0$, and $yield(f(t_1,\ldots,t_m)):=w_1\ldots w_m$ where $w_i=yield(t_i)$ for $1\leq i\leq m.$

A {\em finite ordered tree} over a set of labels $\cF$ is a mapping $t$ from a nonempty and prefix closed set $Pos(t)\subseteq(\mathbb{N}\setminus\{0\})^*$ into $\cF$. Each element in $Pos(t)$ is called a {\em position}. The tree $t$ is {\em ranked} if $\cF$ is a ranked alphabet, and $t$ satisfies the following additional property:
For all $p\in Pos(t)$, there exists $m\in\mathbb{N}$ such that  $\{i\in\mathbb{N}\mid pi\in Pos(t)\}=\{1,\ldots,m\}$ and  $t(p)\in \cF_m$.

Thus, any term $t\in \cT(\cF)$ may be viewed as a finite ordered ranked tree, and we will  refer to it by ``tree'' when we mean the finite ordered tree with the additional property mentioned above. The {\em depth} of $t$ is $\depth(t):=\max \{\|p\|\mid p\in Pos(t)\}$ where $\|p\|$ denotes the length of $p$ as sequence of numbers. The {\em size} $\sz(t)$ of  $t$ is the number of elements of the set $\{p\in Pos(t)\mid \| p\| \neq\depth(t)\}$, that is, the number of internal nodes of $t$.

The {\em subterm} $t|_p$ of a term $t$ at position $p\in Pos(t)$ is defined by the following: $Pos(t|_p):=\{i\mid pi\in Pos(t)\}$, and $t|_p(q):=t(pp')$ for all $p'\in Pos(t|_p).$ 
We denote by $t[u]_p$ the term obtained by replacing in $t$ the subterm $t|_p$ with $u$, that is: $Pos(t[u]_p)=(Pos(t)-\{pp'\mid p'\in Pos(t|_p)\})\cup\{pp''\mid p''\in Pos(u)\}$, and $$t[u]_p(p'):=\left\{\begin{array}{ll}
u(p'')&\mbox{if }p'=pp''\text{ with }p''\in Pos(u),\\
t(p')&\mbox{otherwise.}
\end{array}\right.$$
The set $\cC(\cF)$ of {\em contexts} over $\cF$ is the set of terms  over $\cF\cup\{\bullet\}$, where:
\begin{itemize}
\item $\bullet$ is a distinguished fresh symbol with $ar(\hole)=\{0\}$, called {\em hole},
\item $rk(\cF\cup\{\bullet\})=rk(\cF)\cup\{(\bullet,0)\}$, and 
\item every element $C\in\cC(\cF)$ contains only one occurrence of $\bullet$. This is the same as saying that $\{p\in Pos(C)\mid C=C[\bullet]_p\}$ is a singleton set. 
\end{itemize}
If $C\in\cC(\cF)$ and $u\in\cC(\cF)\cup\cT(\cF)$ then $C[u]$ stands for the context or term $C[u]_p$, where $C=C[\bullet]_p.$ The {\em hole depth} of a context $C\in\cC(\cF)$ is $\depth_\hole(C):=\|p\|$ where $p$ is the unique position of $C$ such that $C=C[\hole]_p.$ 
From now on, whenever $M$ is a set of terms, $P$ is a set of contexts, and $m$ is a non-negative integer, we define the sets
$M_{[m]}:=\{t\in M\mid \depth(t)\leq m\}$ and $P_{\pair{m}}:=\{C\in P\mid \depth_\hole(C)\leq m\}$.

\comment{
A {\em context-free grammar} (CFG) is a quadruple $G=(N,\Sigma,P,S)$ where: $N$ is a finite set of {\em nonterminals}; $\Sigma$ is a finite set of {\em terminals}; $P$ is a finite set of {\em productions}  $X\to\alpha$ with $X\in N$ and $\alpha\in (N\cup \Sigma)^*$; and $S\in N$ is a special nonterminal called the {\em start symbol}.  Such a CFG induces a {\em derivation relation} $\Ra_G$ on the set of finite strings $(N\cup \Sigma)^*$, defined by $\alpha_1X\alpha_2\Ra_G\alpha_1\alpha\alpha_2$ whenever  $\alpha_1,\alpha_2\in (N\cup \Sigma)^*$ and $(X\to\alpha)\in P.$ The {\em language generated} by $G$ is $\cL(G):=\{\alpha\in \Sigma^*\mid S\Ra_G^* \alpha\}$, where $\Ra^*_G$ is the reflexive and transitive closure of $\Ra_G$. A {\em context-free language} is a language generated by a CFG.}
We assume that the reader is acquainted with the notions of CFG and the context-free language $\cL(G)$ generated by a CFG $G$, see, e.g., \cite{Sipser:2006}.
A CFG is $\epsilon$-free if it has no productions of the form $X\to\epsilon$. It is well known~\cite{Hopcroft:2003} that every $\epsilon$-free context-free language $L$ (that is, $\epsilon\not\in L$) is generated by an $\epsilon$-free CFG. 
The derivation trees of an $\epsilon$-free CFG $G=(N,\Sigma,P,S)$  correspond to terms from $\cT(N\cup \Sigma)$ with $ar(a)=\{0\}$ for al $a\in\Sigma$ and $ar(X)=\{m\mid \exists (X\to \alpha)\in P$ with $\|\alpha\|=m\}$ for all $X\in N$. The sets $D_G(U)$ of derivation trees  issued from $U\in N\cup \Sigma$, and $D(G)$ of derivation trees of $G$, are defined recursively as follows:
\begin{align*}
 D_G(a)&:=\{a\}\text{ if }a\in \Sigma,\\
D_G(X)&:=\bigcup_{(X\to U_1\ldots U_m)\in P}\{X(t_1,\ldots,t_m)\mid t_1\in D_G(U_1)\wedge\ldots\wedge t_m\in D_G(U_m)\},
\end{align*}
$D(G):=D_G(S).$ Note that $\cL(G)=\{yield(t)\mid t\in D(G)\}.$
\subsection*{Structural descriptions and cover context-free grammars}
A {\em skeletal alphabet} is a ranked alphabet  $Sk=\{\sigma\}$, where $\sigma$ is a special symbol with $ar(\sigma)$ a finite subset of $\mathbb{N}\setminus\{0\}$, and a {\em skeletal set} is a ranked alphabet $Sk\cup A$ where $Sk\cap A=\emptyset$ and $ar(a)=0$ for all $a\in A$. 
Skeletal alphabets are intended to describe the structures of the derivation trees of $\epsilon$-free CFGs. For an $\epsilon$-free CFG $G=(N,\Sigma,P,S)$ we consider the skeletal alphabet $Sk$ with $ar(\sigma):=\{\|\alpha\| \mid (X\to\alpha)\in P\}$, and the skeletal set $Sk\cup\Sigma.$
The {\em skeletal} (or {\em structural}) {\em description} of a derivation tree $t\in D_G(U)$ is the term $\sk(t)\in\cT(Sk\cup\Sigma)$ where 
$$\sk(t):=\left\{\begin{array}{ll}
a&\mbox{if }t=a\in \Sigma,\\
\sigma(\sk(t_1),\ldots,\sk(t_m))&\text{if }t=X(t_1,\ldots,t_m)\text{ with }m>0.
\end{array}\right.$$ 
For example, if $G$ is the grammar $(\{\tS,\tA\},\{a,b\},\{\tS\to \tA,\tA\to a\tA b,\tA\to ab\},\tS)$ then $t=\tS(\tA(a,\tA(a,b),b))\in D_G(\tS)$ and $\sk(t)=\sigma(\sigma(a,\sigma(a,b),b))\in\cT(\{\sigma,a,b\})$, where $ar(\sigma)=\{1,2,3\}$ and $ar(a)=ar(b)=\{0\}$. Graphically,  we have
\begin{center}
$t=$\begin{tikzpicture}[scale=.4,baseline=-3.3em]
\node {$S$}
  child {node {$\tA$}
      child {node {$a$}}
      child {node {$\tA$}
        child {node {$a$}}
        child {node {$b$}}
      }
      child {node {$b$}}
  };
\end{tikzpicture}
\quad\qquad$\Rightarrow\quad\sk(t)=$
\begin{tikzpicture}[scale=.4,,baseline=-3.3em]
\node {$\sigma$}
  child {node {$\sigma$}
      child {node {$a$}}
      child {node {$\sigma$}
        child {node {$a$}}
        child {node {$b$}}
      }
      child {node {$b$}}
  };
\end{tikzpicture}
\end{center}
If $M$ is a set of ranked trees, the set of its structural descriptions is  $K(M):=\{\sk(t)\mid t\in M\}$. Two context-free grammars $G_1$ and $G_2$ over the same alphabet of terminals are {\em structurally equivalent} if $K(D(G))=K(D(G')).$

\begin{definition}[cover CFG]
Let $\ell$ be a positive integer and $G_U$ be an $\epsilon$-free CFG of a language $U\subseteq\Sigma^*$. 
A {\em cover context-free grammar} of $G_U$ with respect to $\ell$ is an $\epsilon$-free CFG $G'=(N,\Sigma,P,S)$ such that $K(D(G'))_{[\ell]}= K(D(G_U))_{[\ell]}$. 
\end{definition}

\subsection*{Tree automata}

The definition of tree automaton presented here is equivalent with that given in~\cite{Sak:1990}. It is non-standard in the sense that it cannot accept any tree of depth 0. 
\begin{definition}
A {\em  nondeterministic (bottom-up) finite tree automaton} (NFTA) over $\cF$ is a quadruple $\cA=(\cQ,\cF,\cQ_{\mathtt{f}},\Delta)$ where $\cQ$ is a finite set of {\em states}, $\cQ_{\mathtt{f}}\subseteq \cQ$ is the set of {\em final states}, and $\Delta$ is a set of {\em transition rules} of the form $f(q_1,\ldots,q_m)\to q$ where $m\geq 1$, $f\in\cF_m$, $q_1,\ldots,q_m\in\cF_0\cup\cQ,$ and $q\in\cQ$. 
\end{definition}
Such an automaton $\cA$ induces a {\em move} relation $\to_\cA$ on the set of terms $\cT(\cF\cup\cQ)$ where $ar(q)=\{0\}$ for all $q\in\cQ$, as follows: 
\begin{itemize}
\item[] $t\to_{\cA} t'$ if there exist $C\in \cC(\cF\cup\cQ)$ and $f(q_1,\ldots,q_m)\to q\in \Delta$ such that $t=C[f(q_1,\ldots,q_m)]$ and $t'=C[q].$
\end{itemize}
The {\em language accepted by $\cA$} is $\cL(\cA):=\{t\in\cT(\cF)\mid t\to^*_\cA q\text{ for some }q\in\cQ_{\mathtt{f}}\}$ where $\to^*_\cA$ is the reflexive-transitive closure of $\to_\cA$. In this paper, a {\em regular tree language} is a language accepted by such an NFTA.
Two NFTAs are {\em equivalent} if they accept the same language.

$\cA=(\cQ,\cF,\cQ_{\mathtt{f}},\Delta)$ is {\em deterministic} (DFTA) if the transition rules of $\Delta$ describe a mapping $\delta$ which assigns to every $m\in ar(\cF)$ a function $\delta_m$ such that
$\delta_0:\cF_0\to\cF_0$, $\delta_0(a)=a$ for all $a\in\cF_0$, and 
 $\delta_m:\cF_m\to (\cF_0\cup \cQ)^m\to \cQ$ if $m>0.$
This implies that $f(q_1,\ldots,q_m)\to q\in\Delta$ if and only if $\delta_m(f)(q_1,\ldots,q_m)=q.$ The  extension $\delta^*$ of $\{\delta_m\mid m\in ar(\cF)\}$ to $\cT(\cF)$ is defined as expected: $\delta^*(a)=a$ if $a\in\cF_0$, and $\delta^*(f(t_1,\ldots,t_m)):=\delta_m(f)(\delta^*(t_1),\ldots,\delta^*(t_m))$ otherwise. Note that, if $\cA$ is a DFTA then $\cL(\cA)=\{t\in\cT(\cF)\mid \delta^*(t)\in\cQ_{\mathtt{f}}\}.$

Two DFTAs $\cA_1=(\cQ,\cF,\cQ_{\mathtt{f}},\delta)$ and $\cA_2=(\cQ',\cF,\cQ'_{\mathtt{f}},\delta')$ are {\em isomorphic} if there exists a bijection $\varphi:\cQ\to \cQ'$ such that $\varphi(\cQ_{\mathtt{f}})=\cQ'_{\mathtt{f}}$ and for every $f\in\cF_m$, $q_1,\ldots,q_m\in\cF_0\cup\cQ$, $\varphi(\delta_m(f)(q_1,\ldots,q_m))=\delta_m'(f)(\varphi(q_1),\ldots,\varphi(q_m)).$ A {\em minimum DFTA} of a regular tree language $L\subseteq\cT(\cF)\setminus\cF_0$ is a DFTA $\cA$ with minimum number of states such that $\cL(A)=L.$

There is a strong correspondence between tree automata and $\epsilon$-free CFGs.
The NFTA corresponding to an $\epsilon$-free CFG $G=(N,\Sigma,P,S)$ is $NA(G)=(N,Sk\cup\Sigma,\{S\},\Delta)$ with 
$\Delta:=\{\sigma(U_1,\ldots,U_m)\to X\mid (X\to U_1\ldots U_m)\in P\}.$
Conversely, the $\epsilon$-free CFG corresponding to an NFTA $\cA=(\cQ,Sk\cup\Sigma,\cQ_{\mathtt{f}},\Delta)$ over the skeletal set $Sk\cup\Sigma$ is $G(\cA)=(\cQ\cup\{S\},\Sigma,P,S)$ where $S$ is a fresh symbol and 
$P:=
\{q\to q_1\ldots q_m\mid (\sigma(q_1,\ldots,q_m)\to q)\in\Delta\}\cup
\{S\to q_1\ldots q_m\mid (\sigma(q_1,\ldots,q_m)\to q)\in\Delta\text{ with }q\in \cQ_{\mathtt{f}}\}.
$
These  constructs are dual to each other, in the following sense:
\begin{enumerate}
\item[$(A_1)$] If $G$ is an $\epsilon$-free CFG then $\cL(NA(G))=K(D(G))$. \hfill \cite[Prop. 3.4]{Sak:1990}
\item[$(A_2)$] If $\cA=(\cQ,Sk\cup\Sigma,\cQ_{\mathtt{f}},\Delta)$ is an NFTA for the skeletal set $Sk\cup\Sigma$ then $K(D(G(\cA)))=\cL(\cA)$. That is, the set of structural descriptions of $G(\cA)$ coincides with the set of trees accepted by $\cA$. \hfill\cite[Prop. 3.6]{Sak:1990}
\end{enumerate}
We recall the following well-known results: 
every NFTA is equivalent to an DFTA~\cite{LJ:1978}, and every two minimal DFTAs are isomorphic~\cite{Brainerd:68}.

\subsection*{Cover tree automata}

\begin{definition}[determinstic DCTA]
Let $\ell\in\mathbb{N}^+$ and $A$ be a tree language over ranked alphabet $\cF$.
A {\em deterministic cover tree automaton} (DCTA) of  $A$ with respect to $\ell$ is a DFTA $\cA$ over a skeletal set $Sk\cup\cF_0$ such that $\cL(\cA)_{[\ell]}=K(A)_{[\ell]}$. 
\end{definition}
The correspondence between tree automata and $\epsilon$-free CFGs is carried over to a  correspondence between cover tree automata and cover CFGs. More precisely, it can be shown that if $G_U$ is an $\epsilon$-free CFG, then a DFTA $\cA$ is a DCTA of $K(D(G_U))$ w.r.t. $\ell$ if and only if $G(\cA)$ is a cover CFG of $G_U$ w.r.t. $\ell$.

\section{Learning context-free grammars}

In \cite{Sak:1990}, Sakakibara's  assumes  a {\em learner} eager to learn a CFG which is structurally equivalent with the CFG $G_U$ of an unknown context-free language $U\subseteq \Sigma^*$ by asking  questions to a {\em teacher}. We assume that the learner and the teacher share the skeletal set $Sk\cup\Sigma$ for the structural descriptions in $K(D(G_U))$. 
The learner can pose the following types of queries:
\begin{enumerate}
\item {\em Structural membership queries}: the learner asks if some $s\in\cT(Sk\cup\Sigma)$ is in $K(D(G_U))$. The answer is {\em yes} if so, and {\em no} otherwise.
\item {\em Structural equivalence queries}: The learner proposes a CFG $G'$ and asks whether $G'$ is structurally equivalent to $G_U$. 
If the answer is {\em yes}, the  process stops with the learned answer $G$. Otherwise, the teacher provides a counterexample $s$ from the symmetric set difference  $K(D(G')) \mathop{\triangle} K(D(G_U))$.
\end{enumerate}
This learning protocol is based on what is called {\em minimal adequate teacher} in~\cite{Angluin:87}.
Ultimately, the learner constructs a minimal DFTA $\cA$ of $K(D(G_U))$ from which it can infer immediately the CFG  $G'=G(\cA)$ which is structurally equivalent to $G_U$, that is, $K(D(G'))=K(D(G_U))$.  In order to understand how $\cA$ gets constructed, we shall introduce a few auxiliary notions. 

For any subset $S$ of $\cT(Sk\cup\Sigma)$, we define the sets 
\begin{align*}
\sigma_\hole\pair{S}&:=\bigcup_{m\in ar(\sigma)}\bigcup_{i=1}^m\{\sigma(s_1,\ldots,s_{m})[\hole]_i\mid s_1,\ldots,s_{m}\in S\cup\Sigma\},\\
X(S)&:=\{C_1[s]\mid C_1\in \sigma_\hole\pair{S},s\in S\cup\Sigma\}\setminus S.
\end{align*}
Note that $\sigma_\hole\pair{S}=\{C\in\cC(Sk\cup\Sigma)\setminus\{\hole\}\mid C|_p\in S\cup\Sigma\cup\{\hole\}$ for all $p\in Pos(C)\cap\mathbb{N}\}.$ 
\begin{definition}
A subset $E$ of $\cC(Sk\cup\Sigma)$ is {\bf $\bullet$-prefix closed} with respect to a set $S\subseteq \cT(Sk\cup\Sigma)$ if $C\in E\setminus\{\bullet\}$ implies the existence of $C'\in E$ and $C_1\in \sigma_\hole\pair{S}$ such that $C=C'[C_1].$ If $E\subseteq \cC(Sk\cup\Sigma)$ and $S\subseteq \cT(Sk\cup\Sigma)$ then $E[S]$ denotes the set of structural descriptions defined by
$E[S]=\{C[s]\mid C\in E,s\in S\}.$

We say that $S\subseteq\cT(Sk\cup\Sigma)$ is {\bf subterm closed} if $\depth(s)\geq 1$ for all $s\in S$, and $s'\in S$ whenever $s'$ is a subterm of some $s\in S$ with $\depth(s')\geq 1$.
\end{definition}

An {\em observation table} for $K(D(G_U))$, denoted by $(S,E,T)$, is a tabular representation of the finitary function $T:E[S\cup X(S)]\to\{0,1\}$
defined by $T(t):=1$ if $t\in K(D(G_U))$, and 0 otherwise,
where $S$ is a finite nonempty subterm closed subset $S$ of $\cT(Sk\cup\Sigma)$, and $E$ is a finite nonempty subset of $\cC(Sk\cup\Sigma)$ which is $\bullet$-prefix closed with respect to $S$.
Such an observation table is visualised as a matrix with rows labeled by elements from $S\cup X(S)$, columns labeled by elements from $E$, and the entry for row of $s$ and column of $C$ equal to $T(C[s])$. If we fix a listing $\pair{C_1,\ldots,C_r}$ of all elements of $E$, then the row of values of some $s\in S\cup X(S)$ corresponds to the vector $row(s)=\pair{T(C_1[s]),\ldots,T(C_r[s])}$. In fact, for every such $s$, $row(s)$ is a finitary representation of the function $f_s:E\to \{0,1\}$ defined by $f_s(C)=T(C[s])$. 

\comment{\begin{figure}[ht]
$$\begin{array}{|c|ccc|} \hline
T&\ldots&C\in E&\ldots \\ \hline
\vdots&&\vdots& \\
s\in S& \ldots&T(C[s])&\ldots \\ 
\vdots& &\vdots&\\ \hline\hline
\vdots& &\vdots&  \\ 
x\in X(S)&\ldots&T(C[x]) &\ldots\\
\vdots& &\vdots&  \\ \hline
\end{array}$$
\caption{Tabular representation of an observation table $(S,E,T)$.}
\label{opa}
\end{figure}}

The observation table $(S,E,T)$ is {\em closed} if every  $row(x)$ with $x\in X(S)$ is identical to some $row(s)$ of $s\in S$. It is 
{\em consistent} if whenever $s_1,s_2\in S$ such that $row(s_1)=row(s_2)$, we have $row(C_1[s_1])=row(C_1[s_2])$
for all $C_1\in\sigma_\hole\pair{S}.$

The DFTA {\em corresponding to a closed and consistent observation table} $(S,E,T)$ is $\cA(S,E,T)=(\cQ,Sk\cup\Sigma,\cQ_{\mathtt{f}},\delta)$ where
$\cQ:=\{row(s)\mid s\in S\}$, $\cQ_{\mathtt{f}}:=\{row(s)\mid s\in S\text{ and }T(s)=1\}$, and $\delta$ is uniquely defined by
$$\delta_m(\sigma)(q_1,\ldots,q_m):=row(\sigma(r_1,\ldots,r_m))\quad\text{for all }m\in ar(\sigma),$$
where $r_i:=a$ if $q_i=a\in\Sigma$, and $r_i:=s_i$ if $q_i=row(s_i)\in\cQ$.
\vskip .4em\noindent
It is easy to check that, under these assumptions, $\cA(S,E,T)$ is well-defined, and that $\delta^*(s)=row(s)$. Furthermore, Sakakibara proved that the following properties hold whenever $(S,E,T)$ is a closed and consistent observation table:
\begin{enumerate}
\item $\cA(S,E,T)$ is consistent with $T$, that is, for all $s\in S\cup X(S)$ and $C\in E$ we have $\delta^*(C[s])\in \cQ_{\mathtt{f}}$ iff $T(C[s])=1.$\hfill \cite[Lemma 4.2]{Sak:1990}
\item If $\cA(S,E,T)=(\cQ,Sk\cup\Sigma,\delta,\cQ_{\mathtt{f}})$ has $n$ states, and $\cA'=(\cQ',Sk\cup \Sigma,\delta',\cQ_{\mathtt{f}}')$ is any DFTA consistent with $T$ that has $n$ or fewer states, then $\cA'$ is isomorphic to $\cA(S,E,T)$.\hfill \mbox{\cite[Lemma 4.3]{Sak:1990}}
\end{enumerate}
\subsection*{The $LA$ algorithm}
In this subsection we briefly recall Sakakibara's algorithm LA whose pseudocode is given in Appendix \ref{LA}.
$LA$ extends the observation table whenever one of the following situations occurs: the table is not consistent, the table is not closed, or the table is both consistent and closed but the CFG corresponding to the resulting automaton $\cA(S,E,T)$  is not structurally equivalent to $G_U$ (in which case a counterexample is produced). The first two situations trigger an extension of the observation table with one distinct row.
From properties $(A_1)$ and $(A_2)$, if $n$ is the number of states of the minimum bottom-up tree automaton for the structural descriptions of  $G_U$, then the number of unsuccessful consistency and closedness checks during the whole run of this algorithm is at most $n-1$. 
For each counterexample of size at most $m$ returned by a structural equivalence query, at most $m$ subtrees are added to $S$.  Since the algorithm encounters at most $n$ counterexamples, the total number of elements in $S$ cannot exceed $n+m\cdot n$, thus $LA$ must terminate. It also follows that the number of elements of the domain $E[S\cup X(S)]$ of the function $T$ is at most
$(n+m\cdot n+\,(l+m\cdot n+k)^d)\cdot n=O(m^d\cdot n^{d+1})$,
where $l$ is the number of distinct ranks of $\sigma\in Sk$, and $d$ is the maximum rank of a symbol in $Sk$.  A careful analysis of $LA$ reveals that its time complexity is indeed bounded by a polynomial in $m$ and $n$ \cite[Thm. 5,3]{Sak:1990}.

\section{Learning cover context-free grammars}
\label{sect4}
We assume we are given a teacher who knows an $\epsilon$-free CFG $G_U$ for a language $U\subseteq\Sigma^*$, and a learner who  knows the skeletal set $Sk\cup\Sigma$ for $K(D(G_U))$.  
The teacher and learner both know a positive integer $\ell$, and the learner is interested to learn a cover CFG $G'$ of $G_U$ w.r.t. $\ell$ or, equivalently, a cover DCTA of $K(D(G_U))$ w.r.t. $\ell$.
The learner is allowed to pose the following types of questions:
\begin{enumerate}
\item {\em Structural membership queries}: the learner asks if some $s\in\cT(Sk\cup\Sigma)_{[\ell]}$ is in $K(D(G_U))$. The answer is {\em yes} if so, and {\em no} otherwise.
\item {\em Structural equivalence queries}: The learner proposes a CFG $G'$, and asks if $G'$ is a cover CFG of $G_U$ w.r.t. $\ell$.  If the answer is {yes}, the  process stops with the learned answer $G'$. Otherwise, the teacher provides a counterexample from the set  $(K(D(G_U))_{[\ell]}-K(D(G')))\cup (K(D(G'))_{[\ell]}- K(D(G_U))).$
\end{enumerate}
We will describe an algorithm $LA^\ell$ that learns a cover CFG of $G_U$ with respect to $\ell$ in time that is polynomial in the number of states of a minimal DCTA of the rational tree language $K(D(G_U)).$
\subsection{The observation table}
$LA^\ell$ is a generalisation of the learning algorithm $L^\ell$ proposed by Ipate \cite{Ipate:2012}. Ipate's algorithm is designed to learn a minimal finite cover automaton of an unknown finite language of words in polynomial time, using membership queries and language equivalence queries that refer to words and languages of words with  length at most $\ell$. Similarly,  $LA^\ell$ is designed to learn a minimal DCTA $\cA'$ for $K(D(G_U))$ with respect to $\ell$ by maintaining an observation table $(S,E,T,\ell)$ for $K(D(G_U))$ which differs from the observation table of $LA$ in the following respects:
\begin{enumerate}
\item $S$ is a finite nonempty subterm closed subset of $\cT(Sk\cup\Sigma)_{[\ell]}$. 
\item  $E$ is a finite nonempty  subset of $\cC(Sk\cup\Sigma)_{\pair{\ell-1}}\cap \cC(Sk\cup\Sigma)_{[\ell]}$ which is $\bullet$-prefix closed with respect to $S$.
\item $T:E[S\cup X(S)_{[\ell]}]\to\{1,0,-1\}$ is defined by 
$$T(t):=\left\{\begin{array}{rcl}
1&\quad&\text{if }t\in K(D(G_U))_{[\ell]},\\
0&&\text{if }t\in \cT(Sk\cup\Sigma)_{[\ell]}\setminus K(D(G_U)),\\
-1&&\text{if }t\not\in \cT(Sk\cup\Sigma)_{[\ell ]}.
\end{array}\right.$$
\end{enumerate}
In a tabular representation, the observation table $(S,E,T,\ell)$ is a two-dimensional matrix with rows labeled by elements from $S\cup X(S)_{[\ell]}$, columns labeled by elements from $E$, and the entry corresponding to the row of $t$ and column of $C$ equal to $T(C[t]).$
If we fix a listing $\pair{C_1,\ldots,C_k}$ of all elements from $E$, then the row of $t$ in the observation table is described by the vector $\pair{T(C_1[t]),\ldots,T(C_k[t])}$ of values from $\{-1, 0, 1\}.$ The rows of an observation table are used to identify the states a a minimal DCTA for $K(D(G_U))$ with respect to $\ell$. But, like Ipate \cite{Ipate:2012}, we do not compare rows by equality but by a similarity relation. 

\subsection{The similarity relation}
This time, the rows in the observation table correspond to terms from $S\cup X(S)_{[\ell]}$, and the comparison of rows should take into account only terms of depth at most $\ell$. For this purpose, we define a relation $\sim_k$ of {\em $k$-similarity}, which is a  generalisation to terms of Ipate's relation of $k$-similarity on strings \cite{Ipate:2012}.
\begin{definition}[$k$-similarity] 
For $1\leq k\leq \ell$ we define the relation $\sim_k$ on the elements of  the set $S\cup X(S)$ of an observation table $(S,E,T,\ell)$ as follows:
\begin{itemize}
\item[] $s\sim_k t$ if, for every $C\in E_{\pair{k-\max\{\depth(s),\depth(t)\}}}$, $T(C[s])=T(C[t]).$ 
\end{itemize}
When the relation $\sim_k$ does not hold between two terms $s,t\in S\cup X(S)$, we write $s\nsim_k t$ and say that $s$ and $t$ are {\em $k$-dissimilar}. When $k=\ell$  we simply say that $s$ and $t$ are {\em similar} or {\em dissimilar} and write $s\sim t$ or $s\nsim t$, respectively.

We say that a context $C$ {\bf $\ell$-distinguishes} $s_1$ and $s_2$, where $s_1,s_2\in S$, if $C\in E_{\pair{\ell-\max\{\depth(s_1),\depth(s_2)\}}}$ and $T(C[s_1])\neq T(C[s_2]).$
\end{definition}
Note that only the contexts $C\in E_{\pair{k-\max\{\depth(s),\depth(t)\}}}$ with $\depth(C)\leq \ell$ are relevant to check whether $s\sim_k t$, because if $\depth(C)>\ell$ then $\depth(C[s])>\ell$ and $\depth(C[t])>\ell$, and therefore $T(C[s])=-1=T(C[t]).$ Also, if $t\in S\cup X(S)$ with $\depth(t)>\ell$ then it must be the case that $t\in X(S)$, and then $t\sim_k s$ for all $s\in S\cup X(S)$ and $1\leq k\leq \ell$ because $E_{\pair{k-\max{\depth(t),\depth(s)\}}}}=\emptyset.$

The relation of $k$-similarity is obviously reflexive and symmetric, but  not transitive. The following example illustrates this fact.
\begin{example}
Let $\Sigma=\{a,b\}$, $k=1$, $\ell=2$, $S=\{\sigma(a),\sigma(b),\sigma(\sigma(a),b)\},$ $ E=\{\hole,\sigma(\hole,b)\},$ $ t_1=\sigma(a)$, $t_2=\sigma(\sigma(a),b)$, $t_3=\sigma(b)$, and
$$G_U=(\{\tS,\tA\},\{a,b\},\{\tS\to a,\tS\to b,\tS\to \tA b, \tA\to a,\tA\to \tA b\},\tS).$$ $S$ is a nonempty subterm closed subset of $\cT(Sk\cup\Sigma)_{[\ell]}$, and $E$ is a nonempty subset of $\cC(Sk\cup\Sigma)_{\pair{\ell-1}}$ which is $\bullet$-prefix closed with respect to $S$. We have $K(D(G_U))_{[\ell]}=\{t_1,t_2,t_3\}$, 
$t_1\sim_\ell t_2$ because $E_{\pair{\ell-\max\{\depth(t_1),\depth(t_2)\}}}=\{\hole\}$ and $T(\hole[t_1])=1=T(\hole[t_2]),$ and
$t_2\sim_\ell t_3$ because  $E_{\pair{\ell-\max\{\depth(t_2),\depth(t_3)\}}}=\{\hole\}$ and $T(\hole[t_2])=1=T(\hole[t_3]),$
However, $t_1\nsim_\ell t_3$ because $C=\sigma(\hole,b)\in E_{\pair{1}}=E_{\pair{\ell-\max\{\depth(t_1),\depth(t_3)\}}}$ and $T(C[t_1])=T(\sigma(\sigma(a),b))=T(t_2)=1$, but $T(C[t_3])=T(\sigma(\sigma(b),b))=0.$\qed
\end{example}
Still, $k$-similarity has a useful property, captured in the following lemma.
\begin{lemma}
\label{lema1}
Let $(S,E,T,\ell)$ be an observation table. If $s,t,x\in S\cup X(S)$ such that $\depth(x)\leq\max\{\depth(s),\depth(t)\}$, then $s\sim_k t$ whenever $s\sim_k x$ and $x\sim_k t.$
\end{lemma}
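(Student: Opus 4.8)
The plan is to unwind the definition of $k$-similarity and reduce everything to the elementary monotonicity fact that $E_{\pair{m}}\subseteq E_{\pair{m'}}$ whenever $m\le m'$. The only place where the hypothesis $\depth(x)\le\max\{\depth(s),\depth(t)\}$ enters is in lining up the relevant sets of test contexts. First I would abbreviate $m:=k-\max\{\depth(s),\depth(t)\}$, $m_1:=k-\max\{\depth(s),\depth(x)\}$, and $m_2:=k-\max\{\depth(x),\depth(t)\}$. From $\depth(x)\le\max\{\depth(s),\depth(t)\}$ together with the trivial bounds $\depth(s),\depth(t)\le\max\{\depth(s),\depth(t)\}$ one gets $\max\{\depth(s),\depth(x)\}\le\max\{\depth(s),\depth(t)\}$ and $\max\{\depth(x),\depth(t)\}\le\max\{\depth(s),\depth(t)\}$, hence $m\le m_1$ and $m\le m_2$, and therefore $E_{\pair{m}}\subseteq E_{\pair{m_1}}$ and $E_{\pair{m}}\subseteq E_{\pair{m_2}}$.

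The main step is then a one-line relay. Fix an arbitrary $C\in E_{\pair{m}}$; by the definition of $\sim_k$ applied to the pair $s,t$, it suffices to show $T(C[s])=T(C[t])$. Since $C\in E_{\pair{m}}\subseteq E_{\pair{m_1}}$ and $s\sim_k x$, the definition of $k$-similarity for the pair $s,x$ gives $T(C[s])=T(C[x])$; since $C\in E_{\pair{m}}\subseteq E_{\pair{m_2}}$ and $x\sim_k t$, it likewise gives $T(C[x])=T(C[t])$. Chaining these, $T(C[s])=T(C[t])$, and as $C$ ranged over all of $E_{\pair{m}}$ this is exactly the statement $s\sim_k t$.

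I do not expect any real obstacle here; the content is entirely in recognising why the depth hypothesis is the right one. The preceding example shows $\sim_k$ is not transitive in general, and the reason is precisely that for an intermediate $x$ deeper than both $s$ and $t$ the windows $E_{\pair{m_1}}$ and $E_{\pair{m_2}}$ can be strictly smaller than $E_{\pair{m}}$, so agreement of $T$ on $E_{\pair{m_1}}$ and on $E_{\pair{m_2}}$ tells us nothing about agreement on the larger set $E_{\pair{m}}$; the hypothesis is exactly what forbids this. The degenerate cases need no separate treatment: if $\max\{\depth(s),\depth(t)\}\ge k$ then $E_{\pair{m}}=\emptyset$ and $s\sim_k t$ holds vacuously, consistently with the remark after the definition that terms of depth exceeding $\ell$ are $k$-similar to everything. (Well-definedness of $T(C[s])$, $T(C[x])$, $T(C[t])$ for the contexts actually used is the routine observation, already noted after the definition of $\sim_k$, that $C\in E_{\pair{m}}$ with $k\le\ell$ forces $\depth(C[s]),\depth(C[x]),\depth(C[t])\le\ell$.)
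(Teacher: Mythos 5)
Your proof is correct and follows essentially the same route as the paper's: both arguments observe that $\depth(x)\le\max\{\depth(s),\depth(t)\}$ forces $E_{\pair{k-\max\{\depth(s),\depth(t)\}}}$ to be contained in both $E_{\pair{k-\max\{\depth(s),\depth(x)\}}}$ and $E_{\pair{k-\max\{\depth(x),\depth(t)\}}}$, and then chain $T(C[s])=T(C[x])=T(C[t])$ for every $C$ in the smaller set. Your additional remarks on the degenerate case and on why the depth hypothesis is exactly what blocks the non-transitivity example are accurate but not needed.
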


In addition, we will also assume given a total order $\prec$ on the alphabet $\Sigma$, and the following total orders induced by $\prec$ on $\cT(Sk\cup\Sigma)$ and $\cC(Sk\cup\Sigma).$
\begin{definition}
\label{def1}
The total order $\prec_{\mathtt{T}}$ on $\cT(Sk\cup\Sigma)$ induced by a total order $\prec$ on $\Sigma$ is defined as follows: $s\prec_{\mathtt{T}} t$ if either (a) $\depth(s)<\depth(t)$, or (b) $\depth(s)=\depth(t)$ and
\begin{enumerate}
\item $s,t\in\Sigma$ and $s\prec t$, or else
\item $s\in\Sigma$ and $t\not\in\Sigma$, or else
\item $s=\sigma(s_1,\ldots,s_m)$, $t=\sigma(t_1,\ldots,t_n)$ and there exists $1\leq k\leq \min(m,n)$ such that $s_k\prec_{\mathtt{T}} t_k$ and $s_i=t_i$ for all $1\leq i<k$, or else
\item $s=\sigma(s_1,\ldots,s_m)$ and $t=\sigma(t_1,\ldots,t_n)$, $m<n$, and $s_i=t_i$ for $1\leq i\leq m$.
\end{enumerate}
The total order $\prec_{\mathtt{C}}$ on $\cC(Sk\cup\Sigma)$ induced by a total order $\prec$ on $\Sigma$ is defined as follows:
$C_1\prec_{\mathtt{C}} C_2$ if either (a) $\depth_\hole(C_1)<\depth_\hole(C_2)$, or (b) $\depth_\hole(C_1)=\depth_\hole(C_2)$ and $C_1\prec_{\mathtt{T}}C_2$ where $C_1,C_2$ are interpreted as terms over the signature with $\Sigma$ extended with the constant $\hole$ such that $\hole\prec a$ for all $a\in \Sigma.$
\end{definition}
\begin{definition}[representative]
Let $(S,E,T,\ell)$ be an observation table and $x\in S\cup X(S)$. We say $x$ \emph{has a representative} in $S$ if $\{s\in S\mid s\sim x\}\neq \emptyset.$ If so, the \emph{representative} of  $x$ is $\tr(x):=\min_{\prec_{\mathtt{T}}} \{s\in S\mid x\sim s\}.$
\end{definition}
We will show later that the construction an observation table $(S,E,T,\ell)$ is instrumental to the construction of a cover tree automaton,  and the states of the automaton correspond to representatives of the elements from $S\cup X(S)$.
Note that, if $(S,E,T,\ell)$ is an observation table and $x\in S\cup X(S)$ has $\depth(x)>\ell$ then $x\in X(S)$ and $x\sim s$ for all $s\in S$. Then $s\prec_{\mathtt{T}} x$ because $\depth(s)\leq \ell<\depth(x)$ for all $s\in S$. Thus $x$ has a representative in $S$, and $\tr(x)=\min_{\prec_{\mathtt{T}}} S.$ For this reason, only the rows for elements $x\in S\cup X(S)_{[\ell]}$ are kept in an observation table. 
\subsection{Consistency and closedness}
\comment{
Using this definition, we choose the representative $\tr(t)$ of $t\in S\cup X(S)$ in a closed observation table $(S,E,T,\ell)$ to be the minimum element of the set $\{s\in S\mid s\sim t\}.$ Note that $\{s\in S\mid s\sim t\}\neq \emptyset$ when $(S,E,T)$ is closed,  thus $\tr(t)$ is well defined for all $t\in S\cup X(S)$. Also, if $t\in X(S)$ has $\depth(t)>\ell$ then $s\sim t$ for all $s\in S$, and thus $\tr(t)$ is the minimum element of $S$. For this reason, we we do not need to keep a row for elements  $t\in X(S)\setminus \cT(Sk\cup\Sigma)_{[\ell]}$ in the observation table because $\tr(t)$ is known a priori.
}

The   consistency and closedness of an observation table are defined as follows.
\begin{definition}[Consistency]
An observation table $(S,E,T,\ell)$ is consistent if, for every $k\in\{1,\ldots, \ell\}$, $s_1,s_2\in S$, and $C_1\in\sigma_\hole\pair{S}$, the following implication holds:
If $s_1\sim_k s_2$ then $C_1[s_1]\sim_k C_1[s_2]$. 
\end{definition}
The following lemma captures a useful property of consistent observation tables.
\begin{lemma}
\label{lele}
Let $(S,E,T,\ell)$ be a consistent observation table. Let $m\in ar(\sigma)$, $1\leq k\leq \ell$, and $s_1,\ldots,s_m,t_1,\ldots,t_m\in S\cup\Sigma$ such that, for all $1\leq i\leq m$, either 
$s_i=t_i\in\Sigma$, or 
$s_i,t_i\in S$, $s_i\sim_k t_i,$ and $\depth(s_i)\leq\depth(t_i)$,
and $s=\sigma(s_1,\ldots,s_m)$, $t=\sigma(t_1,\ldots,t_m).$ Then $s\sim_k t.$
\end{lemma}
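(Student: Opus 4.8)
The plan is to reduce the $m$-ary claim to an $m$-fold application of the binary consistency hypothesis, via a chain of one-step rewrites that replace $s_i$ by $t_i$ one index at a time. Concretely, I would define the intermediate terms $u_0 := s = \sigma(s_1,\ldots,s_m)$ and, for $1\leq j\leq m$, $u_j := \sigma(t_1,\ldots,t_j,s_{j+1},\ldots,s_m)$, so that $u_m = t$. The goal becomes $u_0 \sim_k u_m$, and it suffices to show $u_{j-1}\sim_k u_j$ for each $j$ and then chain these similarities together. Chaining is exactly where Lemma~\ref{lema1} is needed, since $\sim_k$ is not transitive: I must check at each step that the depth of the term being eliminated does not exceed the maximum of the depths of its neighbours in the chain, so that the non-transitive ``gluing'' is licensed.

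The single-step claim $u_{j-1}\sim_k u_j$ is where the binary consistency hypothesis enters. If $s_j = t_j \in \Sigma$ then $u_{j-1}=u_j$ and there is nothing to prove. Otherwise $s_j,t_j\in S$ with $s_j\sim_k t_j$; I would take the context $C_1 := \sigma(t_1,\ldots,t_{j-1},\hole,s_{j+1},\ldots,s_m)[\hole]_j \in \sigma_\hole\pair{S}$ — note that each $t_i$ for $i<j$ and each $s_i$ for $i>j$ lies in $S\cup\Sigma$, so $C_1$ is indeed of the required form — and observe $u_{j-1}=C_1[s_j]$, $u_j=C_1[t_j]$. Consistency of the table then yields $C_1[s_j]\sim_k C_1[t_j]$, i.e.\ $u_{j-1}\sim_k u_j$, as desired.

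The remaining work is the bookkeeping for the depths, which is the step I expect to be the main obstacle — not because it is deep, but because the hypothesis $\depth(s_i)\leq\depth(t_i)$ has to be used carefully to make the chaining legitimate. The monotonicity hypothesis ensures $\depth(u_0)\leq\depth(u_1)\leq\cdots\leq\depth(u_m)$, because passing from $u_{j-1}$ to $u_j$ only replaces an argument by one of equal or greater depth (when $s_j=t_j\in\Sigma$ the depths are equal; when $s_j,t_j\in S$ we have $\depth(s_j)\leq\depth(t_j)$). To glue $u_0\sim_k u_1$ and $u_1\sim_k u_2$ into $u_0\sim_k u_2$ via Lemma~\ref{lema1} with $x=u_1$, I need $\depth(u_1)\leq\max\{\depth(u_0),\depth(u_2)\}=\depth(u_2)$, which holds by monotonicity; proceeding inductively, to glue $u_0\sim_k u_{j-1}$ (already established) with $u_{j-1}\sim_k u_j$ into $u_0\sim_k u_j$, I apply Lemma~\ref{lema1} with the ``middle'' term $u_{j-1}$, needing $\depth(u_{j-1})\leq\max\{\depth(u_0),\depth(u_j)\}=\depth(u_j)$, again immediate from monotonicity. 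After $m$ such steps I obtain $u_0\sim_k u_m$, that is $s\sim_k t$, completing the proof. One should double-check the degenerate cases where some steps are trivial equalities (so the chain has fewer genuine links) and where $\depth(s)$ or $\depth(t)$ exceeds $\ell$, but in the latter case $E_{\pair{k-\max\{\depth(s),\depth(t)\}}}=\emptyset$ and $s\sim_k t$ holds vacuously, consistent with the remark following the definition of $k$-similarity.
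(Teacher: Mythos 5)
Your proposal is correct and follows essentially the same route as the paper's own proof: a chain of one-argument replacements $s_j\mapsto t_j$, each justified by applying consistency to a context in $\sigma_\hole\pair{S}$, with the monotonicity $\depth(s_j)\leq\depth(t_j)$ guaranteeing that the non-transitive similarities can be glued by repeated use of Lemma~\ref{lema1}. The only (cosmetic) difference is that the paper restricts the chain to the indices where $s_i\neq t_i$, whereas you include trivial steps for the $\Sigma$-positions.
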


\begin{definition}[Closedness]
An observation table $(S,E,T,\ell)$ is closed if, for all $x\in X(S)$, there exists $s\in S$ with $\depth(s)\leq \depth(x)$ such that $x\sim s.$
\end{definition}
The next five lemmata capture  important properties of closed observation tables:
\begin{lemma}
\label{lema2}
If $(S,E,T,\ell)$ is closed then every $x\in S\cup X(S)$ has a representative, and $\depth(\tr(x))\leq \depth(x)$.
\end{lemma}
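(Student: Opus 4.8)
The plan is to argue by cases according to whether $x\in S$ or $x\in X(S)\setminus S$; these two possibilities are exhaustive and mutually exclusive because $X(S)$ is defined to exclude $S$. In both cases I will exhibit a concrete element of $S$ that is $\sim$-similar to $x$ and has depth at most $\depth(x)$, and then read off the conclusion from the shape of $\prec_{\mathtt{T}}$. The only preliminary fact I need is that $\prec_{\mathtt{T}}$ compares depths first: from Definition \ref{def1}, $a\prec_{\mathtt{T}} b$ forces either $\depth(a)<\depth(b)$ or $\depth(a)=\depth(b)$, so $a\preceq_{\mathtt{T}} b$ implies $\depth(a)\le\depth(b)$. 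Also, since $S$ is finite and $\prec_{\mathtt{T}}$ is a total order, every nonempty subset of $S$ has a $\prec_{\mathtt{T}}$-least element, so once I know the set $\{s\in S\mid x\sim s\}$ is nonempty, $\tr(x)$ is well defined and satisfies $\tr(x)\preceq_{\mathtt{T}} s$, hence $\depth(\tr(x))\le\depth(s)$, for every $s$ in that set.

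For $x\in S$: the relation $\sim$ (that is, $\sim_\ell$) is reflexive, so $x\in\{s\in S\mid x\sim s\}$, which is therefore nonempty; thus $x$ has a representative. Applying the observation above with $s:=x$ gives $\depth(\tr(x))\le\depth(x)$.

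For $x\in X(S)\setminus S$: here I invoke closedness of $(S,E,T,\ell)$, which yields an $s\in S$ with $\depth(s)\le\depth(x)$ and $x\sim s$. Again $\{s\in S\mid x\sim s\}\neq\emptyset$, so $x$ has a representative, and $\depth(\tr(x))\le\depth(s)\le\depth(x)$.

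I do not expect a genuine obstacle: the statement is essentially an unwinding of the definitions. The single point that needs care is to use the precise formulation of closedness (which explicitly guarantees a similar element of $S$ of \emph{no larger depth}) rather than the weaker assertion that $x$ is similar to \emph{some} element of $S$; without the depth clause the bound $\depth(\tr(x))\le\depth(x)$ would not follow, since $\tr(x)$ is the $\prec_{\mathtt{T}}$-least similar element and $\prec_{\mathtt{T}}$ minimises depth.
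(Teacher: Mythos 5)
Your proof is correct and follows essentially the same route as the paper's: split on $x\in S$ versus $x\in X(S)$, use reflexivity of $\sim$ in the first case and the depth clause of closedness in the second, then conclude via the fact that $\prec_{\mathtt{T}}$ is depth-first. No gaps.
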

\begin{lemma}
\label{lema4}
If $(S,E,T,\ell)$ is closed, $r_1,r_2\in\{\tr(x)\mid x\in S\cup X(S)\}$, and $r_1\sim r_2$ then $r_1=r_2.$ 
\end{lemma}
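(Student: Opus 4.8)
The plan is to unwind the definition of representative and reduce the claim to antisymmetry of $\prec_{\mathtt{T}}$, with the only delicate point being that $\sim$ is not transitive, so every "chaining" step must go through Lemma~\ref{lema1} with careful bookkeeping of depths. Write $r_1=\tr(x_1)$ and $r_2=\tr(x_2)$ for suitable $x_1,x_2\in S\cup X(S)$. Since the table is closed, Lemma~\ref{lema2} guarantees that $r_1,r_2$ are well-defined elements of $S$ with $r_1\sim x_1$, $r_2\sim x_2$, $\depth(r_1)\le\depth(x_1)$ and $\depth(r_2)\le\depth(x_2)$; moreover $r_1=\min_{\prec_{\mathtt{T}}}\{s\in S\mid s\sim x_1\}$ and likewise for $r_2$.

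First I would show $r_1\preceq_{\mathtt{T}} r_2$. From the hypothesis $r_1\sim r_2$ and symmetry we have $r_2\sim r_1$; together with $r_1\sim x_1$ and the inequality $\depth(r_1)\le\depth(x_1)\le\max\{\depth(r_2),\depth(x_1)\}$, Lemma~\ref{lema1} (applied with $\ell$ in the role of $k$ and $r_1$ in the role of the "middle" term $x$) yields $r_2\sim x_1$. Hence $r_2\in\{s\in S\mid s\sim x_1\}$, and since $r_1$ is the $\prec_{\mathtt{T}}$-minimum of that set, $r_1\preceq_{\mathtt{T}} r_2$.

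By the symmetric argument — using $r_1\sim r_2$, $r_2\sim x_2$ and $\depth(r_2)\le\depth(x_2)\le\max\{\depth(r_1),\depth(x_2)\}$, then Lemma~\ref{lema1} — we get $r_1\sim x_2$, so $r_1\in\{s\in S\mid s\sim x_2\}$ and therefore $r_2\preceq_{\mathtt{T}} r_1$. Since $\prec_{\mathtt{T}}$ is a total order, $r_1\preceq_{\mathtt{T}} r_2$ and $r_2\preceq_{\mathtt{T}} r_1$ force $r_1=r_2$, which is the claim.

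The main obstacle is precisely the failure of transitivity of $\sim$: one cannot directly conclude $r_2\sim x_1$ from $r_2\sim r_1$ and $r_1\sim x_1$. The fix is to observe that the "middle" term in each application of Lemma~\ref{lema1} is a representative, whose depth is bounded by that of the element it represents (Lemma~\ref{lema2}), so the depth side-condition $\depth(x)\le\max\{\depth(s),\depth(t)\}$ of Lemma~\ref{lema1} is automatically met. Everything else is routine manipulation of the definition of $\tr$ and of $\prec_{\mathtt{T}}$.
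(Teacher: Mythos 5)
Your proof is correct and follows essentially the same route as the paper's: express $r_1,r_2$ as representatives, use Lemma~\ref{lema2} for the depth bounds, apply Lemma~\ref{lema1} with the representative as the middle term to get around the non-transitivity of $\sim$, and conclude by minimality and antisymmetry of $\preceq_{\mathtt{T}}$. No gaps.
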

\begin{lemma}
\label{lema5}
If  $(S,E,T,\ell)$ is closed and $r\in\{\tr(x)\mid x\in S\cup X(S)\}$, then $\tr(r)=r.$
\end{lemma}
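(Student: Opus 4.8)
The plan is to reduce the statement to Lemma~\ref{lema4}. Fix $r=\tr(x)$ for some $x\in S\cup X(S)$; by the definition of the representative, $r\in S$, hence $r$ is itself an element of $S\cup X(S)$. Since the table is closed, Lemma~\ref{lema2} applies to $r$ and produces a representative $\tr(r)\in S$, which by the defining property of the representative satisfies $r\sim\tr(r)$.

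The key observation is that $r$ and $\tr(r)$ both belong to the set $\{\tr(y)\mid y\in S\cup X(S)\}$: the element $r$ because $r=\tr(x)$ by hypothesis, and $\tr(r)$ because it is the representative of $r\in S\cup X(S)$. Then I apply Lemma~\ref{lema4} with $r_1:=r$ and $r_2:=\tr(r)$: since $r_1\sim r_2$, the lemma forces $r_1=r_2$, i.e.\ $\tr(r)=r$, which is exactly the claim.

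For completeness, here is how one would run the argument directly from the definitions, without invoking Lemma~\ref{lema4}. Reflexivity of $\sim$ gives $r\in\{s\in S\mid r\sim s\}$, so $\tr(r)\preceq_{\mathtt{T}} r$ by minimality. Conversely, suppose $\tr(r)\prec_{\mathtt{T}} r$. From $\tr(r)\sim r$ and $r\sim x$, together with $\depth(r)=\depth(\tr(x))\leq\depth(x)\leq\max\{\depth(\tr(r)),\depth(x)\}$ (the first inequality by Lemma~\ref{lema2}), Lemma~\ref{lema1} (applied with the middle term $r$) yields $\tr(r)\sim x$. But then $\tr(r)$ is an element of $\{s\in S\mid x\sim s\}$ strictly $\prec_{\mathtt{T}}$-below its minimum $r=\tr(x)$, a contradiction. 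Hence $r\preceq_{\mathtt{T}}\tr(r)$, and so $\tr(r)=r$.

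Every step is a one-line appeal to a definition or to one of Lemmas~\ref{lema1}, \ref{lema2}, \ref{lema4}, so I do not anticipate any real obstacle; the only point that requires a moment's care is checking the depth hypothesis of Lemma~\ref{lema1} in the direct argument, which is precisely where the inequality $\depth(\tr(x))\leq\depth(x)$ supplied by Lemma~\ref{lema2} is used.
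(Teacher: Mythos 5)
Your main argument is exactly the paper's proof: set $r_1:=\tr(r)$, note $r\sim r_1$ with both in the set of representatives, and conclude $r=r_1$ by Lemma~\ref{lema4}. The supplementary direct argument via $\prec_{\mathtt{T}}$-minimality and Lemma~\ref{lema1} is also correct but unnecessary; the proposal matches the paper.
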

\begin{proof}
Let $r_1=\tr(r).$ Then $r_1\sim r$ and $r_1,r\in \{\tr(x)\mid x\in S\cup X(S)\}.$ By Lemma~\ref{lema4}, $r=r_1.$
\qed
\end{proof}
\begin{lemma}
\label{lema6}
If $(S,E,T,\ell)$ is closed, then for every $x\in S\cup X(S)$ and $C_1\in\sigma_\hole\pair{S}$, there exists $s\in S$ such that $\tr(C_1[\tr(x)])= \tr(s)$.
\end{lemma}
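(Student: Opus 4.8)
The goal is: given a closed observation table $(S,E,T,\ell)$, for every $x \in S \cup X(S)$ and every $C_1 \in \sigma_\hole\pair{S}$, the term $C_1[\tr(x)]$ has a representative $s \in S$, i.e. $\tr(C_1[\tr(x)]) = \tr(s)$ for some $s\in S$. Unpacking the statement: since $\tr$ is idempotent on representatives (Lemma~\ref{lema5}), writing $s' := C_1[\tr(x)]$, the claim amounts to showing $s'$ has a representative in $S$, because then $\tr(s') = \tr(\tr(s')) = \tr(s)$ with $s = \tr(s') \in S$. So the real content is: \emph{$C_1[\tr(x)]$ has a representative in $S$.}

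The plan is as follows. First I would observe that $C_1 \in \sigma_\hole\pair{S}$ means $C_1 = \sigma(u_1,\ldots,u_{m})[\hole]_i$ for some $m \in ar(\sigma)$, index $i$, and $u_1,\ldots,u_m \in S\cup\Sigma$, so $C_1[\tr(x)]$ is a term of the form $\sigma(r_1,\ldots,r_m)$ where $r_i = \tr(x)$ and $r_j = u_j \in S\cup\Sigma$ for $j\neq i$. In particular $C_1[\tr(x)] \in X(S) \cup S$: indeed each argument lies in $S\cup\Sigma$ (using Lemma~\ref{lema2}, $\tr(x)\in S$), so $C_1[\tr(x)] \in \{C[s] \mid C\in\sigma_\hole\pair{S}, s\in S\cup\Sigma\}$, which is $X(S)\cup S$ by definition of $X(S)$. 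Hence the rows indexed by this term are present in the table, and closedness applies to it if it is in $X(S)$; if it is already in $S$ it trivially has itself as a candidate representative. Either way, $C_1[\tr(x)] \in S\cup X(S)$, and by Lemma~\ref{lema2} (which asserts \emph{every} element of $S\cup X(S)$ has a representative in a closed table), $C_1[\tr(x)]$ has a representative. This already gives the claim; set $s := \tr(C_1[\tr(x)]) \in S$ and apply Lemma~\ref{lema5} to get $\tr(s) = s = \tr(C_1[\tr(x)])$.

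The one subtlety I would check carefully is the \emph{depth bookkeeping}, i.e. that $C_1[\tr(x)]$ is genuinely a legitimate row of $S\cup X(S)$ in the sense needed — but note the statement of Lemma~\ref{lema6} quantifies over $x \in S\cup X(S)$ without a depth restriction, and Lemma~\ref{lema2} likewise covers all of $S\cup X(S)$, so even when $\depth(C_1[\tr(x)]) > \ell$ the term lies in $X(S)$ and, as remarked in the text just before the consistency subsection, it is then $\sim s$ for every $s\in S$, so it has a representative (namely $\min_{\prec_{\mathtt T}} S$). Thus the depth-$>\ell$ case is not an obstacle, merely a case to mention. The main (and only mild) obstacle is therefore purely bureaucratic: correctly identifying $C_1[\tr(x)]$ as an element of $X(S)\cup S$ by matching the definitions of $\sigma_\hole\pair{S}$ and $X(S)$, and invoking the right prior lemmata (Lemma~\ref{lema2} for existence of a representative, Lemma~\ref{lema5} for idempotence). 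I expect the proof to be three or four lines once the definitions are unwound.
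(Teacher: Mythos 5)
Your proposal is correct and follows essentially the same route as the paper's own proof: observe $\tr(x)\in S$ by closedness, hence $C_1[\tr(x)]\in S\cup X(S)$ so it has a representative $s:=\tr(C_1[\tr(x)])\in S$, and conclude $\tr(s)=s$ by Lemma~\ref{lema5}. The extra remarks on depth bookkeeping are sound but not needed beyond what the paper already records.
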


\begin{lemma}
\label{lem6}
Let $(S,E,T,\ell)$ be closed, $r\in \{\tr(x)\mid x\in S\cup X(S)\}$, $C_1\in\sigma_\hole\pair{S}$, and $s\in S$.  If $C_1[s]\sim r$ then $\depth(r)\leq \depth(C_1[s]).$
\end{lemma}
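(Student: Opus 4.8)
The goal is to show that if $r=\tr(x)$ for some $x\in S\cup X(S)$, and $C_1\in\sigma_\hole\pair{S}$ and $s\in S$ satisfy $C_1[s]\sim r$, then $\depth(r)\leq\depth(C_1[s])$. The plan is to argue by contradiction: suppose $\depth(r)>\depth(C_1[s])$. I would first observe that $r$, being a representative, lives in $S$, so $\depth(r)\leq\ell$; also $C_1[s]$ is an element of $S\cup X(S)$ (it belongs to $X(S)$ unless it already lies in $S$), so it has a representative by Lemma~\ref{lema2}, and that representative satisfies $\depth(\tr(C_1[s]))\leq\depth(C_1[s])<\depth(r)\leq\ell$. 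Write $r':=\tr(C_1[s])$. Since $C_1[s]\sim r$ and $C_1[s]\sim r'$ and (crucially) $\depth(r')\leq\depth(C_1[s])\leq\max\{\depth(r),\depth(r')\}$ — indeed $\depth(r')<\depth(r)$ — Lemma~\ref{lema1} applies with the roles $s\mapsto r$, $t\mapsto r'$, $x\mapsto C_1[s]$ once we check the depth hypothesis $\depth(C_1[s])\leq\max\{\depth(r),\depth(r')\}$, which holds because $\depth(C_1[s])<\depth(r)$. Hence $r\sim r'$.

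Now both $r$ and $r'$ are representatives of elements of $S\cup X(S)$ (namely of $x$ and of $C_1[s]$ respectively), so they lie in the set $\{\tr(y)\mid y\in S\cup X(S)\}$. By Lemma~\ref{lema4}, $r\sim r'$ forces $r=r'$. But then $\depth(r)=\depth(r')\leq\depth(C_1[s])$, contradicting the assumption $\depth(r)>\depth(C_1[s])$. This contradiction establishes $\depth(r)\leq\depth(C_1[s])$, as desired.

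The one subtlety to be careful about is the degenerate case $\depth(C_1[s])>\ell$: then $C_1[s]\in X(S)$ and, as noted in the text after the definition of $k$-similarity, $C_1[s]\sim y$ for every $y\in S\cup X(S)$, so the hypothesis $C_1[s]\sim r$ is vacuous; but in that case $\depth(C_1[s])>\ell\geq\depth(r)$ directly since $r\in S$, so the conclusion holds trivially. Thus we may assume $\depth(C_1[s])\leq\ell$ in the main argument, which is exactly the regime where $\tr(C_1[s])$ is a genuine element of $S$ and the invocations of Lemmata~\ref{lema1}, \ref{lema2} and \ref{lema4} are legitimate. I expect the main obstacle to be bookkeeping: making sure the depth inequality needed by Lemma~\ref{lema1} is verified in the right direction (it is $\depth(x)\leq\max\{\depth(s),\depth(t)\}$ there, instantiated as $\depth(C_1[s])\leq\max\{\depth(r),\depth(r')\}$), and confirming that $C_1[s]$ really does have a representative so that $r'$ is well defined — both of which follow from closedness via Lemma~\ref{lema2}.
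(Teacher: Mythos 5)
Your proof is correct and follows essentially the same route as the paper's: contradiction via $\tr(C_1[s])$, Lemma~\ref{lema2} for its depth bound, Lemma~\ref{lema1} to get $r\sim\tr(C_1[s])$, and Lemma~\ref{lema4} to force $r=\tr(C_1[s])$. In fact you state the contradiction hypothesis with the correct inequality ($\depth(r)>\depth(C_1[s])$), whereas the paper's appendix writes the reversed one (an apparent typo), and your explicit treatment of the degenerate case $\depth(C_1[s])>\ell$ is a harmless extra precaution.
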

\subsection*{The automaton $\cA(\mathbb{T})$}
Like  $L^\ell$, our algorithm relies on the construction of a consistent and closed observation table of the unknown context-free grammar. The table is used to build an automaton which, in the end, turns out to be a minimal DCTA for the structural descriptions of the unknown grammar.
\begin{definition}
Suppose $\mathbb{T}=(S,E,T,\ell)$ is a closed and consistent observation table. The automaton corresponding to this table, denoted by $\cA(\mathbb{T})$, is the DFTA $(\cQ,Sk\cup\Sigma,\cQ_{\mathtt{f}},\delta)$ where
$\cQ:=\{\tr(s)\mid s\in S\}$, $\cQ_{\mathtt{f}}=\{q\in \cQ\mid T(q)=1\}$, and $\delta$ is uniquely defined by
$\delta_m(\sigma)(q_1,\ldots,q_m):=\tr(\sigma(q_1,\ldots,q_m))$ for all $m\in ar(\sigma)$.
\end{definition}
The transition function $\delta$ is well defined because, for all $m\in ar(\sigma)$ and $q_1,\ldots,q_m$ from $\cQ$, 
$C_1:=\sigma(\hole,q_2,\ldots,q_m)\in \sigma_\hole\pair{S}$, thus $\sigma(q_1,\ldots,q_m)=C_1[q_1]\in S\cup X(S)$ and $\tr(C_1[q_1])=\tr(s)$ for some $s\in S$, by Lemma \ref{lema6}. Hence, $\tr(\sigma(q_1,\ldots,q_m))\in \cQ$. 
Also, the set $\cQ_{\mathtt{f}}$ can be read off directly from the observation table because 
$\hole\in E$ (since $E$ is $\hole$-prefix closed), thus $q=\hole[q]\in E[(S\cup X(S)_{[\ell]}]$ for all $q\in \cQ$,
and we can read off from the observation table all $q\in\cQ$ with $T(q)=1.$
\vskip .5em
In the rest of this subsection we assume  that $\mathbb{T}=(S,E,T,\ell)$ is  closed and consistent, and  $\delta$ is the transition function of the corresponding DFTA $\cA(\mathbb{T}).$
\begin{lemma}
\label{lmm}
$\delta^*(x)\sim x$ and $\depth(\delta^*(x))\leq\depth(x)$ for every $x\in S\cup X(S)$.
\end{lemma}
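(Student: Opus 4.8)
The plan is to prove both assertions simultaneously by induction on $\depth(x)$, driving everything off the recursive definition of $\delta^*$. Every $x\in S\cup X(S)$ has $\depth(x)\geq 1$, hence is compound, $x=\sigma(x_1,\ldots,x_m)$ with $m\in ar(\sigma)$; and in either case — $x\in S$ (use that $S$ is subterm closed) or $x\in X(S)\setminus S$ (use the explicit shape $C_1[s]$ with $C_1\in\sigma_\hole\pair{S}$) — each child $x_i$ lies in $S\cup\Sigma$ with $\depth(x_i)<\depth(x)$. So for $x_i\in S$ the induction hypothesis applies, and for $x_i\in\Sigma$ both claims are trivial. Setting $q_i:=\delta^*(x_i)$, I thus get $q_i\in S\cup\Sigma$, $q_i\sim x_i$ and $\depth(q_i)\leq\depth(x_i)$ for each $i$; for $x_i\in S$ one also notes $q_i=\delta^*(x_i)\in\cQ$ because $x_i$ is compound, and $\cQ\subseteq S$ since each $\tr(s)$ is the $\prec_{\mathtt{T}}$-minimum of a nonempty subset of $S$.

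Write $w:=\sigma(q_1,\ldots,q_m)$. First I would check $w\in S\cup X(S)$: as $q_1,\ldots,q_m\in S\cup\Sigma$, $w=C_1[q_m]$ for $C_1=\sigma(q_1,\ldots,q_{m-1},\hole)\in\sigma_\hole\pair{S}$, so $w\in X(S)\cup S$. Hence $\delta^*(x)=\delta_m(\sigma)(q_1,\ldots,q_m)=\tr(w)$, and $\tr(w)$ is defined because the table is closed (Lemma~\ref{lema2}). The depth claim is then immediate: $\depth(w)=1+\max_i\depth(q_i)\leq 1+\max_i\depth(x_i)=\depth(x)$, and by Lemma~\ref{lema2}, $\depth(\delta^*(x))=\depth(\tr(w))\leq\depth(w)\leq\depth(x)$.

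For the similarity claim I would argue in two steps and glue them with the non-transitive ``triangle'' Lemma~\ref{lema1}. Step one: apply the consistency Lemma~\ref{lele} with $k=\ell$, $s_i:=q_i$, $t_i:=x_i$ — its hypotheses hold precisely because either $q_i=x_i\in\Sigma$, or $q_i,x_i\in S$ with $q_i\sim_\ell x_i$ and $\depth(q_i)\leq\depth(x_i)$ — to get $w=\sigma(q_1,\ldots,q_m)\sim\sigma(x_1,\ldots,x_m)=x$. Step two: $\tr(w)\sim w$ by definition of representative. Finally, invoke Lemma~\ref{lema1} with $w$ as the ``middle'' term: $\tr(w),w,x\in S\cup X(S)$, and $\depth(w)\leq\depth(x)=\max\{\depth(\tr(w)),\depth(x)\}$ by the depth bound just established, so from $\tr(w)\sim w$ and $w\sim x$ we conclude $\delta^*(x)=\tr(w)\sim x$.

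The one delicate point is exactly this last gluing step: since $\sim$ is not transitive, Lemma~\ref{lema1} is usable only once we know $\depth(w)$ does not exceed $\max\{\depth(\tr(w)),\depth(x)\}$, which is why it pays to carry the depth inequality $\depth(\delta^*(x))\leq\depth(x)$ through the same induction — it is needed to prove the similarity part for the very same $x$. Everything else is routine bookkeeping about where the children of $x$ live plus verbatim applications of Lemmas~\ref{lema2}, \ref{lele}, and~\ref{lema1}; the argument is uniform in $\depth(x)$, with the case where all children are in $\Sigma$ serving as the base.
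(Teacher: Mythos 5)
Your proof is correct and follows essentially the same route as the paper's: induction on $\depth(x)$, the depth bound via Lemma~\ref{lema2}, the similarity $\sigma(q_1,\ldots,q_m)\sim x$ via Lemma~\ref{lele}, and the final gluing through Lemma~\ref{lema1} justified by $\depth(\sigma(q_1,\ldots,q_m))\leq\depth(x)$. The only cosmetic difference is that you treat the base case uniformly with the inductive step, whereas the paper states $\depth(x)=1$ separately.
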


\begin{corollary}
\label{cor1}
$\delta^*(x)=x$ for all $x\in\{\tr(s)\mid s\in S\cup X(S)\}.$
\end{corollary}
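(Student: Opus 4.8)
The statement is essentially a one-step consequence of Lemma~\ref{lmm} and Lemma~\ref{lema4}: the first gives $\delta^*(x)\sim x$, and the second says that two $\sim$-related representatives must coincide. So the only thing that needs to be checked is that both $\delta^*(x)$ and $x$ qualify as ``representatives'' in the sense of Lemma~\ref{lema4}, i.e. that they both lie in $\{\tr(y)\mid y\in S\cup X(S)\}$. The plan is therefore: (i) unpack that any $x$ of the form $\tr(s)$ is an element of $S$ of depth at least $1$; (ii) conclude that $\delta^*(x)$ is a state of $\cA(\mathbb{T})$, hence again a representative; (iii) apply Lemma~\ref{lmm} and then Lemma~\ref{lema4}.

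For (i): by the definition of representative, $x=\tr(s)=\min_{\prec_{\mathtt{T}}}\{s'\in S\mid s\sim s'\}$, so $x\in S$, and since $S$ is subterm closed we have $\depth(x)\geq 1$. For (ii): because $\depth(x)\geq 1$, $x$ is not a constant, so $\delta^*(x)=\delta_m(\sigma)(\delta^*(x_1),\ldots,\delta^*(x_m))$ for the appropriate $m\in ar(\sigma)$; since $\delta$ is well defined (as already argued using Lemma~\ref{lema6}) each value of $\delta_m(\sigma)$ lies in $\cQ=\{\tr(s)\mid s\in S\}$, so $\delta^*(x)\in\cQ$. Finally, by Lemma~\ref{lema5} every element of $\cQ$ is its own representative, so $\cQ\subseteq\{\tr(y)\mid y\in S\cup X(S)\}$; in particular both $\delta^*(x)$ and $x$ belong to this set. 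For (iii): Lemma~\ref{lmm} yields $\delta^*(x)\sim x$, and now Lemma~\ref{lema4} (closedness is in force throughout this subsection) forces $\delta^*(x)=x$, which is the claim.

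\textbf{Expected main obstacle.} Honestly there is no serious obstacle: the corollary is a short bookkeeping argument once Lemmata~\ref{lmm}, \ref{lema4} and \ref{lema5} are in hand. The only point requiring care is verifying that Lemma~\ref{lema4} is applicable to the pair $(\delta^*(x),x)$, i.e. that applying $\delta^*$ to a representative lands inside $\{\tr(y)\mid y\in S\cup X(S)\}$; this is exactly where the well-definedness of $\delta$ and Lemma~\ref{lema5} are used. Note that the depth estimate in Lemma~\ref{lmm} is not needed for this corollary, only the similarity $\delta^*(x)\sim x$.
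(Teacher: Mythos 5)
Your proof is correct and follows essentially the same route as the paper: Lemma~\ref{lmm} gives $\delta^*(x)\sim x$, and Lemma~\ref{lema4} upgrades similarity of two representatives to equality. The extra bookkeeping you supply (that $\delta^*(x)$ lands in $\cQ\subseteq\{\tr(y)\mid y\in S\cup X(S)\}$) is exactly the step the paper leaves implicit, and it is verified correctly.
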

\begin{proof}
By Lemma \ref{lmm}, $x\sim \delta^*(x).$ Since both $\delta^*(x)$ and $x$ belong to the set of representatives $\{\tr(s)\mid s\in X\cup X(S)\}$,  $x=\delta^*(x)$ by Lemma \ref{lema4}.\qed
\end{proof}
The following theorem shows that the DFTA of a closed and consistent observation table is consistent with the function $T$ on terms with depth at most $\ell$.
\begin{theorem}
\label{thm1}
Let $\mathbb{T}=(S,E,T,\ell)$ be a closed and consistent observation table. For every $s\in S\cup X(S)$ and $C\in E$ such that $\depth(C[s])\leq\ell$ we have $\delta^*(C[s])\in \cQ_{\mathtt{f}}$ if and only if $T(C[s])=1.$
\end{theorem}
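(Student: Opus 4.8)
The natural strategy is induction on $\depth(C)=\depth_\hole(C)$, using the definition of $\delta^*$ together with Lemma~\ref{lmm} (which says $\delta^*(x)\sim x$ and $\depth(\delta^*(x))\le\depth(x)$) and the fact, established in the paragraph preceding the theorem, that $\delta^*(q)=q$ for representatives. The base case is $\depth_\hole(C)=0$, i.e.\ $C=\hole$, so that $C[s]=s$; here I would argue that $\delta^*(s)=\tr(s)$ by Lemma~\ref{lmm} (since $\delta^*(s)\sim s$, $\delta^*(s)$ is a representative, and $\tr(s)$ is the unique representative similar to $s$ by Lemma~\ref{lema4}), and then observe that $\hole\in E_{\pair{\ell-\max\{\depth(\tr(s)),\depth(s)\}}}$ once $\depth(s)\le\ell$, so $T(s)=T(\tr(s))$ because $\tr(s)\sim s$; combining this with $\cQ_{\mathtt{f}}=\{q\in\cQ\mid T(q)=1\}$ and $\delta^*(s)=\tr(s)$ gives the claim. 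Care is needed when $\depth(s)>\ell$: then $s\in X(S)$, $\delta^*(s)=\tr(s)=\min_{\prec_{\mathtt T}}S$, and $T(s)=-1\neq 1$, so one must check $\tr(s)\notin\cQ_{\mathtt f}$ — but the hypothesis $\depth(C[s])\le\ell$ rules this case out anyway when $C=\hole$.

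For the inductive step, write $C=C'[C_1]$ with $C_1=\sigma(q_1,\dots,q_{j-1},\hole,q_{j+1},\dots,q_m)\in\sigma_\hole\pair{S}$ (possible since $E$ is $\hole$-prefix closed with respect to $S$, so $C'\in E$ and $C_1\in\sigma_\hole\pair{S}$), and $\depth_\hole(C')<\depth_\hole(C)$. Then $C[s]=C'[\sigma(q_1,\dots,s,\dots,q_m)]$, and by the recursive definition of $\delta^*$ one has $\delta^*(C[s])=\delta^*\bigl(C'[\,r\,]\bigr)$ where $r=\delta_m(\sigma)(\delta^*(q_1),\dots,\delta^*(s),\dots,\delta^*(q_m))=\tr(\sigma(\delta^*(q_1),\dots,\delta^*(s),\dots,\delta^*(q_m)))$. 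I want to replace the inner term by $C_1[s]$ (or rather by $\tr(C_1[s])$): since each $q_i$ is either in $\Sigma$ (where $\delta^*(q_i)=q_i$) or a subterm in $S$ with $\delta^*(q_i)\sim q_i$ and $\depth(\delta^*(q_i))\le\depth(q_i)$, and $\delta^*(s)\sim s$ with $\depth(\delta^*(s))\le\depth(s)$, Lemma~\ref{lele} yields $\sigma(\delta^*(q_1),\dots,\delta^*(s),\dots,\delta^*(q_m))\sim \sigma(q_1,\dots,s,\dots,q_m)=C_1[s]$; closedness (Lemma~\ref{lema6}) and Lemma~\ref{lema4} then give $\tr$ of both sides equal, so $r=\tr(C_1[s])$. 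Now $r$ is a representative with $\delta^*(r)=r$ (Corollary~\ref{cor1}), and crucially $\depth(r)\le\depth(C_1[s])$: for $C_1[s]\in S$ this is Lemma~\ref{lema2}, and for $C_1[s]\in X(S)$ it is Lemma~\ref{lem6}. Hence $\depth(C'[r])\le\depth(C'[C_1[s]])=\depth(C[s])\le\ell$, so the inductive hypothesis applies to $C'$ and $r$, giving $\delta^*(C[s])=\delta^*(C'[r])\in\cQ_{\mathtt f}\iff T(C'[r])=1$.

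It remains to close the loop by showing $T(C'[r])=T(C[s])$. Since $r=\tr(C_1[s])\sim C_1[s]$, consistency (for $k=\ell$) is essentially what is needed, but one must be careful about the depth bound built into $\sim_\ell$: the context $C'$ applied to $r$ versus to $C_1[s]$ must lie in $E_{\pair{\ell-\max\{\depth(r),\depth(C_1[s])\}}}$. Because $\depth(r)\le\depth(C_1[s])$, that set is $E_{\pair{\ell-\depth(C_1[s])}}$, and $C'$ belongs to it precisely when $\depth_\hole(C')\le\ell-\depth(C_1[s])$, equivalently $\depth(C'[C_1[s]])\le\ell$, i.e.\ $\depth(C[s])\le\ell$ — which is our hypothesis. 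Thus from $r\sim_\ell C_1[s]$ we get $T(C'[r])=T(C'[C_1[s]])=T(C[s])$, completing the induction. The main obstacle is exactly this bookkeeping of hole depths: one must verify at each step that applying the shortened context still keeps every relevant term at depth $\le\ell$ so that the $\sim_\ell$ comparisons "see" the contexts $C'$ in question, and that $\depth(r)\le\depth(C_1[s])$ so the $\max$ in the definition of $\sim_\ell$ does not shrink the set of testing contexts; Lemmata~\ref{lema2}, \ref{lem6}, and \ref{lele} are precisely the tools that make this go through.
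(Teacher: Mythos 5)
Your proof is correct and follows essentially the same route as the paper's: induction on $\depth_\hole(C)$, decomposing $C=C'[C_1]$ via $\hole$-prefix closedness, replacing $C_1[s]$ by the representative $r=\delta^*(C_1[s])$ (the paper calls it $t$ and gets $t\sim C_1[s]$, $\depth(t)\le\depth(C_1[s])$ directly from Lemma~\ref{lmm} rather than re-deriving it through Lemmas~\ref{lele}, \ref{lema6}, \ref{lema4}), applying Corollary~\ref{cor1} to conclude $\delta^*(C'[r])=\delta^*(C[s])$, and transferring the $T$-value through $\sim_\ell$ with exactly the depth bookkeeping you describe.
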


\begin{theorem}
\label{oops}
Let $\mathbb{T}=(S,E,T,\ell)$ be a closed and consistent observation table, and $N$ be the number of states of $\cA(\mathbb{T}).$ If $\cA'$ is any other DFTA with $N$ or fewer states, that is consistent with $T$ on terms with depth at most $\ell$, then $\cA'$ has exactly $N$ states and $\cL(\cA(\mathbb{T}))_{[\ell]}=\cL(\cA')_{[\ell]}$. 
\end{theorem}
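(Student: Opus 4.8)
Let me mirror the structure of Sakakibara's proof of his Lemma 4.3 (the analogue that Theorem~\ref{oops} generalizes), using the three-valued consistency and the similarity machinery in place of exact row equality.

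First I would fix a closed and consistent observation table $\mathbb{T}=(S,E,T,\ell)$ and let $\cA(\mathbb{T})=(\cQ,Sk\cup\Sigma,\cQ_{\mathtt{f}},\delta)$ with $N=|\cQ|$. Let $\cA'=(\cQ',Sk\cup\Sigma,\cQ'_{\mathtt{f}},\delta')$ be any DFTA consistent with $T$ on terms of depth at most $\ell$, with $|\cQ'|\leq N$. Recall $\cQ=\{\tr(s)\mid s\in S\}$, and by Lemma~\ref{lema4} the $N$ representatives in $\cQ$ are pairwise dissimilar. The first step is to show that the map $\psi:\cQ\to\cQ'$ sending $\tr(s)\mapsto\delta'^*(\tr(s))$ is injective; this forces $|\cQ'|=N$, so $\psi$ is a bijection. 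Injectivity is where the main subtlety lies: if $r_1\neq r_2$ are two representatives, there is, by the closedness lemmas and Lemma~\ref{lema4}, a context $C\in E$ that $\ell$-distinguishes them, i.e.\ $C\in E_{\pair{\ell-\max\{\depth(r_1),\depth(r_2)\}}}$ and $T(C[r_1])\neq T(C[r_2])$. Since $\depth(C[r_i])\leq\ell$ for both $i$ (that is exactly what membership of $C$ in that $E_{\pair{\cdot}}$ guarantees), consistency of $\cA'$ with $T$ on depth-$\leq\ell$ terms gives $\delta'^*(C[r_i])\in\cQ'_{\mathtt{f}}\iff T(C[r_i])=1$; as the right-hand sides differ, $\delta'^*(C[r_1])\neq\delta'^*(C[r_2])$, and since $\delta'^*(C[r_i])$ depends only on $\delta'^*(r_i)$ (a context applied to a state), we get $\delta'^*(r_1)\neq\delta'^*(r_2)$. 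Hence $\psi$ is injective, so bijective.

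Next I would upgrade $\psi$ to an isomorphism of DFTAs. One shows $\psi$ commutes with transitions: for $m\in ar(\sigma)$ and states $q_1,\dots,q_m\in\cQ$ (together with constants from $\Sigma$ treated as themselves), $\psi(\delta_m(\sigma)(q_1,\dots,q_m))=\delta'_m(\sigma)(\psi(q_1),\dots,\psi(q_m))$. By definition $\delta_m(\sigma)(q_1,\dots,q_m)=\tr(\sigma(q_1,\dots,q_m))$; using Corollary~\ref{cor1} and Lemma~\ref{lmm}, $\tr(\sigma(q_1,\dots,q_m))\sim\sigma(q_1,\dots,q_m)$ and has no larger depth, so applying the injectivity argument of the previous paragraph (an $\ell$-distinguishing context would separate their images under $\delta'^*$, but they are similar, hence not $\ell$-distinguished — here one needs that $\ell$-distinguishing is symmetric and that equal $\delta'^*$-values plus consistency force agreement of $T$ on the relevant contexts) yields $\delta'^*(\tr(\sigma(q_1,\dots,q_m)))=\delta'^*(\sigma(q_1,\dots,q_m))=\delta'_m(\sigma)(\delta'^*(q_1),\dots,\delta'^*(q_m))$, which is exactly $\delta'_m(\sigma)(\psi(q_1),\dots,\psi(q_m))$. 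Likewise, using $\hole\in E$ and consistency of $\cA'$ with $T$, one checks $\psi(\cQ_{\mathtt{f}})=\cQ'_{\mathtt{f}}$: $q\in\cQ_{\mathtt{f}}\iff T(q)=1\iff\delta'^*(q)=\psi(q)\in\cQ'_{\mathtt{f}}$ (the middle equivalence by consistency since $\depth(q)\leq\ell$). Thus $\cA(\mathbb{T})$ and $\cA'$ are isomorphic DFTAs.

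Finally, isomorphic DFTAs accept the same language, so $\cL(\cA')=\cL(\cA(\mathbb{T}))$, and in particular $\cL(\cA')_{[\ell]}=\cL(\cA(\mathbb{T}))_{[\ell]}$; restricting to depth $\leq\ell$ is harmless but is all that Theorem~\ref{thm1} guarantees about agreement with $T$, so I would state the conclusion in that restricted form as in the theorem. The step I expect to be the genuine obstacle is the injectivity of $\psi$: it requires carefully tracking the depth bookkeeping in the definition of $\ell$-distinguishing (ensuring the separating context applied to $r_1$ and to $r_2$ both stay within depth $\ell$ so that consistency of $\cA'$ applies), and it needs Lemma~\ref{lmm}/Corollary~\ref{cor1} to control how $\delta^*$ interacts with depth so that the representatives it produces are never ``too deep.'' The transition-compatibility step reuses essentially the same lemma, so once injectivity is done cleanly the rest is bookkeeping.
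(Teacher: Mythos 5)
Your first step (injectivity of $\psi:q\mapsto\delta'^*(q)$ via a distinguishing context $C\in E_{\pair{\ell-\max\{\depth(q_1),\depth(q_2)\}}}$, whose membership in that set guarantees $\depth(C[q_i])\leq\ell$ so that consistency of $\cA'$ applies) and your identification of the final states agree with the paper. The gap is in the second step, where you try to upgrade $\psi$ to a DFTA isomorphism. The inference you invoke there is the converse of what injectivity gives you: injectivity shows that an $\ell$-distinguishing context forces $\delta'^*(r_1)\neq\delta'^*(r_2)$, i.e.\ that equal $\delta'^*$-values imply similarity; it does \emph{not} show that $\tr(\sigma(q_1,\dots,q_m))\sim\sigma(q_1,\dots,q_m)$ implies $\delta'^*(\tr(\sigma(q_1,\dots,q_m)))=\delta'^*(\sigma(q_1,\dots,q_m))$. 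And this converse is genuinely false here: if $q_1,\dots,q_m$ have depth close to $\ell$, then $\sigma(q_1,\dots,q_m)$ may have depth $\ell+1$, no context in $E$ (not even $\hole$) keeps it within depth $\ell$, $T$ takes the value $-1$ on everything relevant, and consistency of $\cA'$ with $T$ places no constraint whatsoever on $\delta'_m(\sigma)(\psi(q_1),\dots,\psi(q_m))$. So the two automata need not be isomorphic, and indeed your conclusion $\cL(\cA')=\cL(\cA(\mathbb{T}))$ (full equality) is stronger than the theorem and false in general — minimal cover automata, unlike minimal DFTAs, are not unique up to isomorphism, which is precisely why the statement is only about $\cL(\cdot)_{[\ell]}$.

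The paper's proof avoids this by replacing the homomorphism property with a depth-graded invariant: it proves by induction on $\depth(x)$, for $x\in\cT(Sk\cup\Sigma)_{[\ell]}\setminus\Sigma$ with $\delta^*(x)=q$ and $\delta'^*(x)=f(q')$, that $\depth(q)\leq\depth(x)$, $\depth(q')\leq\depth(x)$, and $q\sim_m q'$ for $m=\ell-\depth(x)+\max\{\depth(q),\depth(q')\}$. Note that even the base case is the only place where $q=q'$ is obtained; in the inductive step one only gets $m$-similarity for an $m$ that can be strictly less than $\ell$, so the states of the two runs are related by a similarity whose strength degrades with the depth of $x$ — exactly enough, together with $\hole\in E$ and the depth bounds, to conclude $T(q)=T(q')$ and hence agreement on acceptance for all $x$ of depth at most $\ell$, but not enough to identify the transition structures. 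The inductive step also needs Lemma~\ref{lem6} to bound $\depth(q')$ and a chain of context substitutions (using consistency of the table) to propagate the componentwise similarities $q_i\sim_{m_i}q_i'$ up through $\sigma$; none of this is subsumed by your injectivity argument. You correctly flagged the depth bookkeeping as the obstacle, but the fix is not more careful bookkeeping inside an isomorphism proof — it is abandoning the isomorphism in favour of the graded similarity invariant.
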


\begin{corollary}
\label{cor2}
Let $\cA$ be the automaton corresponding to a closed and consistent observation table $(S,E,T,\ell)$ of the skeletons of a CFG $G_U$ of an unknown language $U$, and $N$ be its number of states. Let $n$ be the number of states of a minimal DCTA of $K(D(G_U))$ with respect to $\ell$. If $N\geq n$ then $N=n$ and $\cA$ is a minimal DCTA of $K(D(G_U))$ with respect to $\ell$.
\end{corollary}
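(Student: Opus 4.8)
The plan is to derive this as a direct consequence of Theorem \ref{oops}, using the characterization of DCTAs for $K(D(G_U))$ as exactly those DFTAs that agree with $T$ on terms of depth at most $\ell$. First I would observe that any minimal DCTA $\cA^{\min}$ of $K(D(G_U))$ with respect to $\ell$ is, by definition, a DFTA over the skeletal set $Sk\cup\Sigma$ with $\cL(\cA^{\min})_{[\ell]}=K(D(G_U))_{[\ell]}$, and it has $n$ states. The key reduction is that $\cA^{\min}$ is consistent with $T$ on terms of depth at most $\ell$: for any term $t\in\cT(Sk\cup\Sigma)$ with $\depth(t)\le\ell$, we have $\delta^{\min *}(t)$ a final state iff $t\in\cL(\cA^{\min})_{[\ell]}=K(D(G_U))_{[\ell]}$ iff $T(t)=1$ (since every such $t$ has $T(t)\in\{0,1\}$, the value $-1$ being reserved for terms of depth $>\ell$). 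In particular this applies to all $t=C[s]$ with $s\in S\cup X(S)$, $C\in E$, and $\depth(C[s])\le\ell$, which is the notion of consistency used in Theorem \ref{oops}.

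Next I would apply Theorem \ref{oops} with $\cA'=\cA^{\min}$. We are given $N\ge n$, i.e. $\cA^{\min}$ has $n\le N$ states, so it is a DFTA with $N$ or fewer states that is consistent with $T$ on terms of depth at most $\ell$. Theorem \ref{oops} then forces $\cA^{\min}$ to have exactly $N$ states; since it has $n$ states, $N=n$. Moreover Theorem \ref{oops} gives $\cL(\cA)_{[\ell]}=\cL(\cA^{\min})_{[\ell]}=K(D(G_U))_{[\ell]}$, so $\cA$ is itself a DCTA of $K(D(G_U))$ with respect to $\ell$. Since $\cA$ has $N=n$ states and no DCTA can have fewer than $n$ states (by minimality of $\cA^{\min}$), $\cA$ is a minimal DCTA of $K(D(G_U))$ with respect to $\ell$.

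The one point requiring a little care — and the closest thing to an obstacle — is the passage from ``$\cA$ is consistent with $T$'' (via Theorem \ref{thm1}) and ``$\cA^{\min}$ is consistent with $T$'' to the actual equality $\cL(\cA)_{[\ell]}=K(D(G_U))_{[\ell]}$; this is entirely handled by Theorem \ref{oops}, so the corollary is genuinely just an instantiation. I would also make explicit, if it has not been established earlier, that a minimal DCTA exists and may be taken over the skeletal set $Sk\cup\Sigma$ (so that it is a legitimate competitor $\cA'$ in Theorem \ref{oops}); this follows from the remark after the definition of DCTA together with the standard fact that every regular tree language has a minimal DFTA. With that in hand the argument is short and the corollary follows.
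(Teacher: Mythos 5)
Your proof is correct and follows essentially the same route as the paper: instantiate Theorem \ref{oops} with a minimal DCTA $\cA'$ of $K(D(G_U))$ w.r.t.\ $\ell$, note it is consistent with $T$ on $\cT(Sk\cup\Sigma)_{[\ell]}$ and has $n\leq N$ states, and conclude $N=n$ and $\cL(\cA)_{[\ell]}=\cL(\cA')_{[\ell]}=K(D(G_U))_{[\ell]}$. The extra details you supply (why $\cA'$ is consistent with $T$, and the existence of a minimal DCTA over $Sk\cup\Sigma$) are left implicit in the paper but do not change the argument.
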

\subsection*{The $LA^\ell$ algorithm}
The algorithm $LA^\ell$ extends the observation table $\mathbb{T}=(S,E,T,\ell)$ whenever one of the following situations occurs: the table is not consistent, the table is not closed, or the table is both consistent and closed but the resulting automaton $\cA(\mathbb{T})$  is not a cover tree automaton of $K(D(G_U))$ with respect to $\ell$.

The pseudocode of the algorithm is shown below. 
\begin{tabbing}
ask if $(\{\tS\},\Sigma,\emptyset,\tS)$ is a cover CFG of $G_U$ w.r.t. $\ell$\\
{\bf if} answer is {\em yes} {\bf then} halt and output the CFG $(\{\tS\},\Sigma,\emptyset,\tS)$\\
{\bf if}\=\ answer is {\em no} with counterexample $t$ {\bf then}\\
\>set $S:=\{s\mid s$ is a subterm of $t$ with depth at least $1\}$ and $E=\{\hole\}$\\
\>construct \=the table $\mathbb{T}=(S,E,T,\ell)$ using structural membership queries\\
\>{\bf rep}\={\bf{eat}}\\
\>\ \ {\bf repeat}\\
\>\>/* check consistency */\\
\>\>{\bf for}\=\ every $C\in E,$ in increasing order of $i=\depth_\hole(C)$ {\bf do}\\
\>\>\>search \=for $s_1,s_2\in S$ with $\depth(s_1),\depth(s_2)\leq \ell-i-1$ and $C_1\in\sigma_\hole\pair{S}$\\
\>\>\>\>such that \=$C[C_1[s_1]]), C[C_1[s_2]]\in\cT(Sk\cup\Sigma)_{[\ell]}$,\\
\>\>\>\>\>$s_1\sim_k s_2$ where $k=\max\{\depth(s_1),\depth(s_2)\}+i+1$,\\
\>\>\>\>\>and $T(C[C_1[s_1]])\neq T(C[C_1[s_2]])$\\
\>\>\>{\bf if} \=found {\bf then}\\
\>\>\>\>add $C[C_1]$ to $E$\\
\>\>\>\>extend $T$ to $E[S\cup X(S)_{[\ell]}]$ using structural membership queries\\
\>\>/* check closedness */\\
\>\>$new\_row\_added:=\mathtt{false}$\\
\>\>{\bf repeat} for every $s\in S$, in increasing order of $\depth(s)$\\
\>\>\>search for $C_1\in\sigma_\hole\pair{S}$ such that \=$C_1[s]\nsim t$ for all $t\in S_{[\depth(C_1[s])]}$ \\
\>\>\>{\bf if} \=found {\bf then}\\
\>\>\>\>add $C_1[s]$ to $S$\\
\>\>\>\>extend $T$ to $E[S\cup X(S)_{[\ell]}]$ using structural membership queries\\
\>\>\>\>$new\_row\_added:=\mathtt{true}$\\
\>\>{\bf until} $new\_row\_added=\mathtt{true}$ or all elements of $S$ have been processed\\
\>\ \ {\bf until} $new\_row\_added=\mathtt{false}$\\
\>\ \ /* $\mathbb{T}$ is now closed and consistent */ \\
\>\ \ make the query whether $G(\cA(\mathbb{T}))$ is a cover CFG of $G_U$ w.r.t. $\ell$\\
\>\ \ {\bf if}\=\ the reply is {\em no} with a counterexample $t$ {\bf then}\\
\>\>add to $S$ \=all subterms of $t$, including $t$, with depth at least 1, \\
\>\>\>in the increasing order given by $\prec_{\mathtt{T}}$\\
\>\>extend $T$ to $E[S\cup X(S)_{[\ell]}]$ using structural membership queries\\
\>{\bf until} \=the reply is {\em yes} to the query if $G(\cA(\mathbb{T}))$ is a cover CFG of $G_U$ w.r.t. $\ell$\\
\>halt and output $G(\cA(\mathbb{T}))$.
\end{tabbing}

Consistency is checked by searching for $C\in E$ and $C_1\in\sigma_\hole\pair{S}$ such that $C[C_1]$ will $\ell$-distinguish  two terms  $s_1,s_2\in S$ not distinguished by any other context  $C'\in E$ with $\depth_\hole(C')\leq\depth_\hole(C[C_1]).$ Whenever such a pair of contexts $(C,C_1)$ is found,  $C[C_1]$ is added to $E$. Note that 
$C[C_1]\in\cC(Sk\cup\Sigma)_{\pair{\ell-1}}\cap\cC(Sk\cup\Sigma)_{[\ell]}$ because only such contexts can distinguish terms from $S$, and the addition of $C[C_1]$ to $E$ yields a $\hole$-prefix closed subset of $\cC(Sk\cup\Sigma)_{\pair{\ell-1}}\cap \cC(Sk\cup\Sigma)_{[\ell]}$.

The search of such a pair of contexts $(C,C_1)$ is repeated in increasing order of the hole depth of $C$, until all contexts from $E$ have been processed. Therefore, any context $C[C_1]$ with $C\in E$ and $C_1\in\sigma_\hole\pair{S}$ that was added to $E$ because of a failed consistency check will be processed itself in the same {\bf for} loop.

The algorithm checks closedness by searching for $s\in S$ and $C_1\in\sigma_\hole\pair{S}$ such that $C_1[s]\nsim t$ for all $t\in S$ for which $\depth(t)\leq\depth(C_1[s]).$ The search is performed in increasing order of the depth of $s$. If $s$ and $C_1$ are found,  $C_1[s]$ is added to the $S$ component of the observation table, and the algorithm checks again  consistency.  Note that adding $C_1[s]$ to $S$ yields a subterm closed subset of $\cT(Sk\cup\Sigma)_{[\ell]}$. Also, closedness checks are performed only on consistent observation tables.

When the observation table is both consistent and closed, the corresponding DFTA is constructed and it is checked whether the language accepted by the constructed automaton coincides with the set of skeletal descriptions of the unknown context-free grammar $G_U$ (this is called a {\em structural equivalence query}). If this query fails, a counterexample from $\cL(\cA(\mathbb{T}))_{[\ell]}\mathop{\triangle} K(D(G_U))_{[\ell]}$ is produced,  the component $S$ of the observation table is expanded to include $t$ and all its subterms with depth at least 1, and the consistency and closedness checks are performed once more. At the end of this step, the component $S$ of the observation table is subterm closed, and $E$ is unchanged, thus $\hole$-prefix closed.

Thus, at any time during the execution of algorithm $LA^\ell$, the defining properties of an observation table are preserved:
the component $S$ is a subterm closed subset of $\cT(Sk\cup\Sigma)_{[\ell]}$, and
the component $E$ is a $\hole$-prefix closed subset of $\cC(Sk\cup\Sigma)_{\pair{\ell-1}}\cap \cC(Sk\cup\Sigma)_{[\ell]}$.

\section{Algorithm analysis}
\label{sect5}
We notice that the number of states of the DFTA constructed by algorithm $LA^\ell$ will always increase between two successive structural equivalence queries. When this number of states reaches the number of states of a minimal DCTA of $K(D(G_U))$, the constructed DFTA is actually a minimal DCTA of $K(D(G_U))$ (Corollary \ref{cor2}) and the algorithm terminates.

From now on we assume implicitly that $n$ is the number of states of a minimal DCTA of $K(D(G_U))$ with respect to $\ell$, and that $\mathbb{T}(\ty)$ is the observation table $(S^{\ty},E^{\ty},T,\ell)$ before execution step $\ty$ of the algorithm. 
By Corollary \ref{cor2}, $\cQ^{\ty}$ will always have between 1 and $n$ elements. Note that the representative of an element $s\in S$ in $\cQ^\ty$ is a notion that depends on the observation table $\mathbb{T}(\ty).$ Therefore, we will use the notation $\tr_\ty(s)$ to refer to the representative of $s\in S^\ty$ in the observation table $\mathbb{T}(\ty).$ With this notation, $\cQ^\ty=\{\tr_\ty(s)\mid s\in S^\ty\}.$

Note that the execution of algorithm $LA^\ell$ is a sequence of steps characterised by the detection of three kinds of failure:  closedness,  consistency, and structural equivalence query. The $\ty$-th execution step is
\begin{enumerate}
\item a failed closedness check when the algorithm finds $C_1\in\sigma_\hole\pair{S^{\ty}}$ and $s\in S^{\ty}$ such that $C_1[s]\nsim t$ for all $t\in S^{\ty}$ with $\depth(t)\leq\depth(C_1[s])$,
\item a failed consistency check when the algorithm finds $C\in E^{\ty}$ with $\depth_\hole(C)=i$, $s_1,s_2\in S^{\ty}$ with $\depth(s_1),\depth(s_2)\leq \ell-i-1$, and $C_1\in\sigma_\hole\pair{S^{\ty}}$, such that  $C[C_1[s_1]],$ $C[C_1[s_2]]\in\cT(Sk\cup\Sigma)_{[\ell]}$, $s_1\sim_k s_2$ where $k=\max\{\depth(s_1),\depth(s_2)\}+i+1$, and $T(C[C_1[s_1]])\neq T(C[C_1[s_2]])$,
\item a failed structural equivalence query when the observation table $\mathbb{T}(\ty)$ is closed and consistent, and the learning algorithm receives from the teacher a counterexample $t\in\cT(Sk\cup\Sigma)_{[\ell]}$ as answer to the structural equivalence query with the grammar $G(\cA(S^{\ty},E^{\ty},T,\ell)).$
\end{enumerate}
In the following subsections we perform a complexity analysis of the algorithm by identifying upper bound estimates to the computations due to failed  consistency checks, failed closeness checks, and failed structural equivalence queries. 

\comment{
First, we define inductively an enumeration ordering of the elements of $\cQ^{\ty}$.
If the enumeration of the elements  of $\cQ^{\ty}$ is  $r_1,\ldots,r_m$ then the enumeration of the elements of $\cQ^{\ty+1}$ is defined as follows:
\begin{enumerate}
\item If the $\ty$-th execution step is a failed closedness check which introduces a term $r'$ from $X(S^{\ty})_{[\ell]}$ into $\cQ^{\ty}$ (that is, $r'\in \cQ^{\ty+1}\setminus\cQ^\ty$) then:
\begin{enumerate}
\item If there is no $r\in \cQ^{\ty}$ with $r'\sim r$ then $\cQ^{\ty+1}$ has $m+1$ elements, and the enumeration of its elements is $r_1,\ldots,r_m,r'.$
 \item Otherwise, there is an $r_j\in\cQ^{\ty}$ such that $r'\sim r_j$ and $\depth(r')<\depth(r_j).$ Note that there is no $r_k\neq r_j$ such that $r'\sim r_k$ because Lemma \ref{lema1} would imply the contradiction $r_j\sim r_k.$ In this case, $\cQ^{\ty+1}$ has $m$ elements, and the ordering of its elements is $r_1,\ldots,r_{j-1},r',r_{j+1},\ldots,r_m.$
 \end{enumerate}
 \item If the $\ty$-th execution step is a failed consistency check which introduces a context $C\in \cC(Sk\cup\Sigma)_{\pair{\ell-1}}\cap\cC(Sk\cup\Sigma)_{[\ell]}$ into $E^{\ty+1}$ then $\cQ^{\ty}\subseteq S^{\ty+1}$ and, in the  table $\mathbb{T}(\ty+1)$, $\tr(q)=q$ for all $q\in\cQ^{\ty}$. Therefore $\cQ^{\ty +1}=\cQ^{\ty}\cup \cQ'$ where $\cQ'$ is the set of representatives newly created by the introduction of context $C$ into $E^{\ty+1}.$ 
In this case, the enumeration of the elements of $\cQ^{\ty+1}$ is $$r_1,\ldots,r_m,r'_1,\ldots,r'_p$$ where 
$r'_1,\ldots,r'_p$ is the enumeration of the elements of $\cQ'$  given by $\prec_{\mathtt{T}}$.
 \item If the $\ty$-th execution step is a failed structural membership query with a counterexample $t\in \cT(Sk\cup\Sigma)_{[\ell]}$ then $S^{\ty+1}=S^{\ty}\cup M$ where $M$ is the set of subterms of $t$, including $t$. Let $\{t_1,\ldots,t_u\}=M\setminus S^{\ty}$, where $t_1\prec_{\mathtt{T}}\ldots\prec_{\mathtt{T}} t_u$. Then we can trace the computation of the observation table $\mathbb{T}(\ty+1)$ from the observation table $\mathbb{T}(\ty)$ via the computation of the sequence of $u$ intermediary observation tables $\mathbb{T}'_j=(S_j,E,T,\ell)$ with $S_j:=S^\ty\cup\{t_1,\ldots,t_j\}$ for all $0\leq j\leq u$. Note that $\mathbb{T}'_0=\mathbb{T}(\ty)$ is a closed observation table, and $\mathbb{T}(\ty+1)=\mathbb{T}'_u$. Let $\cQ'_j:=\{\tr(s)\mid s\in S_j\}$ in the table $\mathbb{T}'_j$ for all $0\leq j\leq u$. Let $1\leq i\leq u$, and suppose the enumeration of elements of $\cQ'_{i-1}$ is $r'_1,\ldots,r'_k$.
 For the computation of $\cQ'_i$  from $\cQ'_{i-1}$ we distinguish the following situations: 
 \begin{enumerate}
 \item If there is no $r\in \cQ'_{i-1}$ with $t_i\sim r$ in $\mathbb{T}'_{i-1}$ then $\cQ'_i$ has $k+1$ elements, and the enumeration of its elements is $r'_1,\ldots,r'_k,t_i.$ This situation is similar to the situation 1.(a) for a failed closedness check.
 \item If there is $r\in\cQ'_{i-1}$ with $t_i\sim r$ in $\mathbb{T}'_{i-1}$ and $r\prec_{\mathtt{T}}t_i$ then $\depth(r)\leq\depth(t_i)$, $\cQ'_{i}=\cQ'_{i-1}$, and $\cQ'_{i}$ has the same enumeration of elements as $\cQ'_{i-1}$.
\item If there is $r_j\in\cQ'_{i-1}$ with $r_j\sim t_i$  in $\mathbb{T}'_{i-1}$ and $t_i\prec_{\mathtt{T}}r_j$ then $\depth(t_i)\leq\depth(r_j)$ and there is no  $r'_j\neq r'_{j'}\in \cQ'_{i-1}$ with $r_{j'}\sim t_i$ in $\mathbb{T}'_{i-1}$ because this would imply $r'_{j'}\sim r'_j$ by Lemma \ref{lema1}, and thus the contradiction $r'_{j'}=r'_j$ by Lemma \ref{lema4}.  In this situation, $\cQ'_i=(\cQ'_{i-1}\setminus\{r'_j\})\cup\{t_i\}$ has $k$ elements, and the enumeration of its elements is $r'_1,\ldots,r'_{j-1},t_i,r'_{j+1},\ldots,r'_k$.
This situation is similar to the situation 1.(b) for a failed closedness check.
 \end{enumerate}
\end{enumerate}
From now on we denote by $\tty_\ty(j)$ the $j$-th element of $\cQ^\ty$ in this enumeration ordering. 
A useful property of this definition is indicated in the following lemma.
\begin{lemma}
For any  $r=\tty_\ty(i)\in\cQ^\ty$, if $r'$ is a subterm of $r$ and $r'\not\in\Sigma$, then the representative of $r'$ in the observation table $\mathbb{T}(\ty)$ is $\tty_\ty(j)$ for some $j\leq i$.
\end{lemma}
\begin{proof}
This property is obviously preserved by the execution steps $\ty$ corresponding to cases 1.(a), 1, 3.(a), and 3.(b). In cases 1.(b) and 3.(c), the execution step replaces a term $\tty_\ty(j)$ with a new term $r'\prec_{\mathtt{T}}\tty_\ty(j)$ after all proper subterms of $r'$ are guaranteed to have representatives in the $S$ component of the observation table.
\todo
\qed
\end{proof}
}
\subsection{Failed closedness checks}
We recall that the $\ty$-th execution step is a failed closedness check  if the algorithm finds a context $C_1\in\sigma_\hole\pair{S}$ and a term $s\in S^\ty$ such that $C_1[s]\nsim t$ for all $t\in S^\ty$ with $\depth(t)\leq \depth(C_1[s])$. 
We will show that the number of failed closedness checks performed by  algorithm $LA^\ell$ has an upper bound which is a polynomial in $n$. To prove this fact, we will rely on the following auxiliary notions:
\begin{itemize}
\item For $r,r'\in\cQ^\ty$, we define $r\prec^\ty_{\mathtt{T}} r'$ if either $\depth(r)<\depth(r')$ or $\depth(r)=\depth(r')$ and there exists $\ty'<\ty$ such that $r\in\cQ^{\ty'}$ but $r'\not\in\cQ^{\ty'}$ (that is, $r$ became a representative in the observation table before $r'$).
\item To every set of representatives $\cQ^\ty=\{r_1,\ldots,r_m\}$ with $r_1\prec^\ty_{\mathtt{T}}\ldots\prec^\ty_{\mathtt{T}} r_m$
we associate the tuple $\tpl(\cQ^\ty):=(d_1,\ldots,d_n)\in\{1,\ldots,\ell+1\}^n$ where  
$d_i:=
\depth(r_i)$ if $1\leq i\leq m,$ and $d_i:=\ell+1$ if $m < i\leq n.$
\item We consider the following  partial order on $\mathbb{N}^n$: $(x_1,\ldots,x_n) < (x'_1,\ldots,x'_n)$ iff there exists $i\in\{1,\ldots,n\}$ such that $x_i< x'_i$ and $x_j\leq x'_j$ for all $1\leq j\leq n$.
\item We denote by $\tty_\ty(i)$ the $i$-th component of  $\cQ^\ty$ in the order given by $\prec^{\ty}_{\mathtt{T}}$. 
\end{itemize}
\comment{
\begin{lemma}
\label{lmng}
Let $\mathbf{x}=(x_1,\ldots,x_n),\ \mathbf{y}= (y_1,\ldots,y_n)\in\mathbb{N}^n$ such that 
$x_1\leq \ldots\leq x_n$ and $y_1\leq \ldots \leq y_n.$ Then
$\{x_1,\ldots,x_n\}\prec_{\ms}\{y_1,\ldots,y_n\}.$ Then $\mathbf{x}<\mathbf{y}$.
\end{lemma}
\begin{proof}
Obvious.\qed
\end{proof}
\comment
It is easy to see that the following inequalities hold for all execution steps $\ty$: 
$$\tpl(\cQ^\ty)\geq\tpl(\cQ^{\ty+1})\text{ and }(1,\ldots,1)\leq \tpl(\cQ^\ty)\leq (\ell+1,\ldots,\ell+1).$$
}

\begin{lemma}
\label{lema8}
Suppose $s$ has been introduced in $S^{\ty+1}$ as a result of a failed closedness check. There exists  $p\in Pos(s)$ such that $\|p\|=\depth(s)$ and for every prefix $p'$ of $p$ different from $p$, $\depth(\tr_{\ty+1}(s|_{p'}))=\depth(s|_{p'})$.
\end{lemma}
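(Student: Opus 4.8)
The plan is to analyze the structure of the term $s$ added to $S^{\ty+1}$ during a failed closedness check, together with the configuration of the observation table at that moment. First I would recall how such an $s$ arises: the algorithm scans elements of $S^{\ty}$ in increasing order of depth, and for the first one it finds a context $C_1\in\sigma_\hole\pair{S^{\ty}}$ with $C_1[s_0]\nsim t$ for all $t\in S^\ty_{[\depth(C_1[s_0])]}$; it is $s=C_1[s_0]$ that gets added. So by construction $s=\sigma(r_1,\dots,r_{j-1},s_0,r_{j+1},\dots,r_m)$ where each $r_i\in S^\ty\cup\Sigma$ and $s_0\in S^\ty$. The key observation is that, because of the increasing-depth scanning order and the fact that closedness checks are only performed on consistent tables, every immediate child of $s$ that lies in $S^\ty$ already has a representative in $S^\ty$ of no greater depth (by Lemma~\ref{lema2}, since $\mathbb{T}(\ty)$ is closed). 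One then wants to replace each non-$\Sigma$ child by a representative of the same depth and descend recursively.

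The core of the argument is an induction on $\depth(s)$, or equivalently a downward traversal from the root of $s$. At the root, $s$ itself is the term being introduced, and one needs the position $p$ to have $\|p\|=\depth(s)$, so $p$ must follow a deepest branch of $s$. I would pick $p$ by repeatedly descending into a child that realizes the depth: starting at the root, choose the child subterm $u$ with $\depth(u)=\depth(s)-1$; since $s$ has an immediate subterm of maximal depth, such a $u$ exists. The subtlety is that at the first level $p'=\epsilon$ gives $s|_{p'}=s$, and we need $\depth(\tr_{\ty+1}(s))=\depth(s)$: this holds precisely because $s$ was added by a \emph{failed} closedness check, meaning $s\nsim t$ for every $t\in S^\ty$ of depth $\le\depth(s)$, so in $\mathbb{T}(\ty+1)$ the representative of $s$ is $s$ itself (no strictly-smaller-depth term is similar to it), giving $\depth(\tr_{\ty+1}(s))=\depth(s)$. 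For deeper prefixes $p'$, the subterm $s|_{p'}$ is a proper subterm of $s$ of depth $\ge 1$, hence — since $S^{\ty+1}$ is subterm closed — it already lies in $S^{\ty+1}$, indeed already in $S^\ty$ (only $s$ is newly added at this step), and by Lemma~\ref{lema2} applied to the closed table $\mathbb{T}(\ty)$ it has a representative of depth $\le\depth(s|_{p'})$; what remains is to guarantee we can choose the branch so that this representative has depth \emph{exactly} $\depth(s|_{p'})$.

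Here is where I would use the interplay between maximal-depth branches and representatives. If $u=s|_{p'}$ is a subterm lying on a deepest branch, then $\depth(u)=\depth(s)-\|p'\|$ and $u$ itself realizes the maximal depth among its siblings within its parent; any term similar to $u$ and of strictly smaller depth would, via Lemma~\ref{lele} (the congruence-type lemma for consistent tables), force the parent of $u$ to be similar to a strictly shallower term, and cascading upward this would contradict that $s$ had no representative of depth $<\depth(s)$. More precisely, I would argue contrapositively: if for some prefix $p'\ne p$ we had $\depth(\tr_{\ty+1}(s|_{p'}))<\depth(s|_{p'})$, then substituting this strictly-shallower representative into the parent context and applying Lemma~\ref{lele} repeatedly up to the root yields a term $t'\in S^\ty$ with $t'\sim s$ and $\depth(t')<\depth(s)$, contradicting the failure condition of the closedness check; so along some deepest branch all the prefixes have the equality property, and we take $p$ to be that branch. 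The main obstacle I anticipate is making the upward propagation rigorous: Lemma~\ref{lele} requires the substituted child to satisfy $\depth(\text{representative})\le\depth(\text{original})$ and to be $\sim_k$-similar for the relevant $k=\ell$, and one must check that replacing one child at a time keeps all intermediate terms in $S^\ty\cup\Sigma$ (using subterm-closedness of $S^\ty$ and closedness of $\mathbb{T}(\ty)$) and that the depth bookkeeping ($k$ decreasing by the hole depth at each level) stays within the range where $E$ has relevant contexts. Once that propagation is set up, choosing $p$ greedily along a maximal-depth branch and reading off the equality at each strict prefix completes the proof.
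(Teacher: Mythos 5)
Your overall strategy --- walk down a deepest branch of $s$, and at each node rule out a strictly shallower representative by substituting representatives for the children and invoking Lemma~\ref{lele} against the failure condition of the closedness check --- is the same as the paper's, and your base case ($\tr_{\ty+1}(s)=s$ because $s\nsim t$ for all $t\in S^{\ty}$ of depth at most $\depth(s)$) is correct. But the core contradiction is not set up correctly as written, in two places. First, the contrapositive you state --- ``if for \emph{some} prefix $p'$ the representative is strictly shallower, then substituting it and cascading upward yields $t'\sim s$ with $\depth(t')<\depth(s)$'' --- is false: replacing a single child by a shallower representative does not lower the depth of the parent if a \emph{sibling} still realizes the maximal depth. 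The hypothesis you must contradict at a node $x=s|_{p_i}$ is that \emph{every} deepest child of $x$ has a strictly shallower representative; only then does the term $y$ obtained by replacing all non-constant children by their representatives satisfy $\depth(y)<\depth(x)$ (with $y\sim x$ by Lemma~\ref{lele}). Your phrasing quantifies the negation the wrong way and so does not deliver the existence of a branch on which equality holds at every level.

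Second, you assert that the substituted term lies in $S^{\ty}$, which is precisely what cannot be taken for granted: $y=\sigma(q_1,\ldots,q_m)$ built from representatives lies in $S^{\ty+1}\cup X(S^{\ty+1})$ and may be in $X(S^{\ty+1})\setminus S^{\ty+1}$. The paper's proof splits into two cases. If $y\in S^{\ty+1}$, then $\depth(\tr_{\ty+1}(x))\leq\depth(y)<\depth(x)$ contradicts the top-down inductive invariant $\depth(\tr_{\ty+1}(s|_{p_i}))=\depth(s|_{p_i})$ at the parent node --- note that this makes the contradiction local, so no cascade up to the root is needed at all. If $y\in X(S^{\ty+1})$, one first shows $y\nsim z$ for every $z\in S^{\ty+1}$ with $\depth(z)\leq\depth(y)$ (otherwise $z\sim x$ by Lemma~\ref{lema1}, again contradicting the invariant), and then the increasing-depth order of the closedness scan --- which you mention at the outset but never deploy --- implies the algorithm would have added $y$, of strictly smaller depth, instead of $s$; that is the second contradiction. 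Without the correct quantification, the case split, and the scanning-order argument, the proof does not go through.
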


\begin{corollary}
\label{cor3}
Whenever the $\ty$-th execution step is a failed closedness check, the term introduced in $S^{\ty+1}$ is in $\cQ^{\ty+1}\setminus\cQ^{\ty}$ and its depth is at most $j$, where $j$ is the position in $\cQ^{\ty+1}$ of the newly introduced element  according to  ordering $\prec^\ty_{\mathtt{T}}.$
\end{corollary}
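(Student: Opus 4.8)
The plan is to derive the corollary as a direct consequence of Lemma \ref{lema8} together with the bookkeeping established in Lemmata \ref{lema2}--\ref{lema6} and Lemma \ref{lmm}. First I would recall what a failed closedness check does at step $\ty$: the algorithm finds $C_1\in\sigma_\hole\pair{S^\ty}$ and $s'\in S^\ty$ such that the new term $s:=C_1[s']$ satisfies $s\nsim t$ for every $t\in S^\ty$ with $\depth(t)\le\depth(s)$, and then puts $s$ into $S^{\ty+1}$. The very failure condition says that $s$ has no representative among the elements of $S^\ty$ of depth at most $\depth(s)$; by Lemma \ref{lema2} applied to the old table, every representative in $\cQ^\ty$ has depth at most that of the term it represents, so in particular $s$ is dissimilar to every element of $\cQ^\ty$. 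Hence in the new table $s$ becomes its own representative, i.e. $\tr_{\ty+1}(s)=s$ and $s\in\cQ^{\ty+1}\setminus\cQ^\ty$. This is the first assertion.

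Next I would bound $\depth(s)$. Apply Lemma \ref{lema8}: there is a position $p\in Pos(s)$ with $\|p\|=\depth(s)$ such that for every proper prefix $p'$ of $p$ one has $\depth(\tr_{\ty+1}(s|_{p'}))=\depth(s|_{p'})$. Reading along the path $p$ from the root, the subterms $s|_{p'}$ for the proper prefixes $p'$ of $p$ (with $\|p'\|$ ranging over $0,1,\dots,\depth(s)-1$) are pairwise distinct terms whose depths are $\depth(s),\depth(s)-1,\dots,1$ respectively, and each of these $\depth(s)$ subterms has a representative of equal depth in the table $\mathbb{T}(\ty+1)$ — in particular those representatives lie in $\cQ^{\ty+1}$ and are strictly decreasing in depth, hence pairwise distinct. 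Now invoke the ordering $\prec^\ty_{\mathtt{T}}$: it orders $\cQ^{\ty+1}$ first by depth, so the representatives of $s|_{p'}$ for $\|p'\|=\depth(s)-1,\dots,1$ (depths $1,\dots,\depth(s)-1$) all precede, in $\prec^\ty_{\mathtt{T}}$, the newly introduced element $s=\tr_{\ty+1}(s|_{\epsilon})$ of depth $\depth(s)$ (here $p'=\epsilon$ gives $s|_{p'}=s$ itself). That gives at least $\depth(s)-1$ elements of $\cQ^{\ty+1}$ strictly before $s$, so the position $j$ of $s$ in $\cQ^{\ty+1}$ under $\prec^\ty_{\mathtt{T}}$ satisfies $j\ge\depth(s)$, equivalently $\depth(s)\le j$, which is exactly the claimed bound.

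The one point that needs care — and which I expect to be the main obstacle — is the passage between the representatives "in $\mathbb{T}(\ty+1)$" guaranteed by Lemma \ref{lema8} and the set $\cQ^{\ty+1}$ ordered by $\prec^\ty_{\mathtt{T}}$: I must check that the $\depth(s)-1$ representatives of the strictly-smaller-depth subterms along $p$ genuinely are distinct elements of $\cQ^{\ty+1}$ (distinctness follows from their distinct depths, using Lemma \ref{lema2}'s depth bound) and that none of them coincides with $s$ or comes after it in $\prec^\ty_{\mathtt{T}}$ (immediate since $\prec^\ty_{\mathtt{T}}$ is depth-first and their depths are all $<\depth(s)$). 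Once this is pinned down, counting the elements of $\cQ^{\ty+1}$ that $\prec^\ty_{\mathtt{T}}$-precede $s$ yields $j-1\ge\depth(s)-1$, and the corollary follows. A secondary subtlety is that the statement of Lemma \ref{lema8} refers to $\tr_{\ty+1}$, so one should make sure the table $\mathbb{T}(\ty+1)$ is already closed when the depths of representatives are read off; since $LA^\ell$ re-runs the closedness loop, and the corollary only needs the depth equalities recorded by Lemma \ref{lema8} at the moment $s$ is inserted, this is consistent with the hypotheses of that lemma.
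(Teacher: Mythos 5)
Your proof is correct and follows essentially the same route as the paper's: the first assertion comes directly from the failure condition of the closedness check, and the depth bound comes from Lemma \ref{lema8}, reading off the $\depth(s)-1$ pairwise distinct representatives of strictly smaller depth along the maximal path and noting that they all precede $s$ in the depth-first ordering, so $j\geq\depth(s)$. The only (cosmetic) difference is that you argue from the statement of Lemma \ref{lema8} alone, whereas the paper invokes the full sequence of positions $p_0<\dots<p_{\depth(s)-1}$ constructed inside that lemma's proof; both yield the same count.
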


\begin{corollary}
\label{mmon}
$\depth(s)\leq n$ for all $s\in S^\ty$ which was introduced in the table by a failed closedness check.
\end{corollary}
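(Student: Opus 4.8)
The plan is to deduce Corollary~\ref{mmon} from Corollary~\ref{cor3} together with the fact, established in Section~\ref{sect5}, that $\cQ^\ty$ always has between $1$ and $n$ elements (this is the content of Corollary~\ref{cor2} as applied in the opening paragraph of the analysis section). First I would fix an element $s\in S^\ty$ that was introduced into the table by a failed closedness check, say at execution step $\ty_0<\ty$, so that $s$ first appears in $S^{\ty_0+1}$. By Corollary~\ref{cor3}, at that moment $s$ is a newly created representative, i.e. $s\in\cQ^{\ty_0+1}\setminus\cQ^{\ty_0}$, and moreover $\depth(s)\le j$, where $j$ is the position of $s$ in $\cQ^{\ty_0+1}$ under the ordering $\prec^{\ty_0}_{\mathtt{T}}$.

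The second step is simply to bound $j$. Since $j$ is a position in the enumeration of $\cQ^{\ty_0+1}$, we have $j\le |\cQ^{\ty_0+1}|$. By Corollary~\ref{cor2} (used as in the first paragraph of Section~\ref{sect5}, where it is noted that $\cQ^\ty$ has between $1$ and $n$ elements for every execution step $\ty$), $|\cQ^{\ty_0+1}|\le n$. Chaining these inequalities gives $\depth(s)\le j\le |\cQ^{\ty_0+1}|\le n$, which is exactly the claim. Since $s$ was an arbitrary element of $S^\ty$ introduced by a failed closedness check, and since the depth of $s$ as a term never changes after it is added, the bound $\depth(s)\le n$ holds for all such $s$ throughout the run of the algorithm.

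I do not expect a genuine obstacle here; Corollary~\ref{mmon} is a bookkeeping consequence of the two preceding corollaries. The only point requiring a little care is making sure the element $s$, once added to the $S$-component, is never removed (the algorithm only ever extends $S$, so this is immediate from the invariants stated just before Section~\ref{sect5}), and that its depth is the quantity being compared in Corollary~\ref{cor3} rather than, say, the depth at a later table --- but $\depth(s)$ is an intrinsic property of the term $s$, independent of the observation table, so there is nothing to reconcile. Thus the proof is essentially a two-line argument: invoke Corollary~\ref{cor3} to get $\depth(s)\le j$ with $j$ a position in $\cQ^{\ty+1}$, then invoke the $|\cQ^\ty|\le n$ bound coming from Corollary~\ref{cor2}.
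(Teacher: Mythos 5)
Your proof is correct and is essentially identical to the paper's: both derive $\depth(s)\le j$ from Corollary~\ref{cor3} and then bound $j\le|\cQ^{\ty}|\le n$ via the discussion following Corollary~\ref{cor2}. The extra remarks about $S$ never shrinking and $\depth(s)$ being intrinsic are harmless bookkeeping that the paper leaves implicit.
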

\begin{proof}
$\depth(s)\leq j$ by Cor. \ref{cor3}, and $j\leq n$ because $|\cQ^\ty|\leq n$ for all $\ty$. Thus $\depth(s)\leq n.$ 
\end{proof}
\begin{lemma}
\label{cor4}
Let $j$ be the position of the element  introduced in $\cQ^{\ty+1}$ by a failed closedness check. Then 
$\tpl(\cQ^{\ty+1})<\tpl(\cQ^\ty)$ and $\depth(\tty_{\ty+1}(j))<\depth(\tty_{\ty}(j)).$
\end{lemma}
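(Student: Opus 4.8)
The plan is to combine the structural information from Corollary \ref{cor3} with the defining properties of the enumeration order $\prec^\ty_{\mathtt T}$ and of the tuple $\tpl(\cdot)$. First I would set up notation: let $\ty$ be a failed closedness check, let $s$ be the term added to $S$, and let $j$ be the position of $s$ in $\cQ^{\ty+1}$ under $\prec^\ty_{\mathtt T}$. By Corollary \ref{cor3} we know $s\in\cQ^{\ty+1}\setminus\cQ^\ty$, so $|\cQ^{\ty+1}|=|\cQ^\ty|+1$ if $s$ is genuinely new with no previously existing representative at its similarity class, but more importantly $s$ did not belong to $\cQ^\ty$. The key point is that adding $s$ to $S$ as a representative cannot change the representative of any element that already had one: a failed closedness check only introduces a row, and for the terms already in $S$ the similarity classes and their $\prec_{\mathtt T}$-minimal elements are unaffected (the new row $s$ has $s\nsim t$ for all $t\in S^\ty$ with $\depth(t)\le\depth(s)$, so $s$ cannot become the representative of any pre-existing class of such terms, and for deeper terms the representative is still the old minimum). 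Hence $\cQ^{\ty+1}=\cQ^\ty\cup\{s\}$, and in the order $\prec^\ty_{\mathtt T}$, the elements $\tty_\ty(1),\ldots,\tty_\ty(m)$ keep their relative order with $s$ inserted at position $j$.

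Next I would analyze how this insertion transforms the tuple. Write $\cQ^\ty=\{r_1\prec^\ty_{\mathtt T}\cdots\prec^\ty_{\mathtt T} r_m\}$ with $m<n$ (if $m=n$ we would already have the desired automaton by Corollary \ref{cor2} and the algorithm would have terminated). Then $\tpl(\cQ^\ty)=(\depth(r_1),\ldots,\depth(r_m),\ell+1,\ldots,\ell+1)$. After inserting $s$ at position $j$ we get $\cQ^{\ty+1}=\{r_1,\ldots,r_{j-1},s,r_j,\ldots,r_m\}$, so $\tpl(\cQ^{\ty+1})=(\depth(r_1),\ldots,\depth(r_{j-1}),\depth(s),\depth(r_j),\ldots,\depth(r_m),\ell+1,\ldots,\ell+1)$. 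Componentwise: positions $1,\ldots,j-1$ are unchanged; at position $j$ the value goes from $\depth(r_j)$ (or $\ell+1$ if $j>m$) down to $\depth(s)$; and for $i>j$, the new $i$-th component is $\depth(r_{i-1})\le\depth(r_i)$ (old $i$-th component), since the $r$'s are in nondecreasing depth order — or $\ell+1$ if both are padding. So every component of $\tpl(\cQ^{\ty+1})$ is $\le$ the corresponding component of $\tpl(\cQ^\ty)$. For strict decrease at position $j$: since $s$ is inserted \emph{before} $r_j$ in $\prec^\ty_{\mathtt T}$ and the tie-break in $\prec^\ty_{\mathtt T}$ at equal depth is "created earlier", with $s$ created at step $\ty$ strictly later than $r_j$, we must have $\depth(s)<\depth(r_j)$ — this is exactly where $s$ precedes $r_j$ by the depth criterion rather than the temporal criterion. (If $j>m$ then $\depth(s)\le n<\ell+1$ provided $n\le\ell$; in the general case one uses $\depth(s)\le j\le n$ from Corollary \ref{cor3}, and $\ell+1$ is the padding value, and since $\depth(s)\le\ell$ always for terms in $S^{\ty+1}\subseteq\cT(Sk\cup\Sigma)_{[\ell]}$, we get $\depth(s)\le\ell<\ell+1$.) Either way the $j$-th component strictly decreases, giving $\tpl(\cQ^{\ty+1})<\tpl(\cQ^\ty)$, and also $\depth(\tty_{\ty+1}(j))=\depth(s)<\depth(r_j)=\depth(\tty_\ty(j))$.

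The main obstacle I anticipate is the bookkeeping around the enumeration order $\prec^\ty_{\mathtt T}$ and justifying rigorously that $s$ lands strictly before $r_j$ \emph{because of depth}, not merely tied in depth and earlier in time — the latter is impossible since $s$ is brand new at step $\ty$. This requires knowing that $\prec^\ty_{\mathtt T}$ orders first by depth, which is stated in the definition, so the argument is: $s$ at position $j$ means $s\prec^\ty_{\mathtt T} r_j$ (if $j\le m$), and since $r_j\in\cQ^{\ty'}$ for some $\ty'<\ty$ while $s\notin\cQ^{\ty'}$, the temporal tie-break would put $r_j$ before $s$, contradiction; hence the depth clause applies, i.e. $\depth(s)<\depth(r_j)$. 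The remaining work — verifying nondecreasing depth ordering of representatives to get the componentwise $\le$ for $i>j$, and handling the padding entries — is routine case analysis. I would also invoke Corollary \ref{mmon} (or Corollary \ref{cor3}) only if needed to bound $\depth(s)$, but the strict-decrease argument above does not actually require it.
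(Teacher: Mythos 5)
There is a genuine gap. Your whole argument rests on the claim that $\cQ^{\ty+1}=\cQ^\ty\cup\{s\}$, justified by saying that the new row $s$ ``cannot become the representative of any pre-existing class'' and that ``for deeper terms the representative is still the old minimum.'' That last assertion is false. The failed closedness check only guarantees $s\nsim t$ for $t\in S^{\ty}$ with $\depth(t)\leq\depth(s)$; it says nothing about terms of \emph{larger} depth, and since similarity against a deeper term is tested against the smaller set of contexts $E_{\pair{\ell-\depth(t)}}$, it is entirely possible that $s\sim r$ for some representative $r\in\cQ^{\ty}$ with $\depth(r)>\depth(s)$. In that case $s$ enters the set $\{u\in S^{\ty+1}\mid r\sim u\}$ and, having strictly smaller depth, satisfies $s\prec_{\mathtt{T}} r$, so $s$ becomes the new $\prec_{\mathtt{T}}$-minimum and \emph{replaces} $r$ as a representative; $\cQ^{\ty+1}$ need not contain $\cQ^{\ty}$, and its cardinality need not grow. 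The paper's proof splits into exactly these two cases: (2) pure insertion, which is what you prove and prove correctly (your argument for $\depth(s)<\depth(\tty_\ty(j))$ via the temporal tie-break matches the paper's $j=i'+1$ with $i'=\max\{i\mid\depth(\tty_\ty(i))\leq\depth(s)\}$), and (1) replacement of a representative of depth $k'$ sitting at some position $j'\geq j$, where the componentwise comparison is more delicate because the entries at positions $j<i\leq j'$ shift by one (one checks $\depth(\tty_{\ty}(i))\geq\depth(\tty_\ty(i-1))=\depth(\tty_{\ty+1}(i))$ there) while positions $>j'$ are unchanged. Your componentwise analysis does not cover this second configuration.

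To repair the proof you would either have to add the replacement case explicitly (as the paper does), or prove separately that a term added by a failed closedness check is dissimilar to \emph{every} element of $S^{\ty}$ --- which does not follow from the closedness condition and is not true in general.
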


\begin{theorem}
\label{tyury}
The number of failed closedness checks performed during the entire run of $LA^\ell$ is at most $n(n+1)/2$.
\end{theorem}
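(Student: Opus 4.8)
The strategy is to use the tuple-valued potential function $\tpl(\cQ^\ty) \in \{1,\ldots,\ell+1\}^n$ as a monovariant and count how many strict decreases it can undergo. First I would recall, from Lemma~\ref{cor4}, that each failed closedness check at step $\ty$ strictly decreases this tuple, i.e. $\tpl(\cQ^{\ty+1}) < \tpl(\cQ^\ty)$ in the componentwise partial order on $\mathbb{N}^n$, and moreover the decrease happens in the coordinate $j$ that is the position of the newly introduced representative, where $\depth(\tty_{\ty+1}(j)) < \depth(\tty_\ty(j))$. So it suffices to bound the total number of times the tuple can strictly decrease over the whole run; since $\tpl$ is non-increasing at \emph{every} execution step (this should be checked quickly for failed consistency checks and failed equivalence queries as well, using that those steps only add to $S$ and $E$ and preserve or refine representatives), the count of failed closedness checks is at most the length of the longest strictly decreasing chain starting from the initial tuple and ending at the minimal tuple $(1,\ldots,1)$.

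Next I would make the counting precise. Each coordinate $d_i$ of $\tpl(\cQ^\ty)$ lives in $\{1,\ldots,\ell+1\}$, but a sharper observation is available: by Corollary~\ref{mmon}, any representative introduced by a failed closedness check in position $j$ has depth at most $j$, so after such a step the $j$-th coordinate is at most $j$. More usefully, I would argue that coordinate $i$ can be strictly decreased by a closedness-driven step at most $i$ times — because each such decrease lands it at a value $\leq i$ (it being the depth of the newly-placed representative in position $\leq i$, and by Corollary~\ref{cor3} that depth is bounded by its position) while staying $\geq 1$, and each decrease is by at least one integer; more carefully, a decrease in coordinate $i$ at a closedness step produces a value at most $i$, and subsequent values in that coordinate remain $\geq 1$ and the sequence of values taken there (restricted to closedness steps) is strictly decreasing, hence has length at most $i$. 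Summing over $i$ from $1$ to $n$ gives $\sum_{i=1}^n i = n(n+1)/2$ as the bound on the number of failed closedness checks.

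I would then tie off the small gaps: (i) that $\tpl$ is genuinely non-increasing across failed consistency checks — here $S^{\ty+1}\supseteq S^\ty$ and $E^{\ty+1}\supseteq E^\ty$, and by the discussion after the $LA^\ell$ pseudocode the old representatives stay representatives of themselves (via Lemma~\ref{lema5} and Corollary~\ref{cor1}), so the ordering $\prec^\ty_{\mathtt{T}}$ on the retained representatives is unchanged and any genuinely new representatives only occupy previously-"$\ell+1$" slots, keeping every coordinate $\leq$ its old value; (ii) the analogous statement for failed equivalence queries, where $S$ grows by subterms of a counterexample and the same monotonicity of the representative set applies; and (iii) that within a single failed closedness step only one coordinate strictly decreases (the others are unchanged), which is exactly the content of Lemma~\ref{cor4} combined with the fact that a non-increasing tuple changing in exactly one coordinate changes in that coordinate strictly downward.

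The main obstacle I anticipate is item (i)–(ii): namely, rigorously justifying that the potential $\tpl$ never \emph{increases} at the non-closedness steps, and in particular that the indexing convention $\prec^\ty_{\mathtt{T}}$ is stable enough that "coordinate $i$" refers coherently to the same slot across many steps. Because $\prec^\ty_{\mathtt{T}}$ breaks ties in depth by \emph{order of creation}, a newly created representative with small depth can be inserted in the middle of the enumeration (the "$1.(b)$"/"$3.(c)$" cases in the commented-out enumeration analysis), shifting later representatives rightward; I need to check that such a shift still cannot make any coordinate's value increase — intuitively true because the inserted value is smaller than the one it displaces and everything to its right merely slides into slots that previously held larger-or-equal values or the sentinel $\ell+1$. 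Pinning down this "insertion is monotone" claim, ideally as a short lemma about inserting a value into a sorted-by-depth tuple padded with $\ell+1$'s, is the crux; once it is in hand the arithmetic $\sum_{i=1}^n i = n(n+1)/2$ is immediate.
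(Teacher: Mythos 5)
Your proposal is correct and follows essentially the same route as the paper's proof: both use $\tpl(\cQ^\ty)$ as a monovariant that strictly decreases at each failed closedness check (Lemma~\ref{cor4}), is non-increasing at every other step, and whose $j$-th coordinate is at most $j$ after a closedness-driven decrease, so that coordinate $j$ can account for at most $j$ such checks and the total is $\sum_{j=1}^n j = n(n+1)/2$. Your explicit attention to the stability of the enumeration $\prec^\ty_{\mathtt{T}}$ under insertions is a point the paper dismisses with ``it is easy to see,'' so filling it in is a refinement rather than a departure.
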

\subsection{Failed consistency checks}
The $\ty$-th execution step is a failed consistency check if the algorithm finds  $C\in E^{\ty}$ with $\depth_\hole(C)=i$, $s_1,s_2\in S^{\ty}$ with $\depth(s_1),\depth(s_2)\leq \ell-i-1$, and $C_1\in\sigma_\hole\pair{S^{\ty}}$, such that  $C[C_1[s_1]],$ $C[C_1[s_2]]\in\cT(Sk\cup\Sigma)_{[\ell]}$, $s_1\sim_k s_2$ where $k=\max\{\depth(s_1),\depth(s_2)\}+i+1$, and $T(C[C_1[s_1]])\neq T(C[C_1[s_2]])$. In this case, the context $C[C_1]$ is newly introduced in the component $E^{\ty+1}$ of the observation table $\mathbb{T}({\ty+1})$.

We will show that the number of failed consistency checks performed by the learning algorithm $LA^\ell$ has an upper bound which is a polynomial in $n$. 
 To prove this fact, we  rely on the following auxiliary notions:
\begin{itemize}
\item For $C,C'\in E^\ty$, we define $C\prec^\ty_{\mathtt{C}} C'$ if either $\depth_\hole(C)<\depth_\hole(C')$ or $\depth_\hole(C)=\depth_\hole(C')$ and there exists $\ty'<\ty$ such that $C\in E^{\ty'}$ but $C'\not\in E^{\ty'}$ (that is, $C$ became an experiment in the observation table before $C'$).
\item We define 
$\delta_{\ty}(s_1,s_2):=\min_{\prec_{\mathtt{C}}}\{C\in E^{\ty}\mid C\text{ $\ell$-distinguishes $s_1$ and $s_2$}\}$
for every $s_1,s_2\in S^{\ty}$ such that $s_1\nsim s_2$.
\item A nonempty subset $U$ of $E^{\ty}$ induces a partition of a subset $R$ of $S^{\ty}$ into equivalence classes $Q_1,\ldots,Q_m$ if the following conditions are satisfied:
\begin{enumerate}
\item $\bigcup_{j=1}^m Q_j=R$ and $Q_i\cap Q_j=\emptyset$ whenever $1\leq i\neq j\leq m$,
\item Whenever $1\leq i\neq j\leq m$, $s_1\in Q_i,$ and $s_2\in Q_j$, there exists $C\in U$ that $\ell$-distinguishes $s_1$ and $s_2$.
\item Whenever $s_1,s_2\in Q_j$ for some $1\leq j\leq m$, there is no $C\in U$ that $\ell$-distinguishes $s_1$ and $s_2$.
\end{enumerate}
\end{itemize}
Let $\cE^{\ty}:=\{\delta_{\ty}(s_1,s_2)\mid s_1,s_2\in S^{\ty},s_1\nsim s_2\}$. 
%
Since $\sim$ is not an equivalence, not every subset of $E^{\ty}$ induces a partition of $S^\ty$ into equivalence classes. However, the next lemma shows that $\cE^{\ty}$  induces a partition of $\cQ^{\ty}$ into at least $|\cE^{\ty}|$ classes.
\begin{theorem}
\label{conscheck}
If $\cE^{\ty}=\{C_1,\ldots,C_k\}$ with $C_1\prec_{\mathtt{C}}\ldots\prec_{\mathtt{C}} C_k$ then, for every $1\leq i\leq k$,  $\{C_1,\ldots,C_i\}$ induces a partition of $\cQ^\ty$ into at least $i$ classes.
\end{theorem}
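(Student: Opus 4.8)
The plan is to argue by induction on $i$, showing that $\{C_1,\ldots,C_i\}$ induces a partition of $\cQ^{\ty}$ into at least $i$ classes. First I would verify that $\{C_1,\ldots,C_i\}$ does induce a partition of $\cQ^{\ty}$ (in the technical sense defined just above the statement): conditions (1) and (3) are automatic, and for condition (2) I would need that whenever two representatives $r,r'\in\cQ^{\ty}$ are not $\ell$-distinguished by any $C\in\{C_1,\ldots,C_i\}$, they still lie in the same class. The subtle point is that $\sim$ is not transitive, so "induces a partition" is not free; here is where I would invoke Lemma~\ref{lema1} together with the fact that all representatives have controlled depth ($\depth(\tr(x))\le\depth(x)\le\ell$, via Lemma~\ref{lema2}), so that on the restricted set $\cQ^{\ty}$ the relation "not $\ell$-distinguished by $\{C_1,\ldots,C_i\}$" is transitive and hence an equivalence. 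For the base case $i=1$, since $C_1=\delta_{\ty}(s_1,s_2)$ for some $s_1\nsim s_2$ in $S^{\ty}$, and using Lemma~\ref{lmm} (so $\tr(s_1)\sim s_1$, $\tr(s_2)\sim s_2$ with depths bounded by those of $s_1,s_2$), I would show $\tr(s_1)$ and $\tr(s_2)$ are $\ell$-distinguished by $C_1$: by minimality of $C_1$ in $\prec_{\mathtt{C}}$, no context of smaller or equal hole-depth distinguishes $s_1,s_2$, and passing to representatives preserves the relevant inequalities on $\max\{\depth(\cdot),\depth(\cdot)\}$ so that the distinguishing witness $C_1$ survives. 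Hence $\cQ^{\ty}$ has at least $2$ classes.

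For the inductive step, assume $\{C_1,\ldots,C_{i-1}\}$ splits $\cQ^{\ty}$ into classes $Q_1,\ldots,Q_p$ with $p\ge i-1$; I must produce a $p+1$-st class, i.e. show that adding $C_i$ strictly refines the partition. Since $C_i=\delta_{\ty}(s_1,s_2)$ for some $s_1\nsim s_2$ in $S^{\ty}$, minimality of $C_i$ in $\prec_{\mathtt{C}}$ means $C_1,\ldots,C_{i-1}$ are exactly the contexts of $\cE^{\ty}$ that are $\prec_{\mathtt{C}}$-below $C_i$, and none of them $\ell$-distinguishes $s_1$ from $s_2$; therefore $s_1$ and $s_2$ — equivalently, after passing to representatives $r_1=\tr(s_1)$, $r_2=\tr(s_2)$ as in the base case — lie in the same class $Q_j$ of the $\{C_1,\ldots,C_{i-1}\}$-partition. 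But $C_i$ does $\ell$-distinguish $r_1$ from $r_2$, so adding $C_i$ breaks $Q_j$ into at least two pieces, giving at least $p+1\ge i$ classes. I would state this carefully as: the $\{C_1,\ldots,C_i\}$-partition refines the $\{C_1,\ldots,C_{i-1}\}$-partition, and at least one class (namely $Q_j$) is properly split.

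The main obstacle I anticipate is the bookkeeping around depths and the non-transitivity of $\sim$: I must be careful that when I replace $s_1,s_2\in S^{\ty}$ by their representatives $r_1,r_2\in\cQ^{\ty}$, the context $C_i=\delta_\ty(s_1,s_2)$ remains a valid $\ell$-distinguisher of $r_1,r_2$. This requires $C_i\in E_{\pair{\ell-\max\{\depth(r_1),\depth(r_2)\}}}$, which follows because $\depth(r_k)\le\depth(s_k)$ by Lemma~\ref{lmm}, so $\ell-\max\{\depth(r_1),\depth(r_2)\}\ge\ell-\max\{\depth(s_1),\depth(s_2)\}\ge\depth_\hole(C_i)$; and it requires $T(C_i[r_1])\ne T(C_i[r_2])$, which I would get from $r_k\sim s_k$ unpacked at the appropriate similarity level together with $T(C_i[s_1])\ne T(C_i[s_2])$ — here one must check the level $k$ in the definition of $\sim_k$ lines up, using that $C_i$ has hole-depth small enough that $C_i[r_k]$ and $C_i[s_k]$ both have depth $\le\ell$. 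A secondary subtlety is ensuring, in the inductive step, that $r_1$ and $r_2$ genuinely fall in a \emph{single} class $Q_j$ rather than being "on the boundary" of several — but this is precisely what the transitivity-on-$\cQ^{\ty}$ argument (Lemma~\ref{lema1} plus depth bounds) from the first paragraph buys us, so the whole proof hinges on setting that lemma up cleanly at the outset.
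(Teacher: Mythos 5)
Your skeleton (induction on $i$, establish that each prefix induces a partition, then count classes) matches the paper's, but the central step --- \emph{why} $\{C_1,\ldots,C_i\}$ induces a partition of $\cQ^{\ty}$ at all --- is not actually established by the mechanism you propose. You claim that Lemma~\ref{lema1} together with the bound $\depth(\tr(x))\leq\depth(x)\leq\ell$ makes the relation ``not $\ell$-distinguished by $\{C_1,\ldots,C_i\}$'' transitive on $\cQ^{\ty}$. It does not: Lemma~\ref{lema1} gives only a \emph{conditional} transitivity (the middle element must satisfy $\depth(x)\leq\max\{\depth(s),\depth(t)\}$), and knowing that all representatives have depth at most $\ell$ does not discharge that condition. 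Example~1 of the paper exhibits exactly this failure among terms of depth $\leq\ell$: $t_1\sim_\ell t_2$ and $t_2\sim_\ell t_3$ but $t_1\nsim_\ell t_3$, because a context of positive hole depth applies to the shallow pair $(t_1,t_3)$ but not to the pairs involving the deeper $t_2$. The same phenomenon can a priori occur for a proper prefix $\{C_1,\ldots,C_i\}$ of $\cE^{\ty}$, so your appeal to transitivity is a genuine gap, not bookkeeping.

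The idea you are missing is the one the paper's proof turns on: if two distinct representatives are \emph{not} separated by $\{C_1,\ldots,C_i\}$, then (being pairwise dissimilar by Lemma~\ref{lema4}) they must be $\ell$-distinguished by some later context $C_j$ with $j>i$, and since $C_{i+1}\prec_{\mathtt{C}} C_j$ implies $\depth_\hole(C_{i+1})\leq\depth_\hole(C_j)$, every element of every class of size $>1$ has depth at most $\ell-\depth_\hole(C_{i+1})$. Hence $T(C_{i+1}[t])\in\{0,1\}$ for \emph{all} such elements $t$, and refining by $C_{i+1}$ is refining by the honest value of $T(C_{i+1}[\cdot])$, which is an equivalence --- that is what makes the next prefix again induce a partition. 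A secondary soft spot is your counting step: from ``no $C_j$ with $j<i$ $\ell$-distinguishes $s_1,s_2$'' you infer the same for the representatives $r_1=\tr(s_1)$, $r_2=\tr(s_2)$, but since $\depth(r_k)\leq\depth(s_k)$, \emph{more} contexts apply to the pair $(r_1,r_2)$ than to $(s_1,s_2)$, so some earlier $C_j$ could already separate the representatives even though it did not separate $s_1,s_2$; this direction of the depth inequality works against you and needs an explicit argument (the paper instead counts via the order in which contexts were added by failed consistency checks). Also note that $\depth(\tr(s))\leq\depth(s)$ is Lemma~\ref{lema2}, not Lemma~\ref{lmm}.
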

 \begin{corollary}
 \label{cor5}
 For any $\ty$, $\cE^\ty$ has at most $n$ elements.
 \end{corollary}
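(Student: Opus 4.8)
The plan is to read off Corollary \ref{cor5} directly from Theorem \ref{conscheck}, combined with the uniform bound $|\cQ^\ty|\le n$ coming from Corollary \ref{cor2}.

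First I would handle the degenerate case: if $S^\ty$ contains no pair of dissimilar terms, then $\cE^\ty=\emptyset$ and there is nothing to prove. Otherwise, enumerate $\cE^\ty=\{C_1,\ldots,C_k\}$ with $C_1\prec_{\mathtt{C}}\cdots\prec_{\mathtt{C}} C_k$, so that $k=|\cE^\ty|$, and apply Theorem \ref{conscheck} with $i=k$: the whole set $\cE^\ty$ induces a partition of $\cQ^\ty$ into at least $k$ equivalence classes.

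Next I would invoke the elementary fact that a partition of a finite set into pairwise disjoint nonempty blocks can have at most as many blocks as the set has elements. Applied to the partition of $\cQ^\ty$ just obtained, whose number of blocks is at least $k$, this yields $k\le|\cQ^\ty|$. Finally, by Corollary \ref{cor2} the automaton $\cA(\mathbb{T}(\ty))$ has no more states than a minimal DCTA of $K(D(G_U))$ with respect to $\ell$, hence $|\cQ^\ty|\le n$; chaining the two inequalities gives $|\cE^\ty|=k\le|\cQ^\ty|\le n$, which is the claim.

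There is no genuine obstacle here: Theorem \ref{conscheck} carries all the weight, and what remains is bookkeeping. The only point deserving a moment's care is the reading of the phrase ``induces a partition of $\cQ^\ty$ into at least $i$ classes'': from conditions (1)--(3) in the definition of ``induces a partition'' one checks that the blocks are nonempty and pairwise distinct (distinct blocks being $\ell$-distinguished by some context of $\cE^\ty$), so that counting them against $|\cQ^\ty|$ is legitimate.
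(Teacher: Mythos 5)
Your proposal is correct and follows essentially the same route as the paper's own proof: apply Theorem \ref{conscheck} to bound $|\cE^\ty|$ by the number of classes of the induced partition of $\cQ^\ty$, then bound that by $|\cQ^\ty|\le n$. Your version merely makes explicit the counting step ($m\le|\cQ^\ty|$) and the source of the bound $|\cQ^\ty|\le n$, which the paper leaves implicit.
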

We will compute an upper bound on the number of failed consistency checks by examining the evolution of $\cE^{\ty}$ during the execution of  $LA^\ell.$ Initially, $\cE^0=\{\hole\}.$ 
 \begin{lemma}
\label{tmy4}
At any time during the execution of the algorithm, if $\cQ^\ty$ has $i\geq 2$ elements, then the hole depth of any context in $E^\ty$ is less than or equal to $i-2$.
\end{lemma}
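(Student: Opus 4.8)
The plan is to prove, by induction on the execution steps $\ty$ of $LA^\ell$, the slightly stronger statement that
$$M^\ty:=\max\{\depth_\hole(C)\mid C\in E^\ty\}\ \le\ \max\{0,\,|\cQ^\ty|-2\},$$
which for $|\cQ^\ty|\ge 2$ is precisely the assertion of the lemma and which, for $|\cQ^\ty|\le 2$, records the extra fact that then $E^\ty=\{\hole\}$. Two observations drive the induction. First, the component $E$ of the observation table is changed only by a failed consistency check, which appends a single context $C[C_1]$ with $\depth_\hole(C[C_1])=\depth_\hole(C)+1$ for some $C\in E^\ty$; hence $M^{\ty+1}\le M^\ty+1$ always. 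Second, $|\cQ^\ty|$ never decreases between consecutive execution steps: for a failed closedness check this is a consequence of Corollary~\ref{cor3} and Lemma~\ref{cor4} (the new term enters $\cQ^{\ty+1}\setminus\cQ^\ty$, at worst replacing an old representative of strictly larger depth), for a failed consistency check one has $\cQ^\ty\subseteq\cQ^{\ty+1}$ because refining $\sim$ by appending a context to $E$ leaves every old representative the minimum of its (shrunken) similarity class, and for a failed structural equivalence query monotonicity follows by adding the new subterms one at a time as in the enumeration analysis, using Lemmata~\ref{lema1} and~\ref{lema4}.

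The base case is immediate, since $E^0=\{\hole\}$ gives $M^0=0$. For the inductive step, a failed closedness check or a failed structural equivalence query leaves $E$ unchanged, so $M^{\ty+1}=M^\ty\le\max\{0,|\cQ^\ty|-2\}\le\max\{0,|\cQ^{\ty+1}|-2\}$ by monotonicity. The substantial case is a failed consistency check, which appends $C[C_1]$ with $\depth_\hole(C)=i$, so $M^{\ty+1}=\max\{M^\ty,i+1\}$. If $i+1\le M^\ty$ the maximum is unchanged and we finish as before, so assume $i=M^\ty$, i.e. the extended context is one of maximal hole depth. Then I would argue that the witnessing pair $s_1,s_2\in S^\ty$ already satisfies $s_1\sim s_2$ in $\mathbb{T}(\ty)$: the check imposes $s_1\sim_k s_2$ for $k=\max\{\depth(s_1),\depth(s_2)\}+i+1$, i.e. $T(C'[s_1])=T(C'[s_2])$ for all $C'\in E^\ty_{\pair{i+1}}$, and both $E^\ty_{\pair{i+1}}$ and $E^\ty_{\pair{\ell-\max\{\depth(s_1),\depth(s_2)\}}}$ coincide with $E^\ty$ itself, because no context of $E^\ty$ has hole depth above $M^\ty=i$ while the depth budget $\depth(s_1),\depth(s_2)\le\ell-i-1$ gives $\ell-\max\{\depth(s_1),\depth(s_2)\}\ge i+1$. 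Hence $s_1$ and $s_2$ share a representative in $\mathbb{T}(\ty)$, whereas the appended context $C[C_1]$ (of hole depth $i+1\le\ell-\max\{\depth(s_1),\depth(s_2)\}$) $\ell$-distinguishes them; therefore that representative class splits into at least two classes in $\mathbb{T}(\ty+1)$, giving $|\cQ^{\ty+1}|\ge|\cQ^\ty|+1$. When $M^\ty\ge 1$, the inductive hypothesis yields $|\cQ^\ty|\ge M^\ty+2$, so $|\cQ^{\ty+1}|\ge M^\ty+3=M^{\ty+1}+2$, as needed.

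The delicate remaining subcase is $M^\ty=0$ (so $E^\ty=\{\hole\}$) with a failed consistency check, where the split above only guarantees $|\cQ^{\ty+1}|\ge 2$ while $M^{\ty+1}=1$; one must therefore rule out $|\cQ^\ty|=1$. When $E^\ty=\{\hole\}$ the relation $\sim$ restricted to terms of depth $\le\ell$ is just equality of $T$-values, so $|\cQ^\ty|=1$ would force all of $S^\ty$ to carry a single value $v\in\{0,1\}$; a short computation then shows that the consistency test succeeding produces some $C_1[s_j]\in X(S^\ty)\setminus S^\ty$ with $T(C_1[s_j])\ne v$, i.e. $\mathbb{T}(\ty)$ is not closed. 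The plan is to examine how the observation table can arrive at such a $\mathbb{T}(\ty)$ — it is the initial table, or the outcome of a single closedness fix, or the outcome of inserting the subterms of a counterexample — and to verify in each case that the term of $T$-value $\ne v$ that is already present forces $|\cQ^\ty|\ge 2$; this control-flow bookkeeping, and the check that the hypothesis $i\ge 2$ is never vacuously violated, is the step I expect to be the hardest.
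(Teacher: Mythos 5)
Your overall strategy is the same as the paper's: induct on execution steps, note that only a failed consistency check extends $E$, and that it does so by a single context whose hole depth exceeds that of an existing context by exactly one, and then show that whenever the maximal hole depth actually increases, $|\cQ^\ty|$ increases as well. Your treatment of the main case ($M^\ty\geq 1$) is in substance the paper's argument: the paper realises the ``split'' via partitions of $R=\cQ^\ty\cup\{s_1,s_2\}$ induced by $E^\ty$ and $E^{\ty+1}$ (the machinery of Theorem~\ref{conscheck}) rather than via a representative class splitting, but the content is the same, and your observation that $E^\ty_{\pair{i+1}}=E^\ty$ when $i=M^\ty$, so that $s_1\sim s_2$ before the new context is added, is exactly the right point.

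The genuine gap is the boundary subcase that you yourself flag and leave as a plan: a failed consistency check with $E^\ty=\{\hole\}$, where you must rule out $|\cQ^\ty|=1$. Your intended route --- that $|\cQ^\ty|=1$ forces the table to be non-closed, and that this somehow cannot co-occur with a consistency failure --- does not go through, because the algorithm checks consistency \emph{before} closedness in each pass of the inner loop, so a consistency failure is perfectly reachable on a table that is not yet closed and has a single representative. Concretely, take $\ell=3$, $K(D(G_U))=\{\sigma(a),\sigma(\sigma(a))\}$ (e.g.\ $S\to a$, $S\to A$, $A\to a$), and let the first counterexample be $\sigma(\sigma(a))$; then $S^\ty=\{\sigma(a),\sigma(\sigma(a))\}$, $E^\ty=\{\hole\}$, both rows carry $T=1$ so $|\cQ^\ty|=1$, and the consistency check fires with $C=\hole$, $C_1=\sigma(\hole)$, since $T(\sigma(\sigma(a)))=1\neq 0=T(\sigma(\sigma(\sigma(a))))$. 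Afterwards $E^{\ty+1}$ contains $\sigma(\hole)$ while $|\cQ^{\ty+1}|=2$, which violates your strengthened invariant $M^\ty\leq\max\{0,|\cQ^\ty|-2\}$ and indeed the stated bound itself. So this subcase cannot be discharged by the bookkeeping you propose; you would need either to restrict the claim to points where the table is closed or to find a different invariant. For what it is worth, you have put your finger on precisely the spot where the paper's own proof is thinnest --- its case~2 invokes the induction hypothesis $\depth_\hole(C)\leq i-2$, which is unavailable when $i=1$ --- but identifying the soft spot is not the same as filling it, and as it stands your proof is incomplete there.
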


Let $\cE^\ty=\{C'_1,\ldots,C'_k\}$ before some execution step $\ty$ of the algorithm $LA^\ell$, where $C'_1\prec_{\mathtt{C}}\ldots\prec_{\mathtt{C}}C'_k$. Then $k\leq n$ by Cor. \ref{cor5}. We associate to every such $\cE^\ty$ the $n$-tuple $\tpl(\cQ^\ty)=(y_1,\ldots,y_n)\in\{0,1,\ldots,n-1\}^n$, where, for every $1\leq j\leq n$, $y_j$ is defined as follows:\par
- If $\cQ^\ty$ has at least $j+1$ elements then, if $i$ is the minimum integer such that $\{C'_1,\ldots,C'_i\}$ partitions $\cQ^\ty$ into at least $j+1$ classes then $y_j=\depth_\hole(C'_j)$. Since every $\{C'_1,\ldots,C'_i\}$ partitions $\cQ^\ty$ into at least $i$ classes (by Lemma \ref{conscheck}) and we assume that $\cE^\ty=\{C'_1,\ldots,C'_k\}$ partitions $\cQ^\ty$ into $|\cQ^\ty|\geq j+1$ classes, 
we conclude that such $i$ exists.\par
- otherwise $y_j=n-1$.\vskip .4em
For  $1\leq j\leq n$ we denote the $j$-th component of $\tpl(\cE^\ty)$ by $\xpm_\ty(j).$
Note that, for all $1\leq i\leq k$, 
$\depth_\hole(C'_i)\leq |\cQ^\ty|-2$ by Theorem \ref{tmy4},  and $|\cQ^\ty|\leq n$, hence $\depth_\hole(C'_i)\leq n-2$. Therefore, we can always distinguish the components $y_i$ of $\tpl(\cQ^\ty)$ that correspond to the defining case (1) from those 
in case (2).
\begin{lemma}
\label{lups}
$\xpm_\ty(j)\leq j-1$ whenever $2\leq j\leq n$ and $\xpm_\ty(j)\neq n-1.$
\end{lemma}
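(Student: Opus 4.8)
The plan is to track what the index $j$ means in the definition of $\tpl(\cE^\ty)$ and feed that through Lemma~\ref{tmy4}. Recall that if $\xpm_\ty(j)\neq n-1$, then by definition $\cQ^\ty$ has at least $j+1$ elements, and $\xpm_\ty(j)=\depth_\hole(C'_i)$ where $i$ is the minimum integer such that $\{C'_1,\ldots,C'_i\}$ partitions $\cQ^\ty$ into at least $j+1$ classes. (Here I am reading the displayed definition as $y_j=\depth_\hole(C'_i)$ for that minimal $i$; the ``$C'_j$'' in the excerpt should be $C'_i$.) So the whole statement reduces to bounding this particular $\depth_\hole(C'_i)$.

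First I would pin down $i$. Since $\{C'_1,\ldots,C'_{i-1}\}$ does \emph{not} partition $\cQ^\ty$ into $j+1$ classes (minimality of $i$) while by Theorem~\ref{conscheck} it partitions $\cQ^\ty$ into at least $i-1$ classes, and since adding one context $C'_i$ can only refine a partition, the jump to ``at least $j+1$ classes'' forces $i-1\le j$, i.e. $i\le j+1$. Actually I want the sharper bound $i\le j$: indeed $\{C'_1,\ldots,C'_i\}$ partitions into at least $i$ classes by Theorem~\ref{conscheck}, and it partitions into at least $j+1$ classes by choice of $i$; but I only need one of these. The cleanest route is: $\{C'_1,\dots,C'_{i-1}\}$ gives $\ge i-1$ classes and $<j+1$ classes, hence $i-1<j+1$, hence $i\le j+1$. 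Combined with $i\le$ (number of classes from $\{C'_1,\dots,C'_i\}$) which is $\ge j+1$... hmm, that gives $i\le |\cQ^\ty|$, not $i\le j$. Let me instead just use $i\le j+1$: that is all that is forced, and it suffices if I also control $|\cQ^\ty|$ correctly below.

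Next, apply Lemma~\ref{tmy4} to the context $C'_i\in E^\ty$: since $\cQ^\ty$ has $\ge j+1\ge 2$ elements... but to get a tight bound I want to apply Lemma~\ref{tmy4} \emph{at the moment $C'_i$ entered $E$}. At that earlier time $\ty'$, the automaton $\cQ^{\ty'}$ had exactly the number of classes that $\{C'_1,\ldots,C'_i\}$ distinguishes — or rather, $C'_i$ was added to resolve an inconsistency that created a new representative, so right after step $\ty'$ the set $\cQ^{\ty'+1}$ had at least $i$ elements but we need the count \emph{just before} or the count that $C'_i$ ``witnesses''. The key identity to establish is: when $C'_i$ is added, $\{C'_1,\dots,C'_i\}$ partitions $\cQ$ into at most $i+1$ (really $i$, by the structure of consistency failures, which add one representative) classes at that time, and $\depth_\hole(C'_i)\le(\text{that class count})-2$. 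Since $C'_i$ is minimal-among-$\{C'_1,\dots,C'_i\}$ to reach $j+1$ classes, that class count is exactly the target, giving $\depth_\hole(C'_i)\le (j+1)-2=j-1$.

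The main obstacle I anticipate is exactly this last bookkeeping step: relating $\depth_\hole(C'_i)$ at the current step $\ty$ to the number of classes it is responsible for via Lemma~\ref{tmy4}, which is stated about $|\cQ^\ty|$ rather than about the partition count. The honest worry is that $|\cQ^\ty|$ may be much larger than $j+1$, so a naive application of Lemma~\ref{tmy4} only gives $\depth_\hole(C'_i)\le n-2$, which is useless. The fix must use that $\depth_\hole$ is monotone along $\prec_{\mathtt C}$ and that contexts are added in increasing hole-depth order, so the value $\depth_\hole(C'_i)$ was already ``locked in'' at the step $\ty'$ where $C'_i$ appeared; at that step, an induction hypothesis (the lemma itself at time $\ty'$, or Theorem~\ref{conscheck} plus Lemma~\ref{tmy4} applied to $\cQ^{\ty'}$, whose size is at most the class count of $\{C'_1,\dots,C'_i\}$ after refinement) yields $\depth_\hole(C'_i)\le i-1\le j$... and to squeeze out the final $-1$ I will use that $C'_i$ \emph{strictly} refines the partition of $\{C'_1,\dots,C'_{i-1}\}$, so the class count it achieves is $\ge i$, but the hole-depth it carries is that of a context distinguishing two representatives of $\cQ^{\ty'}$ with $|\cQ^{\ty'}|\ge i+1$... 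I would push the induction on $\ty$ so that $\xpm_{\ty'}(j)\le j-1$ is available for the earlier, smaller table, and then argue $\xpm_\ty(j)\le\xpm_{\ty'}(j)$ because later steps only refine partitions and never increase the minimal hole-depth needed to reach a given number of classes. That monotonicity-in-$\ty$ claim, carefully stated, is the crux.
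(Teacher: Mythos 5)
Your overall strategy --- retreat to an earlier execution step where Lemma~\ref{tmy4} gives a useful bound, then use monotonicity of $\xpm_\ty(j)$ in $\ty$ --- is the same as the paper's, but the step that actually produces the bound $j-1$ is missing, and you flag it yourself as unresolved. The paper's pivot is not ``the step at which $C'_i$ entered $E$'' but the \emph{first} step $\ty'$ at which $\xpm_{\ty'}(j)\neq n-1$, i.e., the first step at which $\cQ$ has at least $j+1$ representatives. At the immediately preceding step one has $|\cQ^{\ty'-1}|\leq j$, so Lemma~\ref{tmy4} bounds the hole depth of every context already in $E^{\ty'-1}$ by $j-2$; the single context that may be added in the transition to step $\ty'$ has the form $C[C_1]$ with $C\in E^{\ty'-1}$ and $C_1\in\sigma_\hole\pair{S}$, hence hole depth at most $(j-2)+1=j-1$. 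That is where the final $-1$ comes from --- not from counting how much $C'_i$ refines the partition, and not from the bound $i\leq j+1$ you derive, which plays no role in the paper's argument.

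Your alternative of applying Lemma~\ref{tmy4} ``at the moment $C'_i$ entered $E$'' does not work as stated: at that moment $|\cQ|$ bears no fixed relation to $j$ (the context $C'_i$ may enter $E$ well before or well after $\cQ$ grows to $j+1$ elements), and your claim that $|\cQ^{\ty'}|$ is at most the class count of $\{C'_1,\ldots,C'_i\}$ is unsupported; so you only reach $\depth_\hole(C'_i)\leq j$ with a hand-waved ``squeeze out the final $-1$.'' Also, the monotonicity $\xpm_\ty(j)\leq\xpm_{\ty'}(j)$ for $\ty\geq\ty'$, which you identify as the crux, is the part the paper treats as immediate; the real content of the proof is the bound at the critical step $\ty'$, which your plan does not establish.
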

\begin{theorem}
\label{inconsistentthm}
If $\cQ^\ty$ has at least 2 elements then the number of failed consistency checks over the entire run of $LA^\ell$ is at most $n(n-1)/2.$
\end{theorem}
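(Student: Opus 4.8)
The plan is to mimic the treatment of failed closedness checks in Theorem~\ref{tyury}, using the tuple $\tpl(\cE^\ty)=(\xpm_\ty(1),\ldots,\xpm_\ty(n))$ as a monotone potential and the partial order $<$ on $\mathbb{N}^n$ introduced earlier. Concretely, I would establish two facts and then count: (i) every failed consistency check makes the potential strictly decrease, $\tpl(\cE^{\ty+1})<\tpl(\cE^\ty)$; and (ii) no execution step of $LA^\ell$ ever makes a component of the potential increase, i.e. $\xpm_{\ty+1}(j)\le\xpm_\ty(j)$ for all $j$ and all $\ty$. Given (i) and (ii), the number of failed consistency checks is at most the number of steps at which the potential strictly decreases, and Lemma~\ref{lups} turns that into a polynomial bound.

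For (ii) the only steps that are not failed consistency checks are failed closedness checks and failed structural equivalence queries; both leave $E$ unchanged and only enlarge $S$, hence only enlarge $\cQ^\ty$ (whose size never decreases over the run). With $E$ fixed and $S$ enlarged one has $\cE^\ty\subseteq\cE^{\ty+1}$ and the $\prec_{\mathtt{C}}$-order of $\cE^{\ty+1}$ restricts to that of $\cE^\ty$, and since adding elements to $\cQ$ (with the old contexts still present) can only increase the number of classes cut out by any prefix of contexts, the index of the prefix first reaching $j{+}1$ classes does not grow and hence $\xpm_{\ty+1}(j)\le\xpm_\ty(j)$; here \ref{conscheck} is used to know that these prefixes behave as expected. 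For (i): at a failed consistency check the new context $C[C_1]$ has $\depth_\hole(C[C_1])=\depth_\hole(C)+1=:d$, and the very fact that the check fired --- $s_1\sim_k s_2$ for $k=\max\{\depth(s_1),\depth(s_2)\}+d$ but $T(C[C_1[s_1]])\neq T(C[C_1[s_2]])$ --- means that no context already in $E^\ty$ of hole depth at most $d$ $\ell$-distinguishes $s_1$ from $s_2$, whereas $C[C_1]$ does. Thus inserting $C[C_1]$ strictly refines the partition of $\cQ^{\ty+1}$ obtainable from contexts of hole depth at most $d$, so some component $\xpm(j)$ strictly drops (either from $n-1$ to a genuinely active value, or within its active range), giving $\tpl(\cE^{\ty+1})<\tpl(\cE^\ty)$.

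For the count, recall that $\xpm_\ty(j)=n-1$ exactly while $|\cQ^\ty|\le j$, and that $|\cQ^\ty|$ never decreases; so each component $j$ equals $n-1$ on an initial stretch of the run and, once it becomes active, stays in $\{0,\ldots,j-1\}$ by Lemma~\ref{lups} and is non-increasing by~(ii). Combined with Lemma~\ref{tmy4} and Corollary~\ref{cor5}, which pin down how contexts of small hole depth must already have discriminated the states present, this should give that component $j$ can be made to strictly decrease at most $j-1$ times over the entire run, so that the number of strict decreases, hence of failed consistency checks, is at most $\sum_{j=1}^n(j-1)=n(n-1)/2$. I expect the main obstacle to lie exactly here: on the one hand, proving carefully that a single newly inserted context produces precisely the one strict drop of~(i) and never nudges any component upward when $\cQ$ grows; and on the other hand, the bookkeeping needed to rule out that a component is both activated out of $n-1$ and then also exhausts its full active range $\{0,\ldots,j-1\}$ by consistency checks --- which is what separates the target bound $n(n-1)/2$ from the looser $n(n+1)/2$ obtained by the naive count. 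The hypothesis that $\cQ^\ty$ reaches at least two elements is what makes the potential meaningful: if $|\cQ^\ty|$ stays $1$ the table is always consistent and no failed consistency check occurs.
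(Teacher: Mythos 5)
Your plan is essentially the paper's proof: the same potential $\tpl(\cE^\ty)$, the same two facts (monotone non-increase at every step, strict decrease at every failed consistency check), and the same appeal to Lemma~\ref{lups} to bound the components. The one step you explicitly leave open --- why component $j$ contributes only $j-1$ strict drops rather than $j$, i.e.\ why the bound is $n(n-1)/2$ and not the naive $n(n+1)/2$ --- is closed in the paper by a single observation you are missing: for $j\geq 2$ the component $\xpm_\ty(j)$ can never reach $0$. Indeed, the only context of hole depth $0$ is $\hole$ itself, and $\hole$ partitions $\cQ^\ty$ into at most two classes (by the value of $T(q)\in\{0,1\}$ on representatives), so producing $j+1\geq 3$ classes always requires a context of hole depth at least $1$. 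Hence the minimum value of the tuple is $(0,1,\ldots,1)$, component $j\geq 2$ ranges over $\{1,\ldots,j-1\}\cup\{n-1\}$ (a set of $j$ values), and a non-increasing trajectory through it has at most $j-1$ strict decreases; summing $\sum_{j=2}^{n}(j-1)$ gives $n(n-1)/2$. With that observation inserted, your argument coincides with the paper's.
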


\subsection{Failed structural equivalence queries} 
Every failed structural equivalence query yields a counterexample which increases the number of representatives in $\cQ^\ty$. Thus 
\begin{theorem}
\label{failstruct}
The number of failed structural equivalence queries is at most $n$.
\end{theorem}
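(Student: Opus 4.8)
The plan is to show that every \emph{failed} structural equivalence query strictly increases the number of states of the DFTA that $LA^\ell$ currently maintains, and then to read the bound off Corollary~\ref{cor2}, which guarantees that this number always lies between $1$ and $n$. Throughout, let $\ty_1<\ty_2<\cdots$ be the execution steps at which $LA^\ell$ poses a structural equivalence query inside its main loop; at each such step $\mathbb{T}(\ty_i)=(S^{\ty_i},E^{\ty_i},T,\ell)$ is closed and consistent, so $\cA(\mathbb{T}(\ty_i))$ is well defined and $|\cQ^{\ty_i}|\in\{1,\dots,n\}$ by Corollary~\ref{cor2}. The only other equivalence query is the initial one, posed with the trivial grammar $(\{\tS\},\Sigma,\emptyset,\tS)$ before any table exists; if it succeeds there is nothing to prove, and if it fails then $K(D(G_U))_{[\ell]}\neq\emptyset$, so a minimal DCTA has at least one state and $n\geq 1$.

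The heart of the argument is the following claim, which I would prove first: \emph{if the query posed at step $\ty_i$ fails, then $|\cQ^{\ty_{i+1}}|>|\cQ^{\ty_i}|$}. Let $t$ be the counterexample returned at step $\ty_i$, so $\depth(t)\le\ell$ and $t$ lies in exactly one of $\cL(\cA(\mathbb{T}(\ty_i)))_{[\ell]}$ and $K(D(G_U))_{[\ell]}$. Between steps $\ty_i$ and $\ty_{i+1}$ the algorithm only enlarges $S$ and $E$: it adds $t$ together with all its subterms of depth at least $1$ to $S$, and may add experiments while repairing consistency; hence $t\in S^{\ty_{i+1}}$, $S^{\ty_i}\subseteq S^{\ty_{i+1}}$, $E^{\ty_i}\subseteq E^{\ty_{i+1}}$, and $\hole\in E^{\ty_i}$. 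Since $\sigma_\hole\pair{\cdot}$ is monotone and any element of $X(S^{\ty_i})$ lies in $S^{\ty_{i+1}}\cup X(S^{\ty_{i+1}})$ (according to whether or not it belongs to $S^{\ty_{i+1}}$), the domain $E^{\ty_i}[S^{\ty_i}\cup X(S^{\ty_i})_{[\ell]}]$ of the $T$ of $\mathbb{T}(\ty_i)$ is contained in that of the $T$ of $\mathbb{T}(\ty_{i+1})$. Consequently, Theorem~\ref{thm1} applied to $\mathbb{T}(\ty_{i+1})$ shows that $\cA(\mathbb{T}(\ty_{i+1}))$ is consistent with the $T$ of $\mathbb{T}(\ty_i)$ on terms of depth at most $\ell$.

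Now I would argue by contradiction. Suppose $|\cQ^{\ty_{i+1}}|\le|\cQ^{\ty_i}|$. Then $\cA(\mathbb{T}(\ty_{i+1}))$ is a DFTA with at most $|\cQ^{\ty_i}|$ states that is consistent with the $T$ of $\mathbb{T}(\ty_i)$ on terms of depth at most $\ell$, so Theorem~\ref{oops} applied to $\mathbb{T}(\ty_i)$ yields $\cL(\cA(\mathbb{T}(\ty_i)))_{[\ell]}=\cL(\cA(\mathbb{T}(\ty_{i+1})))_{[\ell]}$. On the other hand, applying Theorem~\ref{thm1} to $\mathbb{T}(\ty_{i+1})$ with the context $\hole\in E^{\ty_{i+1}}$ and the term $t\in S^{\ty_{i+1}}$ (valid since $\depth(\hole[t])=\depth(t)\le\ell$) shows $t\in\cL(\cA(\mathbb{T}(\ty_{i+1})))_{[\ell]}$ iff $T(t)=1$ iff $t\in K(D(G_U))_{[\ell]}$. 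Combining the two equivalences gives $t\in\cL(\cA(\mathbb{T}(\ty_i)))_{[\ell]}$ iff $t\in K(D(G_U))_{[\ell]}$, contradicting the choice of $t$. This proves the claim.

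It remains to count. Let $p$ be the number of failed structural equivalence queries; if $p=0$ we are done, so assume $p\ge 1$, in which case the initial query fails and $p-1$ further failed queries occur at loop steps $\ty_1<\cdots<\ty_{p-1}$. Each failed loop query is followed (after finitely many closedness and consistency repairs, bounded by Theorems~\ref{tyury} and \ref{inconsistentthm}) by a further equivalence query, so a step $\ty_p>\ty_{p-1}$ exists. Applying the claim to the failures at $\ty_1,\dots,\ty_{p-1}$ gives $1\le|\cQ^{\ty_1}|<|\cQ^{\ty_2}|<\cdots<|\cQ^{\ty_p}|\le n$, i.e.\ $p$ distinct values in $\{1,\dots,n\}$, so $p\le n$. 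The main obstacle is the claim itself: although ``a failed equivalence query adds a state'' sounds immediate, making it precise needs the uniqueness half of Theorem~\ref{oops} together with Theorem~\ref{thm1}, plus the somewhat delicate bookkeeping that both $S$ and $E$ grow between two consecutive equivalence queries, so one must verify that $\cA(\mathbb{T}(\ty_{i+1}))$ still qualifies as a small automaton consistent with the \emph{earlier} table's $T$ before Theorem~\ref{oops} can be invoked.
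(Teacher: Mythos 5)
Your proposal is correct and follows essentially the same route as the paper: each failed structural equivalence query forces the state count of the constructed automaton to increase strictly (via Theorem~\ref{thm1} applied to the enlarged table together with the uniqueness part of Theorem~\ref{oops}), and since that count stays in $\{1,\dots,n\}$ by Corollary~\ref{cor2}, at most $n$ failures can occur. Your write-up is in fact somewhat more careful than the paper's, in particular in verifying that $\cA(\mathbb{T}(\ty_{i+1}))$ is consistent with the earlier table's $T$ before invoking Theorem~\ref{oops}.
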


\subsection{Space and time complexity}
We are ready now to express the space and time complexity of $LA^\ell$ in terms of the following parameters:\par
- $n$ = the number of states of a minimal DFCA for the language of structural descriptions of the unknown grammar with respect to $\ell$,\par
- $m$ = the maximum size of a counterexample returned by a failed structural equivalence query,\par
- $p$ = the cardinality of the alphabet $\Sigma$ of terminal symbols, and\par
- $d$ = the maximum rank (or arity) of the symbol $\sigma\in Sk.$ \vskip .3em
First, we determine the space needed by the observation table. The number of elements in $S^\ty$ is initially 0 (i.e., $|S^0|=0$) and is increased either by a failed closedness check or by a failed structural membership query. By Theorem \ref{tyury}, the number of failed closedness checks is at most $n(n+1)/2$, and each of them adds one element to $S$. By Theorem  \ref{failstruct}, the number of failed structural equivalence queries is at most $n$. A failed structural equivalence query which produces a counterexample $t$ with $\sz(t)\leq m$, adds at most $m$ terms to $S^\ty$. Thus, $|S^\ty|\leq n(n+1)/2+n\,m=O(m\,n+n^2)$ and $|S^\ty\cup \Sigma|=O(m\,n+n^2+p)$,  therefore $|\sigma_\hole\pair{S^\ty}|\leq \sum_{j=0}^{d-1}(j+1)\,|S^\ty\cup\Sigma|^j=O((d+1)\,(m\,n+n^2+p)^d)$ and $|X(S)|\leq \sum_{j=1}^d|S^\ty\cup\Sigma|^j=O(d\,(m\,n+n^2+p)^d).$ Thus $S^\ty\cup X(S^\ty)_{[\ell]}$ has $O(d\,(m\,n+n^2+p)^d)$ elements.
By Theorem \ref{inconsistentthm}, there may be at most $n(n-1)/2$ failed consistency checks, and each of them adds a context to $E^\ty$. Thus $E^\ty$ has $O(n^2)$ elements and $E^\ty[S^\ty\cup X(S^\ty)_{[\ell]}]$ has $O(n^2d\,(m\,n+n^2+p)^d)$ elements. By Lemma \ref{lups}, $\depth_\hole(C)\leq n-1$ for all $C\in E^\ty$. We also know that, if $s\in S^\ty$, then $\depth(s)\leq m$ if it originates from a failed structural equivalence query, and $\depth(s)\leq n$ if it originates from a failed closedness check (by Cor. \ref{mmon}). Therefore $\depth(s)\leq\max(m,n)$ for all $s\in S^\ty$, and thus $\depth(x)\leq 1+\max(m,n)\leq 1+m+n$ for all $x\in S^\ty\cup X(S^\ty)$ and $\depth(t)\leq m+2\,n$ for all $t\in E^\ty[S^\ty\cup X(S^\ty)_{[\ell]}].$ Since the number of positions of such a term $t$ is $\sum_{j=0}^{m+2\,n}d^j=O((m+2\,n+1)d^{m+2\,n})$, we conclude that the total space occupied by an observation table at any time is $O\left(n^2(m\,n+n^2+p)^d(m+2\,n+1)d^{m+2\,n+1}\right).$

Next, we examine the time complexity of the algorithm by looking at the time needed to perform each kind of operation. 

Since the consistency checks of the observation table are performed in  a {\bf for} loop which checks the result produced by $s_1\sim_k s_2$ (where $s_1,s_2\in S^\ty$) in increasing order of $k$, the result produced by $s_1\sim_k s_2$ can be reused in checking $s_1\sim_{k+1} s_2$ and so the corresponding elements in the rows of $s_1$ and $s_2$ are compared only once. Thus, the total time needed to check if the observation table is consistent involves at most $(|S^\ty|\cdot (|S^\ty|-1)/2)\cdot |E^\ty|\cdot (1+|\sigma_\hole\pair{S^\ty}|)$, comparisons.
As $\sigma_\hole\pair{S^\ty}$ has  $O(d\,(m\,n+n^2+p)^d)$ elements, a consistency check of the  table takes $O((m\,n+n^2)^2 n^2 d\,(m\,n+n^2+p)^d)=O(n^2 d\,(m\,n+n^2+p)^{d+2})$ time.
As there are at most $(n\,(n+1)/2+1)\,(n+1)=O(n^3)$ consistency checks, the total time needed to check if the  table is consistent is $O(n^5 d\,(m\,n+n^2+p)^{d+2}).$

Checking if the observation table is closed takes at most $|S^\ty|^2\cdot |\sigma_\hole\pair{S^\ty}|\cdot |E^\ty|$ time, which is 
$O((m\,n+n^2)^2 d\,(m\,n+n^2+p)^d n^2)=O(n^2d\,(m\,n+n^2+p)^{d+2}).$

Extending an observation table $\mathbb{T}(\ty)$ with a new element in $S^{\ty+1}$ requires the addition of $\sum_{k=2}^d (2^{k-1}-1)=2^d-d-1$ contexts to $\sigma_\hole\pair{S^{\ty+1}}\setminus \sigma_\hole\pair{S^\ty}$, thus the addition of at most $2^d-d$ new rows for the new elements of $S^{\ty+1}\cup X(S^{\ty+1})$ in the observation table $\mathbb{T}(\ty+1).$ This extension requires 
at most
$(2^d-d)\cdot|E^\ty|\cdot(1+|\sigma_\hole\pair{S^\ty}|)=O(n^2d\,(2^d-d)\,(m\,n+n^2+p)^{d})$ membership queries. The number of elements added to $S^\ty$ as a result of a failed structural equivalence query is at most $m.$ As there will be at most $n$ failed structural equivalence queries and at most $n(n+1)/2$ failed closedness checks, the maximum number of elements added to $S^\ty$ is $n(n+1)/2+m\,n=O(m\,n+n^2).$ Thus the total time spent on inserting new elements in the $S$-component of the observation table is $O(n^2d\,(2^d-d)\,(m\,n+n^2)(m\,n+n^2+p)^d)$. 
 Adding a context to $E^\ty$ requires at most $|S^\ty+X(S^\ty)_{[\ell]}|=O(d\,(m\,n+n^2+p)^d)$ membership  queries. These additions are performed only by failed consistency checks, and there are at most $n(n-1)/2$ of them. Thus, the total time spent to insert new contexts in the $E$-component of the observation table is $O(n^2d\,(m\,n+n^2+p)^d).$
We conclude that the total time spent to add elements to the components $S$ and $E$ of the observation table is $O(n^2d\,(2^d-d)\,(m\,n+n^2)(m\,n+n^2+p)^d)$, which is polynomial.
 
The identification of the representative $\tr_\ty(s)$ for every $s\in S^\ty$ can be done by performing $((|S^\ty|)(S^\ty-1)/2)\,|E^\ty|=O((m\,n+n^2)^2n^2)$ comparisons.

Thus, all DFCAs $\cA(\mathbb{T}(\ty))$ corresponding to consistent and closed observation tables $\mathbb{T}(\ty)$ can be constructed in time  polynomial in $m$ and $n$. Since the algorithm encounters at most $n$ consistent and closed observation tables, the total running time of the algorithm is polynomial in $m$ and $n$.
\section{Conclusions and acknowledgments}
\label{sect6}
We have presented an algorithm, called $LA^\ell$, for learning cover context-free grammars from structural descriptions of languages of interest. $LA^\ell$ is an adaptation of Sakakibara's algorithm $LA$ for learning context-free grammars from structural descriptions, by following a methodology similar to the design of Ipate's algorithm $L^\ell$ as a nontrivial adaptation of Angluin's algorithm $L^*$. Like $L^*$, our algorithm synthesizes  a minimal deterministic cover automaton consistent with an observation table maintained via a learning protocol based on what is called in the literature a ``minimally adequate teacher'' \cite{Angluin:87}. And again, like algorithm $L^*$, our algorithm is guaranteed to synthesize the desired automaton in time  polynomial in $n$ and $m$, where $n$ is its number of states and $m$ is the maximum size of a counterexample to a  structural equivalence query. As the size of a minimal finite cover automaton is usually much smaller than that of a minimal automaton that accepts that language, the algorithm $LA^\ell$ is a  better choice than algorithm $LA$ for applications where we are interested only in an accurate characterisation of the structural descriptions with depth at most $\ell$.

This work has been supported by CNCS IDEI Grant PN-II-ID-PCE-2011-3-0981 ``Structure and computational difficulty in combinatorial
optimization: an interdisciplinary approach.''

\bibliographystyle{abbrv}
\bibliography{bibliography}
\newpage
\section*{Appendix}
\section{Pseudocode of algorithm $LA$}
\label{LA}
\begin{tabbing}
Set $S=\emptyset$ and $E=\{\bullet\}$\\
let $G':=(\{\tS\},\Sigma,\emptyset,\tS)$\\
check if $G'$ is structurally equivalent with $G_U$\\
{\bf if} answer is {\tt yes} {\bf then} halt and output $G'$\\
{\bf if}\=\ answer is {\tt no} with counterexample $t$ {\bf then}\\
\>add $t$ and all its subterms with depth at least 1 to $S$\\
\>construct the observation table $(S,E,T)$ using structural membership queries\\
\>{\bf rep}\={\bf{eat}}\\
\>\ \ {\bf while} $(S,E,T)$ is not closed or not consistent\\
\>\>{\bf if}\=\ $(S,E,T)$ is not consistent {\bf then}\\
\>\>\>fi\=nd $s_1,s_2\in S,C\in E$, and $C_1\in\sigma_\hole\pair{S}$ such that\\
\>\>\>\>$row(s_1)=row(s_2)$ and $T(C[C_1[s_1]])\neq T(C[C_1[s_2]])$\\
\>\>\>add $C[C_1]$ to $E$\\
\>\>\>extend $T$ to $E[S\cup X(S)]$ using structural membership queries\\
\>\>{\bf if} $(S,E,T)$ is not closed {\bf then}\\
\>\>\>find $s_1\in X(S)$ such that $row(s_1)\neq row(s)$ for all $s\in S$\\
\>\>\>add $s_1$ to $S$\\
\>\>\>extend $T$ to $E[S\cup X(S)]$ using structural membership queries\\
\>\ \ /* $(S,E,T)$ is now closed and consistent */ \\
\>\ \ let\=\ $G':=G(\cA(S,E,T))$\\
\>\ \ make the structural equivalence query between $G'$ and $G_U$\\
\>\ \ {\bf if}\=\ the reply is {\tt no} with a counterexample $t$ {\bf then}\\
\>\>add $t$ and all its subterms with depth at least 1 to $S$\\
\>\>extend $T$ to $E[S\cup X(S)]$ using structural membership queries\\
\>{\bf until} the reply is {\tt yes} to the structural equivalence query between $G'$ and $G_U$\\
\>halt and output $G'$.
\end{tabbing}

\section{Proof of Lemma \ref{lema1}}
Suppose  $s\sim_k x$ and $x\sim_k t.$ By definition of $\sim_k$, we have 
\begin{itemize}
\item[] $T(C[s])=T(C[x])$ for all $C\in E_{\pair{k-\max\{\depth(s),\depth(x)\}}}$, and
\item[] $T(C[x])=T(C[t])$ for all $C\in E_{\pair{k-\max\{\depth(x),\depth(t)\}}}$.
\end{itemize}
Let $m:=\max\{\depth(s),\depth(t)\}.$ Since $\depth(x)\leq m$, it follows that for every $C\in  E_{\pair{k-m}}$ we also have $C\in E_{\pair{k-\max\{\depth(s),\depth(x)\}}}$ and $C\in E_{\pair{k-\max\{\depth(x),\depth(t)\}}}$. Thus $T(C[s])=T(C[x])=T(C[t])$ for all $C\in  E_{\pair{k-m}}$. Hence $s\sim_k t.$
\section{Proof of Lemma \ref{lele}}
Let $I=\{i_1,\ldots,i_p\}=\{i\in\{1,\ldots,m\}\mid s_i,t_i\in S$ and $s_i\sim_k t_i\}$. 
If $I=\emptyset$ then $s=t$ and the result follows from the reflexivity of $\sim_k$. 
If $I\neq\emptyset$, let $x_0:=s$, and $x_j:=x_{j-1}[t_{i_j}]_{i_j}$ for $1\leq j\leq p$.
For all $1\leq j\leq p$ we have
$$\left.\begin{array}{r}
s_{i_j},t_{i_j}\in S\\ 
s_{i_j}\sim t_{i_j}\ (\text{by induction hypothesis})\\
x_{j-1}[\hole]_{i_j}\in\sigma_\hole\pair{S} 
\end{array}\right\}\Rightarrow x_{j-1}=x_{j-1}[s_{i_j}]_{i_j}\sim_k x_{j-1}[t_{i_j}]_{i_j}=x_j$$
because the observation table $(S,E,T,\ell)$ is consistent. Thus $x_0\sim_k x_1$, \ldots, $x_{p-1}\sim_k x_p$, and $\depth(x_0)\leq\depth(x_1)\leq\ldots\leq\depth(x_{p-1})\leq\depth(x_p).$ Repeated applications of Lemma \ref{lema1} yield $x_0\sim_k x_p.$ But $x_0=s$ and $x_p=t$, thus $s\sim_k t.$
\section{Proof of Lemma \ref{lema2}}
If $x\in S$ then $x$ has a representative since $\{s\in S\mid x\sim s\}\neq\emptyset$ and we can take $\tr(x)=\min_{\prec_{\mathtt{T}}}\{s\in S\mid x\sim s\}$. Then $\tr(x)\preceq_{\mathtt{T}} x$, which implies $\depth(\tr(x))\leq \depth(x).$ If $x\in X(S)$ then, since the observation table is closed, there exists $s\in S$ with $x\sim s$ and $\depth(s)\leq \depth(x).$ $x\sim s$ and $s\in S$ imply $\tr(x)\preceq_{\mathtt{T}} s$, hence $\depth(\tr(x))\leq \depth(s)$. Thus $\depth(\tr(x))\leq \depth(x)$ because $\depth(s)\leq \depth(x).$ 
\section{Proof of Lemma \ref{lema4}}
Suppose $r_1=\tr(x_1)$ and $r_2=\tr(x_2)$ for some $x_1,x_2\in S\cup X(S)$. By Lemma~\ref{lema2}, $r_1,r_2\in S$ and $\depth(r_1)\leq \depth(x_1)\leq\max\{\depth(x_1),\depth(r_2)\}.$ 
Since $x_1\sim r_1$ and $r_1\sim r_2$, Lemma~\ref{lema1} implies $x_1\sim r_2$, thus $r_2\in \{s\in S\mid x_1\sim s\}$ and $r_1=\min_{\preceq_{\mathtt{T}}}\{s\in S\mid x_1\sim s\}\preceq_{\mathtt{T}} r_2$. By a similar argument, we learn that $r_2\preceq_{\mathtt{T}} r_1$. From $r_1\preceq_{\mathtt{T}} r_2$ and $r_2\preceq_{\mathtt{T}} r_1$ we conclude that $r_1=r_2.$ 
\section{Proof of Lemma \ref{lema6}}
Let $x\in S\cup X(S)$ and $C_1\in\sigma_\hole\pair{S}$. The fact that $(S,E,T,\ell)$ is closed implies $\tr(x)\in S$, thus $C_1(\tr(x))\in S\cup X(S)$ and therefore $\tr(C_1[\tr(x)])\in S.$ We can choose $s:=\tr(C_1[\tr(x)])\in S$ for which $\tr(s)=s$, by Lemma \ref{lema5}. 
\section{Proof of Lemma \ref{lem6}}
We provide a proof by contradiction. Assume $\depth(C_1[s])>\depth(r).$
Since $C_1[s]\in S\cup X(S)$ and $(S,E,T,\ell)$ is closed, $\tr(C_1[s])\in S$, $\depth(\tr(C_1[s]))\leq \depth(C_1[s])$  (by Lemma \ref{lema2}),  $\tr(C_1[s])\sim C_1[s]$, and $C_1[s]\sim r$. Thus $\tr(C_1[s])\sim r$ by Lemma~\ref{lema1}.
Since $r,\tr(C_1[s])\in\{\tr(x)\mid x\in S\cup X(S)\}$, we have $r=\tr(C_1[s])$ by Lemma \ref{lema4}. Thus $\depth(r)=\depth(\tr(C_1[s]))\leq\depth(C_1[s])$, which yields a contradicton. 

\section{Proof of Lemma \ref{lmm}}
By induction on the depth of $x$. If $\depth(x)=1$ then 
$\delta^*(x)=\tr(x)\sim x$ and $\depth(\delta^*(x))=\depth(\tr(x)) \leq \depth(x)$ by Lemma \ref{lema2}. 

If $\depth(x)>1$ then $x=\sigma(s_1,\ldots,s_m)$ with $s_1,\ldots,s_m\in S\cup\Sigma$, and $\delta^*(x)=\tr(\sigma(q_1,\ldots,q_m)),$ where $q_i=\delta^*(s_i)$ for $1\leq i\leq m.$ Let $I:=\{i\mid s_i\not\in \Sigma\}.$ Then, by induction hypothesis for all $i\in I$, $q_i\sim s_i$ and $\depth(q_i)\leq\depth(s_i)$. 
Thus
$$\begin{array}{rl}
\left.\begin{array}{l}
\forall i\in I,\ \depth(q_i)\leq\depth(s_i)\\
\forall i\in \{1,\ldots,m\}\setminus I,\ q_i=s_i
\end{array}\right\}&
\Rightarrow \depth(\sigma(q_1,\ldots,q_m))\leq\depth(\sigma(s_1,\ldots,s_m))=\depth(x),\\
\delta^*(x)=\tr(\sigma(q_1,\ldots,q_m))&\Rightarrow\depth(\delta^*(x))\leq\depth(\sigma(q_1,\ldots,q_m)),\ \text{by Lemma 2.}
\end{array}$$
Hence $\depth(\delta^*(x))\leq \depth(x)$ follows from $\depth(\delta^*(x))\leq\depth(\sigma(q_1,\ldots,q_m))\leq\depth(x).$ 

To prove  $\delta^*(x)\sim x$, we 
\comment{reason as follows. Suppose $I=\{i_1,\ldots,i_p\}$, and 
let $x_0:=x$, and $x_j:=x_{j-1}[q_{i_j}]_{i_j}$ for $1\leq j\leq p$. 
For all $i_j\in I$ we have
$$\left.\begin{array}{r}
s_{i_j},q_{i_j}\in S\\ 
s_{i_j}\sim q_{i_j}\ (\text{by induction hypothesis})\\
x_j[\hole]_{i_j}\in\sigma_\hole\pair{S} 
\end{array}\right\}\Rightarrow x_{j-1}=x_j[s_{i_j}]_{i_j}\sim x_j[q_{i_j}]_{i_j}=x_j$$
because the observation table $(S,E,T,\ell)$ is consistent. Also, $\depth(x_j)\geq\depth(x_{j+1})$ whenever $0\leq j< p$ because, by induction hypothesis, $\depth(s_{i_j})\geq\depth(q_{i_j})$ for all $j\in I$. Hence $x=x_{i_0}\sim x_{i_p}=\sigma(q_1,\ldots,q_m)$ by repeated applications of Lemma~\ref{lema1}. }
notice that $x=\sigma(s_1,\ldots,s_m)\sim\sigma(q_1,\ldots,q_m)$ follows from Lemma \ref{lele}.
Thus
$$\left.\begin{array}{l}
\delta^*(x)=\tr(\sigma(q_1,\ldots,q_m))\sim \sigma(q_1,\ldots,q_m),\\
\sigma(q_1,\ldots,q_m)\sim x,\\
\depth(\sigma(q_1,\ldots,q_m))\leq \depth(x)\leq\max\{\depth(x),\depth(\delta^*(x))\}
\end{array}\right\}\Rightarrow \delta^*(x)\sim x\ \text{by Lemma \ref{lema1}}.$$
\section{Proof of Theorem \ref{thm1}}
Let $s\in S\cup X(S)$ and $C\in E$ such that $\depth(C[s])\leq\ell$. We proceed by induction on the hole depth of $C$. 
If $\depth_\hole(C)=0$ then $C=\hole$ and $C[s]=s$ has $\depth(s)\leq\ell.$  By Lemma \ref{lmm}, $\delta^*(s)\sim s$ and $\depth(\delta^*(s))\leq\depth(s).$ Thus, since $\hole\in E$ and $\depth(s)\leq\ell$, $T(\delta^*(s))=1$ if and only if $T(s)=1.$
By definition of $\cA(\mathbb{T})$, $\delta^*(s)\in \cQ_{\mathtt{f}}$ if and only if $T(\delta^*(s))=1$. Hence $\delta^*(s)\in \cQ_{\mathtt{f}}$ if and only if $T(s)=1.$

If $\depth_\hole(C)=m>0$ then $\depth(C[s])\leq \ell$ implies $m\leq \ell-\depth(s)$ and $C\in E_{\pair{m}}.$ Since $E$ is $\hole$-prefix closed, there exist $C'\in E_{\pair{m-1}}$ and $C_1\in\sigma_\hole\pair{S}$ such that $C=C'[C_1]\in E_{\pair{m}}.$ Let $t=\delta^*(C_1[s]).$ Then $\depth(t)\leq\depth(C_1[s])$ by Lemma \ref{lmm}, thus $\depth(C'[t])\leq \depth(C'[C_1[s]])=\depth(C[s])\leq \ell$, and we learn from the induction hypothesis for $C'$ that $\delta^*(C'[t])\in \cQ_{\mathtt{f}}$ if and only if $T(C'[t])=1.$ Since 
$$\left.\begin{array}{ll}
\delta^*(t)=t&\text{by Corollary \ref{cor1}}\\
t=\delta^*(C_1[s])\quad&\text{by definition}
\end{array}\right\}\Rightarrow \delta^*(C'[t])=\delta^*(C'[C_1[s]])=\delta^*(C[s]),$$
we have $\delta^*(C[s])\in\cQ_{\mathtt{f}}\Leftrightarrow \delta^*(C'[t])\in \cQ_{\mathtt{f}}\Leftrightarrow T(C'[t])=1.$
Therefore, it suffices to show that $T(C'[t])=1$ if and only if $T(C[s])=1$. By Lemma \ref{lmm}, $t\sim C_1[s]$ and $\depth(t)\leq\depth(C_1[s])$, thus $\depth(C'[t])\leq\depth(C'[C_1[s]])=\depth(C[s])\leq\ell.$
Hence,  since $C'\in E$ and $t\sim C_1[s]$,  $T(C'[t])=1$ if and only if $T(C'[C_1[s]])=1$.  
\section{Proof of Theorem~\ref{oops}}
Let $\cA'=(\cQ',Sk\cup\Sigma,\cQ'_{\mathtt{f}},\delta')$ and $f:\cQ\to\cQ'$ defined by $f(q)=\delta'^*(q)$ for all $q\in \cQ$. We show that $f$ is injective. If $q_1,q_2\in \cQ$ such that $q_1\neq q_2$ then $T(C[q_1])\neq T(C[q_2])$ for some $C\in E_{\pair{\ell-\max\{\depth(q_1),\depth(q_2)\}}}$. Since $\cA'$ is consistent with   $T$, exactly one of $\delta'^*(C[q_1])$ and $\delta'^*(C[q_2])$ is in $\cQ'_{\mathtt{f}}$. Hence $f(q_1)\neq f(q_2).$ 
Since $f$ is injective and $\cQ'$ has at most the same number of states as $\cQ$, $f$ is bijective. Thus $\cQ'=f(\cQ)=\{\delta'^*(q)\mid q\in\cQ\}.$

Next, we show that $\cQ'_{\mathtt{f}}=\{f(q)\mid q\in\cQ_{\mathtt{f}}\}.$ By Theorem \ref{thm1}, $\delta^*(q)\in \cQ_{\mathtt{f}}$ if and only if $T(q)=1.$ By Corollary \ref{cor1}, $\delta^*(q)=q$ for all $q\in\cQ$. Thus $\cQ_{\mathtt{f}}=\{q\in\cQ\mid T(q)=1\}.$ Similarly, since $\cA'$ is consistent with $T$, for every $q\in \cQ$, $T(q)=1$ if and only if $\delta'^*(q)\in\cQ'_{\mathtt{f}}.$ By definition, $\delta'^*(q)=f(q).$ Thus, $\cQ'_{\mathtt{f}}=\{\delta'^*(q)\mid q\in\cQ$ and $T(q)=1\}=\{f(q)\mid q\in\cQ_{\mathtt{f}}\}.$

We prove by induction on the depth of $x\in\cT(Sk\cup\Sigma)_{[\ell]}\setminus\Sigma$ that, if $\delta^*(x)=q$ and $\delta'^*(x)=f(q')$ then the following statements hold:
\begin{enumerate}
\item $\depth(q)\leq\depth(x)$,
\item  $\depth(q')\leq \depth(x)$,
\item if $m=\ell-\depth(x)+\max\{\depth(q),\depth(q')\}$, then $q\sim_m q'$.
\end{enumerate}
In the base case, $\depth(x)=1$
and $q=\tr(x)$. By Lemma \ref{lema2}, $q\sim x$ and $\depth(q)\leq \depth(x)$, thus $\depth(q)$ can only be $1$. $\delta'^*(x)=f(q')$ implies $\delta'^*(C[x])=\delta'^*(C[q'])$ for all $C\in E_{\pair{\ell-\max\{\depth(q'),\depth(x)\}}}$. Since $\cA'$ is consistent with $T$ on $\cT(Sk\cup\Sigma)_{[\ell]}$, this implies  $T(C[x])=1$ if and only if $T(C[q'])=1.$ Therefore  $x\sim q'$. From  $\tr(x)\sim x$, $x\sim q'$, and $\depth(x)=1\leq\max\{\tr(x),q'\}$, we learn by Lemma \ref{lema1} that $q'\sim \tr(x)=q.$  
Then $q=q'$ by Lemma \ref{lema4}, because $q,q'\in \cQ=\{\tr(s)\mid s\in S\}$ and $q\sim q'$. Thus $\depth(q')= \depth(q)\leq\depth(x).$ In this case, $m=\ell$ and statement 3 obviously holds because $\sim_\ell$
is reflexive.

In the induction step, we assume that all three statements hold for all terms $s\in \cT(Sk\cup\Sigma)_{[k]}\setminus\Sigma$ with $k\geq 1$. Let $x\in \cT(Sk\cup\Sigma)$ with $\depth(x)=k+1.$ Then $x=\sigma(x_1,\ldots,x_p)$ with $\depth(x_i)\leq k$ for $1\leq i\leq p.$ Let $I:=\{i\mid 1\leq i\leq p$ and $x_i\not\in\Sigma\}$,  and $q,q',q_i,q'_i\in\cQ$ such that $\delta^*(x)=q$,  $\delta'^*(x)=f(q')$, and $q_i=\delta^*(x_i)$, $\delta'^*(x_i)=f(q'_i)$ for all $i\in I$.

Let $y:=\sigma(s_1,\ldots,s_p)$ where $s_i:=x_i$ if $x_i\in \Sigma$ and $s_i:=q_i$ otherwise. Then $y\in S\cup X(S)$ and $q=\tr(y)$, thus $q\sim y$ and $\depth(q)\leq\depth(y)$ by Lemma \ref{lema2}. Also $\depth(y)\leq \depth(x)$ because $y=\sigma(s_1,\ldots,s_p),$ $x=\sigma(x_1,\ldots,x_p)$, and
\begin{itemize}
\item $\depth(s_i)=\depth(q_i)\leq \depth(x_i)$ for all $i\in I$, by induction hypothesis, 
\item $s_i=x_i$ for all $i\in\{1,\ldots,p\}\setminus I$, hence $\depth(s_i)=\depth(x_i)$ for all $i\in\{1,\ldots,p\}\setminus I$.
\end{itemize}
Thus $\depth(q)\leq\depth(x)$ follows from  $\depth(q)\leq\depth(y)$ and $\depth(y)\leq\depth(x).$

To show $\depth(q')\leq\depth(x)$, we reason as follows. $\delta'^*(q')=f(q')=\delta'^*(x)=\delta'_p(\sigma)(\delta'^*(x_1),\ldots,\delta'^*(x_p))$. Since $\delta'^*(x_i)=\delta'^*(q'_i)$ for all $i\in I$, and $\delta'^*(x_i)=x_i$ for all $i\in\{1,\ldots,p\}\setminus I$, we learn that $\delta'^*(q')=\delta'^*(z)$ where $z=\sigma(t_1,\ldots,t_p)$ with $t_i:=q'_i$ if $i\in I$ and $t_i:=x_i$ if $i\in\{1,\ldots,p\}\setminus I$.  Note that $z\in S\cup X(S)$ and $\delta'^*(C[q'])=\delta'^*(C[z])$ for all $C\in E_{\pair{\ell-\max\{\depth(q'),\depth(z)\}}}$. Thus $T(C[q'])=1$ if and only if $T(C[z])=1$ because $\cA'$ is consistent with $T$ on $\cT(Sk\cup\Sigma)_{[\ell]}$. Therefore $q'\sim z$, and since $z=C_1[q'_i]$ with $C_1=z[\hole]_i\in\sigma_\hole\pair{S}$ and $q'_i\in S$ for any $i\in I$, we can apply Lemma~\ref{lem6} to learn that $\depth(q')\leq\depth(z).$
Also
\begin{itemize}
\item for all $i\in I$, $\depth(t_i)=\depth(q'_i)\leq \depth(x_i)$ by induction hypothesis, and
\item for all $i\in \{1,\ldots,p\}\setminus I$, $t_i=x_i$, thus $\depth(t_i)=\depth(x_i)$,
\end{itemize}
therefore $\depth(z)=\depth(\sigma(t_1,\ldots,t_p))\leq\depth(\sigma(x_1,\ldots,x_p))=\depth(x).$ From $\depth(q')\leq\depth(z)$ and $\depth(z)\leq\depth(x)$ we learn $\depth(q')\leq\depth(x).$

Let $m=\ell-\depth(x)+\max\{\depth(q),\depth(q')\}$.
We prove $q\sim_m q'$ by contradiction. 
If $q\nsim_m q'$  there exists $C\in E_{\pair{\ell-\depth(x)}}$ such that $T(C[q])\neq T(C[q']).$ Then $q\sim y$ and $\depth(q)\leq\depth(y)\leq\depth(x)$, thus $\depth(C[q])\leq\depth(C[y])\leq\depth(C[x])\leq\ell$ and  $T(C[q])=T(C[y])$. 
Also, $q'\sim z$ and $\depth(q')\leq\depth(z)\leq\depth(x)$, thus $\depth(C[q'])\leq\depth(C[z])\leq\depth(C[x])\leq \ell$ and $T(C[q'])=T(C[z]).$ Thus $T(C[y])\neq T(C[z]).$ On the other hand, by induction hypothesis, $q_i\sim_{m_i} q'_i$ for all $i\in I$, where $m_i=\ell-\depth(x_i)+\max\{\depth(q_i),\depth(q'_i)\}.$
Let's assume $I=\{i_1,\ldots,i_r\}$,  $C_1:=y[\hole]_{i_1}$, and $C_{j+1}:=C_{j}[q'_{i_j}][\hole]_{i_{j+1}}$ for all $1\leq j< r.$ 
Then $C_j\in\sigma_\hole\pair{S}$ and $C_j[q_{i_j}]\sim_{m_{i_j}}C_j[q'_{i_j}]$ for all $1\leq j\leq r$, because the observation table is consistent. Therefore $T(C'_j[C_j[q_{i_j}]])=T(C'_j[C_j[q'_{i_j}]])$ whenever $1\leq j\leq r$ and  $C'_j\in E_{\pair{\ell-\depth(x_{i_j})-1}}$. Since $\depth(x)=1+\max\{\depth(x_{i_j})\mid 1\leq j\leq r\}$, we have
$C\in  E_{\pair{\ell-\depth(x_{i_j})-1}}$, thus $T(C[C_j[q_{i_j}]])=T(C[C_j[q'_{i_j}]])$ for all $1\leq k\leq r$.
 Note that
\begin{align*}
T(C[y])&=T(C[C_1[q_{i_1}]])=T(C[C_1[q'_{i_1}]])=T(C[C_2[q_{i_2}]])=T(C[C_2[q'_{i_2}]])=\ldots\\
&=T(C[C_r[q_{i_r}]])=T(C[C_r[q'_{i_r}]])=T(C[z])
\end{align*}
which yields a contradiction.

Finally, we prove that $\cL(\cA(\mathbb{T}))_{[\ell]}=\cL(\cA')_{[\ell]}$. Let $x\in \cT(Sk\cup\Sigma)_{[\ell]}$ and $q,q'\in \cQ$ such that $\delta^*(x)=q$ and $\delta'^*(x)=\delta'^*(q').$ Then $q\sim_m q'$ where $m=\ell-\depth(x)+\max\{\depth(q),\depth(q')\}.$ Since $\depth(x)\geq \max\{\depth(q),\depth(q')\}$ and $\hole\in E$, $T(q)=T(q')\in\{0,1\}.$  $\cA(\mathbb{T})$ is consistent with $T$ on $\cT(Sk\cup\Sigma)_{[\ell]}$ and $\delta^*(q)=q$, thus $q\in \cQ_{\mathtt{f}}$ if and only if $T(q)=1.$ $\cA'$ is also consistent with $T$ on $\cT(Sk\cup\Sigma)_{[\ell]}$, thus $\delta'^*(q')\in \cQ'_{\mathtt{f}}$ if and only if $T(q')=1.$ Since $T(q)=T(q')$, we have $q\in\cQ_{\mathtt{f}}$ if and only if $f(q')\in\cQ'_{\mathtt{f}}.$ Thus $\delta^*(x)\in\cQ_{\mathtt{f}}$ if and only if $\delta'^*(x)\in\cQ_{\mathtt{f}}'$. That is,
$x\in \cL(\cA(\mathbb{T}))_{[\ell]}\text{ if and only if }x\in \cL(\cA')_{[\ell]}.$
\section{Proof of Corollary \ref{cor2}}
Let $\cA'$ be a minimal DCTA of $K(D(G_U))$ with respect to $\ell$. Then $\cA'$ is consistent with  $T$ on $\cT(Sk\cup\Sigma)_{[\ell]}$ and has $n$ states. Since $n\leq N$, by Theorem~\ref{oops}, $n=N$ and $\cL(\cA)_{[\ell]}=\cL(\cA')_{[\ell]}=K(D(G_U))_{[\ell]}.$ Thus $\cA$ is a minimal DCTA of $K(D(G_U))$  with respect to $\ell$. 

\section{Proof of Lemma \ref{lema8}}
We prove by induction on $i$ that for every $i\in\{0,\ldots,\depth(s)-1\}$ there exists a sequence of positions $p_0<p_1<\ldots <p_i$ from $Pos(s)$ such that, for all $0\leq j\leq i$, the following statements hold:
\begin{enumerate}
\item[](L1): $\|p_j\|=j$ and $\depth(s|_{p_j})=\depth(s)-j$,  
\item[](L2): $\depth(\tr_{\ty+1}(s|_{p_j}))=\depth(s|_{p_j})$.
\end{enumerate}

For $i=0$ we reason as follows: Since $s$ has been introduced in $S^{\ty+1}$ as a result of a failed closedness check, $s\nsim t$ for all $t\in S^{\ty}$ with $\depth(t)\leq\depth(s).$ Then $s$ becomes a new element of the set $\cQ^{\ty+1}$, $\tr_{\ty+1}(s)=s$ and, if we choose $p_0=\epsilon$, the sequence of positions $p_0$ fulfils  requirements  (L1) and (L2). 

For the inductive step, assume the condition holds for $0\leq i<\depth(s)-1$, that is, there exists a sequence of positions 
$p_0<\ldots<p_i$ from $Pos(s)$ which fulfils requirements (L1) and (L2) for all $0\leq j\leq i$. We show that this sequence can be extended with a position $p_{i+1}\in Pos(s)$ such that  requirements (L1) and (L2) hold for $j=i+1$. Let $x:=s|_{p_i}.$ Then $\depth(\tr_{\ty+1}(x))=\depth(x)$ and, since $\depth(x)=\depth(s)-i$ and $i<\depth(s)-1$, we have $\depth(x)>1$. Therefore, we can write $x=\sigma(x_1,\ldots,x_m)$  such that $I:=\{j\in\{1,\ldots,m\}\mid \depth(x_j)\geq 1\}\neq\emptyset.$

Assume, by contradiction, that no such position $p_{i+1}$ exists. Let $q_j:=\tr_{\ty+1}(x_j)$ for all $j\in I$, and $y=\sigma(y_1,\ldots,y_m)$ where $y_j:=q_j$ if $j\in I$ and $y_j:=x_j$ otherwise. Then $y\in S^{\ty+1}\cup X(S^{\ty+1})$,
 $q_j\sim x_j$ and $\depth(q_j)<\depth(x_j)$ for all $j\in I$. It follows that $\depth(y)<\depth(x)$, and
$x\sim y$ in $\mathbb{T}(\ty+1)$, by Lemma \ref{lele}. We distinguish two cases:
\begin{enumerate}
\item $y\in S^{\ty+1}$. Then $\depth(\tr_{\ty+1}(x))\leq \depth(y)<\depth(x)$, which is a contradiction.
\item $y\in X(S^{\ty+1})$. Then $y\nsim z$ for all $z\in S^{\ty+1}$ with $\depth(z)\leq\depth(y)$, because:
\begin{itemize}
\item[] If there exists $z\in S^{\ty+1}$ with $\depth(z)\leq\depth(y)$ such that $y\sim z$, then $x\sim z$ (by Lemma \ref{lema1}) and $\depth(z)<\depth(x)$, which contradicts $\depth(\tr_{\ty+1}(x))=\depth(x)$.
\end{itemize}
As $\depth(y)<\depth(x)=\depth(s|_{p_i})\leq\depth(s)$, $y$ would be introduced in $S^{\ty+1}$ instead of $s$ as the result of a failed closedness check. This also provides a contradiction.
\end{enumerate}
Thus, there exists a sequence of positions $p_0<\ldots<p_{\depth(s)-1}$ from $Pos(s)$ such that requirements (L1) and (L2) hold for all $j\in\{0,1,\ldots, \depth(s)-1\}$. 
It follows that the statement of this lemma holds for $p=p_{\depth(s)-1}.$
\section{Proof of Lemma \ref{cor4}}
Let $r$ be the representative newly introduced  in $\cQ^{\ty+1}$ at position $j$ (that is, $r=\tty_{\ty+1}(j)$), $k:=\depth(r)$, and 
$i':=\max\{i\mid \depth(\tty_{\ty}(i))\leq k\}.$ Then $j=i'+1$ and we distinguish two situations.
\begin{enumerate}
\item If $r$ replaces a representative with depth $k'$ at position $j'$ in $\cQ^\ty$ then $k<k'$, $i'< j'$ and $j=i'+1$. Thus $j\leq j'$ and
\begin{itemize}
\item if $1\leq i<j$ then $\tty_{\ty}(i)=\tty_{\ty+1}(i)$, 
\item $\depth(\tty_\ty(j))>k=\depth(\tty_{\ty+1}(j))$, 
\item if $j<i\leq j'$ then $\depth(\tty_{\ty}(i))\geq\depth(\tty_\ty(i-1))=\depth(\tty_{\ty+1}(i))$, 
\item if $j'<i\leq n$ then $\tty_\ty(i)=\tty_{\ty+1}(i)$.
\end{itemize}
Hence $\tpl(\cQ^{\ty+1})<\tpl(\cQ^\ty).$
\item Otherwise, $r$ is newly introduced at position $j=i'+1$ in $\cQ^{\ty+1}$ and all elements of $\cQ^\ty$ are preserved in $\cQ^{\ty+1}$. If $|\cQ^\ty|=m$ then
\begin{itemize}
\item if $1\leq i< j$ then $\tty_{\ty}(i)=\tty_{\ty+1}(i)$, 
\item $\depth(\tty_\ty(j))>k=\depth(\tty_{\ty+1}(j))$, 
\item if $j<i\leq m$ then $\depth(\tty_{\ty}(i))\geq\depth(\tty_\ty(i-1))=\depth(\tty_{\ty+1}(i))$,
\item $\depth(\tty_\ty(m+1))=\ell+1>\depth(\tty_{\ty+1}(m+1))$
\end{itemize}
which, again, implies  $\tpl(\cQ^{\ty+1})<\tpl(\cQ^\ty).$ \qed
\end{enumerate}
\section{Proof of Theorem \ref{tyury}}
By Lemma \ref{cor4}, $\tpl(\cQ^{\ty})>\tpl(\cQ^{\ty+1})$ and $\depth(\tty_{\ty+1}(j))\leq j$ whenever $\tty_{\ty+1}(j)$ is the state introduced in $\cQ^{\ty+1}$ by a failed closedness check. It is also easy to see that $\tpl(\cQ^\ty)\geq\tpl(\cQ^{\ty+1})$ always holds. 
Since $\tpl(\cQ^0)=(\ell+1,\ldots,\ell+1)$ and the minimum possible value of $\tpl(\cQ^\ty)$ is $(1,\ldots,1)$,  the maximum number of failed closedness checks in any sequence 
$$\tpl(\cQ^0)\geq \tpl(\cQ^1)\geq \ldots \geq\tpl(\cQ^\ty)$$ is at most $\ty\leq \sum_{j=1}^n j=n (n+1)/2.$

\section{Proof of Corollary \ref{cor3}}
If $s$ is introduced in $S^{\ty+1}$ by a failed closedness check then $s\nsim t$ for all $t\in S^{\ty}$ with $\depth(t)\leq\depth(s).$ Therefore, $s\in\cQ^{\ty+1}\setminus \cQ^{\ty}.$ Furthermore, from the proof of the previous lemma we know there exists a sequence $$p_0<p_1<\ldots<p_{\depth(s)-1}$$ of positions from $Pos(s)$ with $\depth(\tr_{\ty+1}(s|_{p_j}))=j$ for all $0\leq j<\depth(s).$ Since $\tr_{\ty+1}(s|_{p_j})\in\cQ^{\ty+1}$ for all $0\leq j<\depth(s)$ and $\tr_{\ty+1}(s)=s$, we have
$$\underbrace{\tr_{\ty+1}(s|_{p_{\depth(s)-1}})\prec_{\mathtt{T}}\ldots\prec_{\mathtt{T}}\tr_{\ty+1}(s|_{p_1})\prec_{\mathtt{T}}\tr_{\ty+1}(s|_{p_0})}_{\depth(s)\ \text{elements}}=s$$
 we conclude that, if $s=\tty_\ty(j)$, then $\depth(s)\leq j$. 
\section{Proof of Theorem \ref{conscheck}}
First, we prove by induction on $i$, $1\leq i\leq k$, that $\{C_1,\ldots,C_i\}$ induces a partition of $\cQ^\ty$. 
In the base case, $i=1$, $\{C_1,\ldots,C_i\}=\{C_1\}=\{\hole\}$, and the statement of the lemma is obviously true.
In the induction step, we assume that $\{C_1,\ldots,C_i\}$ induces a partition $Q_1,$ \ldots, $Q_m$ of $\cQ^\ty$.  Let $M_2:=\{Q_i\mid |Q_i| >1\}$, and $M:=\bigcup_{Q_i\in M_2}Q_i$. As all pairs of elements in $M$ are $\ell$-distinguished by some element of $C_{i+1},\ldots,C_k$ and $\depth_\hole(C_{i+1})\leq\depth_\hole(C_j)$ for all $i<j\leq k$, the depth of any term contained in $M$ is at most $\ell-\depth_\hole(C_{i+1}).$ Thus $T(C_{i+1}[t])\in\{0,1\}$ for all $t\in M$, and therefore $\{C_1,\ldots,C_i,C_{i+1}\}$ induces a partition of $\cQ^\ty$.

Let $C_{i_1},\ldots,C_{i_k}$ be the order in which the contexts were added to $E$ by failed consistency checks. Because every $C_{i_{p+1}}$ $\ell$-distinguishes some elements of $\cQ^\ty$ that were not $\ell$-distinguished by any of $C_{i_j}$ with $1\leq j\leq p$, we conclude that $\cE^\ty$ induces a partition of $\cQ^\ty$ into at least $k$ classes. 

\section{Proof of Corollary  \ref{cor5}}
 Let $k$ be the number of elements of $\cE^\ty$, and $m$ be the number of classes in the partition of $\cQ^\ty$ induced by $\cE^\ty$. By Lemma \ref{conscheck}, $k\leq m$. Since $m\leq n$,  we conclude that $k\leq n$.
\section{Proof of Lemma \ref{tmy4}}
The proof is by induction on the execution step $\ty$ of the algorithm.

In the base case, assume $\cQ^0$ has $i=2$ elements. Then $E^0=\{\hole\}$ and $\depth_\hole(\hole)=0=i-2.$
In the induction case, we assume that the result holds at some step $\ty$ in the execution of the algorithm, and we prove that the result holds at the next step $\ty+1.$ 

If step $\ty$ is a failed closedness check or a failed structural equivalence query, then $E^{\ty+1}=E^{\ty}$, and $\cQ^{\ty+1}$ has at least the same number of elements as $\cQ^{\ty}.$ Therefore, the result will hold at step $\ty+1.$

Otherwise, the execution step $\ty$ is a failed consistency check. Let $s_1,s_2\in S^\ty,$ $C\in E^\ty$, and $C_1\in\sigma_\hole\pair{S^\ty}$ be the values for which this failed consistency check is performed. Then $E^{t+1}=E^{\ty}\cup\{C[C_1]\}$. We distinguish two cases:
\begin{enumerate}
\item $s_1$ and $s_2$ are  $\ell$-distinguished by some $C'\in E^{\ty}$, but  $\depth_\hole(C')>\depth_\hole(C[C_1]).$ Then  $\depth_\hole(C'')\leq \max\{\depth_\hole(C')\mid C'\in E^{\ty}\}$ for all $C''\in E^{\ty+1}$. Since $\max\{\depth_\hole(C')\mid C'\in E^{\ty}\}\leq i-2$  by induction hypothesis, and $i=|\cQ^{\ty}|\leq |\cQ^{\ty+1}|$, we learn that  $\depth_\hole(C'')\leq |\cQ^{\ty+1}|-2$ for all $C''\in E^{\ty+1}$.
\item $s_1$ and $s_2$ are not $\ell$-distinguished by any element of $E^{\ty}$. If $\depth_\hole(C[C_1])\leq\max \{\depth_\hole(C')\mid C'\in E^{\ty}\}$, the result will hold at step $\ty+1$. Otherwise, by induction hypothesis $\depth_\hole(C)\leq i-2$ and thus $\depth_\hole(C[C_1])\leq i-1.$ Let $R:=\cQ^{\ty}\cup\{s_1,s_2\}.$ Since $s_1\sim_\ell s_2$ at step $\ty$, at least one of $s_1$ and $s_2$ is not contained in $\cQ^{\ty}$, thus $R$ will have at least $|\cQ^{\ty}|+1=i+1$ elements. As $C[C_1]$ $\ell$-distinguishes  $s_1$ and $s_2$ and $\depth_\hole(C[C_1])\leq\max \{\depth_\hole(C')\mid C'\in E^{\ty}\}$, $\depth_\hole(C'[s_1])\leq \ell$ and $\depth_\hole(C'[s_2])\leq \ell$ for every $C'\in E^{\ty}.$ Thus, both $E^{\ty}$ and $E^{\ty+1}$ will induce a partition of $R$. As $s_1\nsim_\ell s_2$ at step $\ty$, but $s_1$ and $s_2$ are $\ell$-distinguished by $C[C_1]$ at step $\ty+1$, $E^{\ty+1}$ will partition $R$ into at least $|\cQ^{\ty}|+1$ classes. Thus, $|\cQ^{\ty+1}|\geq i+1$. Hence $\depth_\hole(C'')\leq i-1=(i+1)-2\leq|\cQ^{\ty+1}|-2$ for all $C''\in E^{\ty+1}.$ 
 \end{enumerate}
\section{Proof of Lemma \ref{lups}}
Suppose $\ty'$ is the first execution step when $\xpm_{\ty'}(j)\neq n-1.$ This means that $\ty'$ is the first execution step from where on  we distinguish at least $j+1$ representatives in the observation table. Therefore, at the previous step $\ty'-1$, $|\cQ^{\ty'-1}|\leq j$ and so, by Lemma \ref{tmy4}, $\depth_\hole(C)\leq j-2$ for all $C\in E^{\ty'-1}.$ Thus, $\depth_\hole(C')\leq j-1$ for all $C'\in E^{\ty'}$, and in particular $\xpm_{\ty'}(j)\leq j-1.$ Since it it obvious that $\xpm_\ty(j)\leq\xpm_{\ty'}(j)$ whenever $\ty \geq\ty'$, we conclude that $\xpm_\ty(j)\leq j-1$ whenever $2\leq j\leq n$ and $\xpm_\ty(j)\neq n-1.$
\section{Proof of Theorem \ref{inconsistentthm}}
It is easy to see that  $\tpl(\cQ^\ty)\geq \tpl(\cQ^{\ty+1})$ holds for every execution step $\ty$. Moreover, if the $\ty$-th execution step is a failed consistency check then a  context $C$ is newly added to $\cE^\ty$ in order to produce $\cE^{\ty+1}$. The context $C$ will $\ell$-distinguish two elements $s_1,s_2\in S^\ty$ that were not $\ell$-distinguished before or had been $\ell$-distinguished by some $C'\in E^\ty$ with $\depth_\hole(C')>\depth_\hole(C).$ Since $\depth(\tr_{\ty+1}(s_1))\leq\depth(s_1)$ and $\depth(\tr_{\ty+1}(s_2))\leq\depth(s_2)$, $C$ will $\ell$-distinguish two elements of $S^\ty$ that were not $\ell$-distinguished before or were $\ell$-distinguished by a context with bigger hole-depth. Therefore $\tpl(\cQ^{\ty+1})<\tpl(\cQ^\ty)$ if $\ty$ is a failed  consistency check.

Note that $\tpl(\cQ^0)=(0,n-1,\ldots,n-1)$ and the minimum possible value of $\tpl(\cA^\ty)$ is $(0,1,\ldots,1)$.  Also, by Lemma \ref{lups}, $\xpm_\ty(j)\leq j-1$ whenever $\xpm_\ty(j)\neq n-1$ for $2\leq j\leq n$. Therefore, any run of the algorithm performs at most $\sum_{j=2}^n(j-1)={n(n-1)}/{2}$
failed consistency checks. 
\section{Proof of Theorem \ref{failstruct}}
Algorithm $LA^\ell$ performs a failed structural equivalence query when the observation table $\mathbb{T}(\ty)$ is closed, consistent, and has less than $n$ states (by Corollary \ref{cor2} of Theorem \ref{thm1}). Suppose the algorithm performed a 
 failed structural equivalence query for $\cA(\mathbb{T}(\ty))$ which rendered the counterexample $t$. After extending the $S$-component of observation table $\mathbb{T}(\ty)$ with all subterms of $t$ that were not yet there, the algorithm constructs a new observation table $\mathbb{T}(\ty')$ which is closed and consistent. Since $t\in S^{\ty'}$, $\hole\in E^{\ty'}$, and $T(\hole[t])$ in the table $\mathbb{T}(\ty)$ differs from $T(\hole[t])$ in the table $\mathbb{T}(t')$, the automata $\cA(\mathbb{T}(t))$ and $\cA(\mathbb{T}(t'))$ are not equivalent with respect to $\ell$ (that is, $\cL(\cA(\mathbb{T}(\ty)))_{[\ell]}\neq\cL(\cA(\mathbb{T}(\ty')))_{[\ell]}$). Therefore, by Theorem \ref{oops}, the automaton $\cA(\mathbb{T}(\ty'))$ must have at least one more state than 
$\cA(\mathbb{T}(\ty'))$. Since the number of states is increased by every failed structural equivalence query and can not exceed $n-1$, the number of failed structural equivalence queries performed by algorithm $LA^\ell$ is at most $n$. 
\end{document}